\documentclass[sigconf,nonacm]{acmart}

\usepackage[nolist, printonlyused]{acronym}
\usepackage[lined,noend,ruled,commentsnumbered]{algorithm2e}
\usepackage{amsmath,mathtools}
\usepackage{booktabs}
\usepackage{enumitem}
\usepackage{graphicx}
\usepackage{multirow}
\usepackage{siunitx}
\usepackage{subcaption}
\usepackage[normalem]{ulem} %
\usepackage{hyperref}
\usepackage{cleveref} %
\usepackage{tikz}
\usepackage{colortbl}
\usepackage{array}
\usepackage{listings}

\newcolumntype{L}[1]{>{\raggedright\let\newline\\\arraybackslash\hspace{0pt}}m{#1}}
\newcolumntype{R}[1]{>{\raggedleft\let\newline\\\arraybackslash\hspace{0pt}}m{#1}}
\usetikzlibrary{arrows.meta}
\usetikzlibrary{patterns.meta}
\usetikzlibrary{shapes.geometric}
\usetikzlibrary{backgrounds, calc, hobby, positioning}
\tikzset{vertex style/.style={
    draw=black,
    thick,
    fill=white,
    text=black,
    ellipse,
    minimum width=1.2cm,
    minimum height=0.75cm,
    font=\footnotesize,
    outer sep=3pt,
  },
}
\tikzset{
  text style/.style={
    sloped, 
    text=black,
    font=\scriptsize,
    above
  }
}
\tikzstyle{bag} = [align=center]

\newcommand{\infSetPropValues}{\mathbf{C}}

\newcommand{\infSetLabels}{\mathbf{L}}
\newcommand{\infSetObjectIDs}{\mathbf{O}}
\newcommand{\infSetPropKeys}{\mathbf{P}}

\newcommand{\setNodeIDs}{N}
\newcommand{\subsetNodeIDs}{\mathsf{N}}
\newcommand{\setDirectedEdgeIDs}{E}

\newcommand{\setOfDeps}{\Sigma}
\newcommand{\setOfTransfs}{\mathcal{T}}
\newcommand{\setOfRedTransfs}{\setOfTransfs_\mathsf{red}}

\newcommand{\funcAverage}{\mathsf{avg}}
\newcommand{\funcAvg}{\funcAverage}
\newcommand{\funcDirectedEdgeSource}{\mathsf{src}}
\newcommand{\funcDirectedEdgeTarget}{\mathsf{tgt}}
\newcommand{\funcDomain}{\mathsf{dom}}
\newcommand{\funcLabelling}{\mathsf{lab}}
\newcommand{\funcMax}{\mathsf{max}}

\newcommand{\funcProperty}{\mathsf{prop}}

\newcommand{\varDescriptor}{D}
\newcommand{\varPropertyGraph}{G}
\newcommand{\varGraph}{\varPropertyGraph}
\newcommand{\varQueryPattern}{Q}
\newcommand{\varRelation}{R}

\newcommand{\varEdge}{e}

\newcommand{\varNode}{n}

\newcommand{\varPropKey}{p}

\newcommand{\varDep}{\varphi}

\newcommand{\varInfomorphism}{\psi}

\newcommand{\determ}{\Rightarrow}

\newcommand{\metric}{m}

\newcommand{\sat}{\models}
\newcommand{\tup}{\mu}

\newcommand{\relnames}{\mathcal{R}}
\newcommand{\attrnames}{\mathcal{A}}
\newcommand{\allVariables}{\mathcal{V}}
\newcommand{\pedgeRight}[1]{\xrightarrow{#1}} %
\newcommand{\pedgeRightLeft}[1]{\xleftrightarrow{#1}} %
\newcommand{\pobject}[3]{#1{:}#2{:}#3} %
\newcommand{\attr}{\mathsf{attr}}
\newcommand{\sem}[2]{\llbracket #1 \rrbracket^{#2}}

\newcommand{\deps}{\mathsf{deps}}

\newcommand{\newnode}[1]{\langle #1 \rangle}
\newcommand{\newlbl}[1]{\underline{#1}}

\newcommand{\moregeneralthan}[2]{#2\sqsubseteq #1} %

\newif\ifarxiv
\arxivtrue

\newcommand{\edbtarxiv}[2]{%
#2%
}

\edbtarxiv{\setlength{\textfloatsep}{0.3em}}{
\setlength{\textfloatsep}{0.3em}
}

\newif\ifshowchanges
\showchangesfalse %

\newcommand{\changed}[1]{%
#1%
}

\title{Graph-Native Normalization}
\author{Johannes Schrott}
\orcid{0000-0003-2689-0876}
\affiliation{%
  \institution{TU Wien}
  \city{Vienna}
  \country{Austria}
}
\email{johannes.schrott@tuwien.ac.at}

\author{Maxime Jakubowski}
\orcid{0000-0002-7420-1337}
\affiliation{%
  \institution{TU Wien}
  \city{Vienna}
  \country{Austria}
}
\email{maxime.jakubowski@tuwien.ac.at}

\author{Katja Hose}
\orcid{0000-0001-7025-8099}
\affiliation{%
  \institution{TU Wien}
  \city{Vienna}
  \country{Austria}
}
\email{katja.hose@tuwien.ac.at}
\date{February 2026}

\begin{document}
\begin{abstract}
    In recent years, \acp*{kg} -- in particular in the form of \acp*{lpg} -- have become essential components in a broad range of applications. Although the absence of strict schemas for KGs facilitates structural issues that lead to redundancies and subsequently to inconsistencies and anomalies, the problem of KG quality has so far received only little attention. 
Inspired by normalization using functional dependencies for relational data, a first approach exploiting dependencies within nodes has been proposed. 
However, real-world KGs also expose functional dependencies involving edges. 
In this paper, we therefore propose graph-native normalization, which considers dependencies within nodes, edges, and their combination. We define a range of graph-native normal forms and graph object functional dependencies and propose algorithms for transforming graphs accordingly. 
We evaluate our contributions using a broad range of synthetic and native graph datasets.
\end{abstract}

\keywords{Knowledge Graphs, Labeled Property Graphs, Knowledge Graph Quality, Normalization, Redundancy}

\maketitle
\sloppy

\section{Introduction}
\label{sec:introduction}
\Acfp{lpg} have emerged as a prominent form of \acp{kg}, 
seeing extensive use in a %
variety of contexts,
ranging from data integration~\cite{leser_informationsintegration_2007} to fraud detection~\cite{lin_fraudgt_2024}, 
\changed{and are a central abstraction in modern graph processing systems~\cite{SakrBVIAAAABBDV21}.} 
Especially with their recent popularity as a source of reliable knowledge in \ac{ml} and \ac{ai}, e.g.,
to mitigate hallucinations 
\changed{in \acp{llm}~\cite{LavrinovicsBBH25},
ensuring the quality of \acp{kg} becomes more and more essential~\cite{hogan_knowledge_2022}.}
As a first step towards achieving this goal,
new standards and schema-level approaches have been proposed,
e.g., the \acf{gql}~\cite{iso_gql_2024,francis_researchers_2023}, 
\textsc{PG-Schema}~\cite{angles_pg-schema_2023},
\changed{\textsc{PG-Keys}~\cite{angles_pg-keys_2021}, and transformations or alignments between them~\cite{RabbaniLBH24,AhmetajBHHJGMMM25}.} 
However, being able to define schemas is only the first step.
We still need to fully understand what a good schema is and how we can improve the schema of a \ac{kg}.
In relational databases, normalization~\cite{codd_relational_1970,codd_further_1971,batini_data_2016} is a well-known technique to transform a database schema into a schema that is commonly perceived as better with respect to a number of desirable properties. 
Such transformations are traditionally guided by exploiting \acp{fd} and are described by normal forms, 
which generally increase data quality by reducing redundancy; 
the latter not only unnecessarily inflates storage space but also serves as a source of error since all redundant information needs to be kept in sync to ensure consistency. 

Originally introduced for the relational data model in the 1970s~\cite{codd_relational_1970,codd_further_1971},
normalization has recently been adopted to the \ac{lpg} setting by \citeauthor{skavantzos_third_2025}~\cite{skavantzos_third_2025,skavantzos_normalizing_2023}. 
In these works, 
nodes are considered similar to relations that are being normalized. 
However, 
graphs are subject to additional types of redundancies.
Consider,
for example, 
the graph depicted in \Cref{fig:lecture-uses-book-denorm}, 
which captures information about courses and lecturers. 
In this graph,
we see that it might be worthwhile to enforce that \emph{each course is taught using exactly one book}. 
However, 
this is difficult to achieve in the current graph because the information is modeled as a property of edges with label \textsf{teaches} that connect teachers to courses. 
This way of modeling the information might have historical reasons 
but as the graph keeps on growing and evolving, 
 it might lead to errors. 
In \Cref{fig:lecture-uses-book-denorm},
redundancy is indicated by \dotuline{dotted lines}.
Hence, 
to reduce the degree of redundancy and therefore the potential for inconsistencies, 
the graph should be normalized so that the information about the book
\changed{is} stored together with the course instead of the \textsf{teaches} edges (see \Cref{fig:lecture-uses-book-norm}). 
In this paper,
we build upon the work proposed by \citet{skavantzos_third_2025},
which focuses on dependencies within nodes, 
and extend it to support also dependencies within edges, and dependencies between nodes and edges.
Since our approach focuses on both, nodes and edges, we call it \emph{graph-native}.
In summary, the contributions of this paper are as follows: 

\begin{enumerate}[label=(\roman*)]
\item We propose a new \changed{dependency} type, \changed{\acfp{gnfd}}, that expresses dependencies within and between nodes and edges.
\item Building upon the state of the art, we provide graph-native generalizations of \changed{existing normal forms, which we call \acfp{gnnf}.}
\item We \changed{identify redundancy patterns  for \acp{lpg} and transformations to remove them. Based thereon, we} introduce graph-native \ac{lpg} normalization algorithms that transforms a given graph into our \changed{\acp{gnnf}}. 
\item Our experimental evaluation shows how graph-native normalization guided by \acp{gnfd} \changed{reduces redundancy and affects query performance.}

\end{enumerate}
This paper is structured as follows: 
\Cref{sec:preliminaries} introduces basic concepts and a running example; \Cref{sec:related-work} reviews related work; 
\Cref{sec:dependency} defines \acp{gnfd} for \acp{lpg}; 
\Cref{sec:native-property-graph-normal-forms} analyzes redundancies and defines \changed{\acp{gnnf}}; 
\Cref{sec:normalization} presents our normalization algorithms; \Cref{sec:experiments} reports evaluation results; and \Cref{sec:conclusion} concludes with a summary and future work.

\begin{figure*}[htbp]
    \begin{subfigure}{0.39\textwidth}
        \centering
        \begin{tikzpicture}[node distance=1.5cm]

\node[vertex style=Node,align=center] (1) {\textbf{:Course}\\\texttt{n1}};
\node[text style,left of=1,xshift=-0.6em,yshift=0.6em] (1_props) {\begin{tabular}[t]{@{~}r@{~}l@{~}}
    \multirow{2}{*}{title:} & ``Database  \\
                            & \hphantom{``}Systems''     \vspace{0.2em}   \\ 
                 language:  & ``English''
\end{tabular}};

\node[vertex style=Node, below of=1,align=center,yshift=-3em] (3) {
\textbf{:Lecturer}\\\texttt{n3}}
edge [->,align=left] node[text style,rotate=-90,right]{\\\\\textbf{:teaches}\\\texttt{e2}\\at: 2026-02-16\\\dotuline{usingBook: ``Alice''}} (1);
 
\node[text style,below of=3,yshift=2.7em] (3_props) {\begin{tabular}[t]{@{~}r@{~}l@{~}} 
                 name:  & ``Maxime''
\end{tabular}};

\node[vertex style=Node, left of=3, xshift=-4.5em,align=center] (2) {
\textbf{:Lecturer}\\\texttt{n2}}
edge [->,align=left] node[text style,rotate=0,above]{\textbf{:teaches}\\\texttt{e1}\\at: 2026-02-09\\\dotuline{usingBook: ``Alice''}} (1);
\node[text style,below of=2,yshift=2.7em] (2_props) {\begin{tabular}[t]{@{~}r@{~}l@{~}} 
                 name:  & ``Johannes''
\end{tabular}};

\node[vertex style=Node, right of=3,xshift=4.5em,align=center] (4) {
\textbf{:Lecturer}\\\texttt{n4}}
edge [->,align=left] node[text style]{\textbf{:teaches}\\\texttt{e3}\\at: 2026-02-23\\\dotuline{usingBook: ``Alice''}} (1);
\node[text style,below of=4,yshift=2.7em] (4_props) {\begin{tabular}[t]{@{~}r@{~}l@{~}} 
                 name:  & ``Katja''
\end{tabular}};

\end{tikzpicture}
        \vspace*{-1em}
        \caption{Courses taught by lecturers using books}
        \label{fig:lecture-uses-book-denorm}
    \end{subfigure}
    \begin{subfigure}{0.57\textwidth}
        \centering
        \begin{tikzpicture}[node distance=1.5cm]

\node[vertex style=Node,align=center] (5) {\textbf{:Course}\\\textbf{:Annual}\\\texttt{n5}};
\node[text style,right of=5,xshift=1.75em] (5_props) {\begin{tabular}[t]{@{~}r@{~}l@{~}}
    \multirow{2}{*}{title:} & ``Management of\\
                            & \phantom{``}Graph~Data''     \vspace{0.2em}   \\ 
                 language:  & ``English''   \vspace{0.2em}\\
                 year: & \hphantom{``}2024  \vspace{0.2em}\\
                 semester: & ``Winter''
\end{tabular}};

\node[vertex style=Node,align=center,right of=5,xshift=9em,yshift=-1em] (6) {\textbf{:Course}\\\textbf{:Annual}\\\texttt{n6}};
\node[text style,right of=6,xshift=1.7em] (6_props) {\begin{tabular}[t]{@{~}r@{~}l@{~}}
    \multirow{2}{*}{title:} & ``Management of\\
                            & \phantom{``}Graph~Data''     \vspace{0.2em}   \\ 
                 language:  & ``English''   \vspace{0.2em}\\
                 year: & \hphantom{``}2026  \vspace{0.2em}\\
                 semester: & ``Summer''\\\\
                 ~
\end{tabular}};

\node[vertex style=Node, below of=6,align=center,yshift=-1.2em,xshift=-9.5em] (7) {
\textbf{:Student}\\\texttt{n7}}
edge [->,align=left] node[text style,rotate=0,above,yshift=-1em]{\textbf{:takes}\\\texttt{e6}} (6);
\node[text style,below of=7,yshift=2.7em] (7_props) {\begin{tabular}[t]{@{~}r@{~}l@{~}} 
                 name:  & ``Laura''
\end{tabular}};

\node[vertex style=Node, below of=6,align=center,yshift=-1.2em] (8) {
\textbf{:Student}\\\texttt{n8}}
edge [->,align=left] node[text style,rotate=0,above,yshift=-0.9em]{\textbf{:takes}\\\texttt{e7}} (6);
\node[text style,below of=8,yshift=2.7em] (8_props) {\begin{tabular}[t]{@{~}r@{~}l@{~}} 
                 name:  & ``Tom''
\end{tabular}};

\node[vertex style=Node, below of=6,align=center,yshift=-1.2em,xshift=9.5em] (9) {
\textbf{:Student}\\\texttt{n9}}
edge [->,align=left] node[text style,rotate=0,above,yshift=-1em]{\textbf{:takes}\\\texttt{e8}} (6);
\node[text style,below of=9,yshift=2.7em] (9_props) {\begin{tabular}[t]{@{~}r@{~}l@{~}} 
                 name:  & ``Ann''
\end{tabular}};

\draw [->, align=center] (7) to   node[text style,rotate=0,below,align=left] {\textbf{:inGroupWith}\\\texttt{e4}\\groupNo: \phantom{``}1\\\dotuline{name: ``Heroes''}} (8);
\draw [->, align=center] (9) to   node[text style,rotate=0,below,align=left] {\textbf{:inGroupWith}\\\texttt{e5}\\groupNo: \phantom{``}1\\\dotuline{name: ``Heroes''}} (8);

\end{tikzpicture}
        \vspace*{-0.5em}
        \caption{Courses taken by students that work together in groups}
        \label{fig:student-in-group}
    \end{subfigure}
    \vspace*{-1em}
    \caption{Examples of denormalized \acp{lpg} describing courses, lecturers, and students. Redundancies are \dotuline{underlined with dots}.}
    \label{fig:example}
    \vspace*{-0.7em}
\end{figure*}

\begin{figure}[htbp]
    \begin{tikzpicture}[node distance=1.5cm]

\node[vertex style=Node,align=center] (1) {\textbf{:Course}\\\texttt{n1}};
\node[text style,left of=1,xshift=-0.6em,yshift=0.6em] (1_props) {\begin{tabular}[t]{@{}r@{}l@{}}
    \multirow{2}{*}{title:~} & ``Database  \\
                            & \hphantom{``}Systems''     \vspace{0.2em}   \\ 
                 {language:~}  & ``English'' \\
                 \dotuline{usingBook:~} & \dotuline{``Alice''}
\end{tabular}};

\node[vertex style=Node, below of=1,align=center,yshift=-2.2em] (3) {
\textbf{:Lecturer}\\\texttt{n3}}
edge [->,align=left] node[text style,rotate=-90,right]{\\\\\textbf{:teaches}\\\texttt{e2}\\at: 2026-02-16} (1);
 
\node[text style,below of=3,yshift=2.7em] (3_props) {\begin{tabular}[t]{@{~}r@{~}l@{~}} 
                 name:  & ``Maxime''
\end{tabular}};

\node[vertex style=Node, left of=3, xshift=-4.5em,align=center] (2) {
\textbf{:Lecturer}\\\texttt{n2}}
edge [->,align=left] node[text style,rotate=0,above]{\textbf{:teaches}\\\texttt{e1}\\at: 2026-02-09} (1);
\node[text style,below of=2,yshift=2.7em] (2_props) {\begin{tabular}[t]{@{~}r@{~}l@{~}} 
                 name:  & ``Johannes''
\end{tabular}};

\node[vertex style=Node, right of=3,xshift=4.5em,align=center] (4) {
\textbf{:Lecturer}\\\texttt{n4}}
edge [->,align=left] node[text style]{\textbf{:teaches}\\\texttt{e3}\\at: 2026-02-23} (1);

\node[text style,below of=4,yshift=2.7em] (4_props) {\begin{tabular}[t]{@{~}r@{~}l@{~}} 
                 name:  & ``Katja''
\end{tabular}};

\end{tikzpicture}
        \vspace*{-1em}
    \caption{Normalized version of the \ac{lpg} from \Cref{fig:lecture-uses-book-denorm}}
    \label{fig:lecture-uses-book-norm}
\end{figure}

\section{Preliminaries and Running Example}
\label{sec:preliminaries}
Since this paper relies on notions from the relational data model, 
including \acfp{fd}, 
and applies them to \acfp{lpg}\changed{,
we} provide preliminary notions for these based on the literature~\cite{angles_pg-schema_2023,francis_researchers_2023,bonifati_querying_2018,alice}.

\paragraph{Labeled Property Graphs}  
We assume pairwise disjoint countable sets
of labels $\infSetLabels$, property keys $\infSetPropKeys$, constants
$\infSetPropValues$, and (graph) object identifiers $\infSetObjectIDs$.
\changed{We often} refer to object identifiers as \emph{(graph) objects}: they represent either \emph{nodes} or \emph{edges}.
A \acl{lpg} $\varPropertyGraph$ is a tuple
$(\setNodeIDs, \setDirectedEdgeIDs, \funcLabelling,
\funcDirectedEdgeSource, \funcDirectedEdgeTarget, \funcProperty)$ with:

\begin{itemize}
\item mutually disjoint finite
  subsets  $\setNodeIDs, \setDirectedEdgeIDs$  of $\infSetObjectIDs$ that represent the nodes and edges respectively; 
\item a total function $\funcLabelling: \setNodeIDs\cup\setDirectedEdgeIDs\to 2^\infSetLabels$ that maps objects to (possibly empty) sets of labels;
\item total functions $\funcDirectedEdgeSource: \setDirectedEdgeIDs\to\setNodeIDs$ and
  $\funcDirectedEdgeTarget: \setDirectedEdgeIDs\to\setNodeIDs$  that map each edge to nodes representing the edge's source and target
  respectively; and
\item a partial function
  $\funcProperty:
  (\setNodeIDs\cup\setDirectedEdgeIDs)\times\infSetPropKeys\to\infSetPropValues$
   that assigns constants to the property keys in objects.
\end{itemize}
This definition of \acp{lpg} is compatible with \changed{\ac{gql}~\cite{iso_gql_2024} and common} graph database systems, such as 
Neo4j~\cite{neo4j-datamodel} or Memgraph~\cite{memgraph-datamodel}. 
\changed{The data model of these two systems is a subset of our definition, where, contrary to $\ac{gql}$,} edges can only have exactly one label, which is called type.

\paragraph{Relation Schemas, Tuples, and Relations} 
We assume mutually disjoint countable sets of \emph{relation names} $\relnames$ and \emph{attributes}
$\attrnames$, disjoint with any of the previous sets. 
When we assume a finite set of attributes $A\subset\attrnames$ 
with a relation name $\varRelation\in\relnames$,
denoted by $\attr(\varRelation) = A$,
we call $\varRelation$ a relation schema with attributes $A$.
A \emph{tuple} \changed{$\tup$} for a relation schema $\varRelation$ is a function that assigns constants from $\infSetPropValues$ to $\attr(\varRelation)$. 
\changed{Formally, it is a total function
$\tup: \attr(\varRelation)\mapsto \infSetPropValues$. 
An \emph{instance} $I$ of a relation schema $\varRelation$
is a set of tuples for~$\varRelation$. 
Sometimes, $I$ is simply called a \emph{relation}. 
We say that two tuples $\tup_1,\tup_2$ \emph{agree} on attributes $A$ if for every $a\in A$, $\tup_1(a)=\tup_2(a)$. 
}

\paragraph{Functional Dependencies and Keys} 
Essential to all techniques related to normalization are so-called dependencies. 
A dependency is a type of constraint that holds on multiple data records -- it can cover multiple data attributes \cite{andreasen_data_2021}.
The dependency type relevant for normalization in the relational data model are \textit{\acfp{fd}}.
\Acp{fd} express ``which attributes are functional dependent on others''~\cite{codd_further_1971}.
In other words, 
this means that the values of a certain set of attributes $X$ define the values of another set of attributes $Y$.
Formally, an \ac{fd} over
a set of attributes $A$ is an expression $X\determ Y$,
with $X,Y\subseteq A$. 
An \ac{fd} $X\determ Y$ is \emph{applicable} to a relation schema $R$
if $X\cup Y\subseteq\attr(\varRelation)$.
We say that an \ac{fd}
$X\determ Y$ is satisfied by a relation instance $I$ for $R$,
denoted by $I\sat X\determ Y$, 
if for every two tuples $\tup_1,\tup_2\in I$,
whenever they agree on $X$,
they agree on $Y$.
Given a set of \acp{fd} $\Sigma$ over $A$, 
and another dependency $X\determ Y$ we say that $X\determ Y$ holds in $\Sigma$,
denoted by $\Sigma\models X\determ Y$,
if for every instance $I$ of a relation schema $R$ with $\attr(R) = A$ that satisfies all dependencies in $\Sigma$,
we have $I\models X\determ Y$.
We assume for every relation schema $R$ a (possibly empty) set of \acp{fd} $\deps(R)$.
A \emph{superkey} of a relation schema $R$ is a set $X\subseteq \attr(R)$ such that $\deps(R)\models X\determ \attr(R)$. 
A \emph{key} for $R$ is a (subset-)minimal superkey.

\paragraph{Graph Patterns}
In the \ac{lpg} literature, 
the notion of graph patterns is often used to express queries. 
We adopt this notion for our purposes and consider a semantics that corresponds to the \emph{\acf{gpc}}~\cite{francis_gpc_2023}.
We do not need the full power of \ac{gpc}, specifically, we do not consider repetitions. 
Throughout this paper, \changed{we use} two notations for patterns\changed{:
(i)~a graphical notation and (ii)~a concrete syntax.}

\begin{figure}[t]
    \centering
    \begin{tikzpicture}[node distance=1em]
        \node[vertex style=Node,align=center,minimum height=0.4cm,minimum width=0.6cm] (1) {$x_1{:}L$};
        \node[text style,below of=1,align=center,yshift=-0.4em] (2_props1) {$k_1$};
        \node[text style,below of=2_props1,align=center] (2_props2) {$k_2$};
        
        \node[vertex style=Node,align=center,minimum height=0.4cm,minimum width=0.6cm,right of=1,xshift=7em] (2) {$x_2$} edge [<-,align=left] node[text style]{$y{:}L'$} (1);
        \node[text style,right of=1,align=center,yshift=-0.7em,xshift=3em] (2_props1) {$k_3$};
    \end{tikzpicture}   
    \vspace*{-1em}
\caption{\changed{Example of the graphical graph pattern notation} }
    \label{fig:graphical-notation}
\end{figure}

\changed{
The graphical notation consists of nodes and edges.
Nodes are circles, edges are arrows.
Both nodes and edges have a variable (typically $x$ or $y$), an associated set of labels (typically denoted by $L$), and a set of property keys (typically $P=\{k_1, k_2, \dots\}$). 
Edges are directed. 
Throughout the paper, we explicitly state when edges in our notation represent both directions.} 
\changed{\Cref{fig:graphical-notation} illustrates the graphical notation with an example. Here, $x_1$ matches nodes that have (at least) labels $L$ and property keys $k_1$ and $k_2$.
Node $x_1$ has an outgoing edge $y$ with (at least) labels $L'$ and property key $k_3$ to another node $x_2$ without any requirements on its labels.}
This graphical notation corresponds to the following \ac{gpc} expression:%
\footnote{We use the filter condition \changed{$x_1.k_1=x_1.k_1$} to ensure property $k_1$ is present.}
\begin{minipage}{\columnwidth}
{\small
\[
(x_1{:}L)_{x_1.k_1=x_1.k_1\land x_1.k_2=x_1.k_2}\pedgeRight{y:L'}_{y.k_3=y.k_3}(x_2)
\]
}
\end{minipage}
Additionally, we introduce a concrete syntax for a subset of \ac{gpc},
which we call \emph{basic graph patterns} 
\changed{and often simply refer to as \emph{graph patterns}.}
We assume a countable set of \emph{object identity variables} $\allVariables$, disjoint from
$\infSetLabels\cup\infSetPropKeys\cup\infSetPropValues$.
Let
$x\in\allVariables$. 
A \emph{property variable} is then an
expression of the form $x.k$ with $k\in\infSetPropKeys$.
A \emph{variable} is either a property- or an object identity variable. 
Based thereon, we define the syntax \changed{for basic graph patterns} as follows:\\
\begin{minipage}{\columnwidth}
{\small
\[
\varQueryPattern ::={} (\pobject xLP) \mid ()\pedgeRightLeft{\pobject yLP}() \mid (\pobject xLP)\pedgeRightLeft{\pobject y{L'}{P'}}()
\]
}
\end{minipage}\vspace{0.3em}
where $\pedgeRightLeft{}$ is either $\xrightarrow{}$ or $\xleftarrow{}$,  
$x,y\in\allVariables$ \changed{are} object identity variables,
$L,L'\subseteq \infSetLabels$ \changed{is} a set of labels,
and $P,P'\in\infSetPropKeys$ \changed{is} a set of keys.
Given a graph pattern $\varQueryPattern$, 
we define its \emph{output attributes} $\attr(\varQueryPattern)$ as the set of variables that occur in the pattern. 
We have:\vspace*{-0.3em}\\
\begin{minipage}{\columnwidth}
\begin{align*}
\small
    \attr((\pobject xLP)) &:= \{x.k\mid k\in P\}\cup \{x\}\\[-0.5em]
    \attr(()\pedgeRightLeft{\pobject yLP}()) &:= \{y.k\mid k\in P\}\cup \{y\}\\[-0.5em]
    \attr((\pobject xLP)\pedgeRightLeft{\pobject y{L'}{P'}}()) &:= \{x.k\mid k\in P\}\cup \{y.k\mid k\in P'\}\cup\{x,y\}
\end{align*}
\end{minipage}\vspace{0.3em}
For an \ac{lpg} $\varGraph$ and a graph pattern $\varQueryPattern$,
the evaluation of $\varQueryPattern$ in $\varGraph$ is the relation
$\sem\varQueryPattern\varGraph$ for \changed{a} relation schema
$\varRelation_\varQueryPattern$ with $\attr(R_\varQueryPattern)=\attr(\varQueryPattern)$, as defined in
\Cref{tab:evalpattern}.

\begin{table}[tbp]
    \footnotesize
    \centering
    \caption{The evaluation $\sem\varQueryPattern\varGraph$ of pattern $\varQueryPattern$ in graph $\varGraph$}
    \label{tab:evalpattern}
    \vspace*{-1em}
    \begin{tabular}{ll}
\toprule
$\varQueryPattern$  & $\sem\varQueryPattern\varGraph$ \\
\midrule
$(\pobject xLP)$ & $\left\{ \left\{
\begin{aligned}
& x\mapsto n; \\
& x.k\mapsto k_v \\
& \text{for each $k\in P$} \\
\end{aligned}\right\}
\,\middle|\,
\begin{aligned}
& n\in \subsetNodeIDs, \funcLabelling(n) \subseteq L,\;\text{and}\\
& \funcProperty(n,k)=k_v\;\text{for each $k\in P$}
\end{aligned}\right\}$\\

$()\pedgeRightLeft{\pobject x{L}{P}}()$ & $\left\{\left\{
\begin{aligned}
& x\mapsto e; \\
& x.k\mapsto k_v \\
& \text{for each $k\in P$} \\
\end{aligned} \right\}
\,\middle|\,
\begin{aligned}
& e\in \setDirectedEdgeIDs,\funcLabelling(e) \subseteq L,\;\text{and}\\
& \funcProperty(e,k)=k_v\\
& \text{for each $k\in P$}
\end{aligned}\right\}$\\

$(\pobject xLP)\pedgeRightLeft{\pobject y{L'}{P'}}()$ & $\left\{ \tup\cup\left\{
\begin{aligned}
& y\mapsto e; \\
& y.k\mapsto k_v \\
& \text{for each $k\in P'$} \\
\end{aligned} \right\}
\,\middle|\,
\begin{aligned}
& \tup\in\sem{(\pobject xLP)}\varGraph\\
& e\in \setDirectedEdgeIDs, \funcDirectedEdgeSource(e)=\tup(x),\\
& \funcLabelling(e) \subseteq L',\;\text{and}\\
& \funcProperty(e,k)=k_v\\
& \text{for each $k\in P'$}
\end{aligned}\right\}$\\
\bottomrule
\end{tabular}
\end{table}

\paragraph{Graph Transformations} 
Finally, we adapt the notion of graph transformations from \citet{bonifati_versatile_2025} for our purposes.
A transformation is a pair $(Q_\mathit{before},Q_\mathit{after})$ where $Q_\mathit{before}$ is a graph pattern, 
and $Q_\mathit{after}$ is a graph pattern that contains constructors. 
There are two kinds of constructors: node identity- and label constructors.
Essentially, 
they are functions mapping sets of values to newly created node identities and labels respectively.%
\footnote{These constructors are better known as \emph{Skolem functions}~\cite{bonifati_versatile_2025}.} 
We will use the notation $\newnode{x}$ and $\newlbl{x}$ to represent the creation of a new node identity and a new label based on the value $x$.
Consider the first row in \Cref{tab:transformation-patterns} 
(with \textbf{ID}=$\varInfomorphism_{\text{within-n}}$, disregarding the gray arrows for now). 
The second column \changed{({$T$})} represents $Q_\mathit{before}$ and is a single node pattern.
It matches nodes $x$ with labels $L$ where the property keys $k_1$ and $k_2$ are present.
Then, the pattern \changed{in the third column ($T_{new}$)} is $Q_\mathit{after}$,
which constructs a new node identity based on $L$ and the property key $k_1$, 
and also constructs a new label based on the same values. 
Furthermore, the property keys (and associated values) $k_1$ and $k_2$ are moved from node $x$ to node $\newnode{Lk_1}$. 
The semantics of a single transformation is defined by a \emph{creation phase} in which new graph objects and values are created, 
followed by a \emph{removal phase} in which graph objects and values are removed.
For a set of graph transformations $\mathcal{T}$,
first, for each transformation $T\in\mathcal{T}$, 
the creation phase is executed, 
and then, for each $T\in\mathcal{T}$, the removal phase is executed.

\subsection*{Running Example}
To illustrate our concepts and contributions,
we introduce a running example (cf. \Cref{fig:example}). 
It captures an academic context, where courses are taught by lecturers using certain books. 
Some courses are taught once, others annually. 
The example also captures students who are taking courses and may form groups to solve course assignments together. 

In this example, we can formulate several dependencies:
\begin{itemize}[leftmargin=5em, listparindent=1.5em]
    \item[\changed{$\varDep_\text{Course}$}:\label{dep:course}] ``Courses are identified by their title and year.'' 
    \item[\changed{$\varDep_\text{Book}$}:\label{dep:book}] ``All lecturers teaching a particular course must use the same textbook.'' 
    \item[\changed{$\varDep_\text{GroupNo}$}:\label{dep:groupno}] ``The assigned number of a student group determines its name.'' 
    \item[\changed{$\varDep_\text{Annual}$}:\label{dep:annual}] ``Annual courses may only be taught in one of the two semesters of a year.'' 
\end{itemize}
We can classify these dependencies on the graph objects they use. 
Dependency~\hyperref[dep:course]{\changed{$\varDep_\text{Course}$}} is defined on nodes (Courses) and their property keys (\textsf{title} and \textsf{year}). 
As this dependency only involves node information, we call it a \emph{within-node} dependency.
Dependency~\hyperref[dep:book]{\changed{$\varDep_\text{Book}$}}
involves nodes (Lecturers) but also property keys from related edges (\textsf{usingBook}). 
\changed{We call $\varDep_\text{Book}$ a \emph{between-graph-object} dependency, as it involves both, nodes and edges.}
The last type of dependency involves only edges and their property keys. 
Dependency~\hyperref[dep:groupno]{\changed{$\varDep_\text{GroupNo}$}}
 is such an example, and we call it a \emph{within-edge} dependency. 
Dependency~\hyperref[dep:annual]{\changed{$\varDep_\text{Annual}$}} is another example of a within-node dependency.

\section{Related Work}
\label{sec:related-work}
Before continuing, we review existing work on (non-)relational normalization and (functional) dependencies for graphs.

\subsection{Relational Normalization}

The foundations of schema \textit{normalization} are inseparably connected to the \textit{relational model} \cite{codd_relational_1970}. 
The \acf{1nf} has been proposed together with the relational model by \citet{codd_relational_1970} in \citeyear{codd_relational_1970},
followed by the \acf{2nf}~\cite{codd_further_1971}, the \acf{3nf}~\cite{codd_further_1971}, 
and the \acf{bcnf}~\cite{codd_recent_1974} shortly thereafter.
An overview of the characteristics of each relational normal form is shown in \Cref{tab:rel-norm-form-characteristics}. 

\begin{table}
    \centering
    \caption{Characteristics of relational normal forms \cite{saake_datenbanken_2018,alice}}
    \vspace*{-1em}
    \label{tab:rel-norm-form-characteristics}
    \footnotesize
    \begin{tabular}{@{\enspace}L{1.2cm}@{\enspace}L{6.9cm}@{\enspace}}
    \toprule
    \textbf{Relational normal form} & \textbf{Characteristic} \\
    \midrule
       \acs{1nf}  &  All values in the relation are atomic.\\
       \acs{2nf}  &  Non-key attributes of the relation schema are fully dependent on keys.\\ %
       \acs{3nf} & Non-key attributes are not transitively dependent on keys. \\
       \acs{bcnf} & Every \ac{fd} is trivial or contains a superkey on its left-hand side. \\
   \bottomrule
\end{tabular}
\end{table}

The standard process of transforming a relation schema that is not in some desired normal form into a set of relation schemas 
is called \emph{decomposition}. 
Decomposing a relation schema brings advantages such as smaller relations, 
independent information stored separately,
and the omission of redundancies \cite{paredaens_structure_1989,alice}.
For the relational model,
decomposition techniques are primarily concerned with \emph{\aclp{fd}} (cf. \Cref{sec:preliminaries} and \Cref{tab:rel-norm-form-characteristics}) to achieve \ac{3nf} and \ac{bcnf} schemas.
For \acp{fd},
a sound and complete set of reasoning rules, 
the so-called Armstrong's Axioms,
is known and can be used, for example,
to identify redundant dependencies~\cite{armstrong_dependency_1974,delobel_decomposition_1973,alice}.

\emph{Normalization} is the process of decomposing a given relation schema $R$ in order to reach \ac{2nf}, \ac{3nf}, and \ac{bcnf}, effectively
splitting $R$ into $n$ new relation schemas $R_1,\dots,R_n$ based on its attributes, meaning $\attr(R)=\bigcup_{i=1}^n\attr(R_i)$.
Algorithms that perform decomposition are the \textit{synthesis algorithm}~\cite{alice}, 
which guarantees \ac{3nf} for the decomposed relation schemas, 
and the \textit{decomposition algorithm}~\cite{alice},
which ensures \ac{bcnf}.
Until today, normalization remains a crucial task when designing (relational) schemas \cite{saake_datenbanken_2018},
as it omits anomalies (cf. \cite{alice}), reduces redundancies and the potential of inconsistencies,
and it optimizes the performance of databases~\cite{thalheim_entity_2000}.

\subsection{Non-Relational Normalization}
Normalization can be applied beyond the relational model. 
\citet{thalheim_entity_2000} introduced \changed{\emph{normalization for the \acf{er} model}, led by the same objectives as relational normalization: operation anomalies, inconsistent data, and redundancies should be avoided. }%
Unlike relational normalization, which considers only single relations
(\emph{local normalization}), \ac{er} normalization can  consider whole diagrams or parts of diagrams (\emph{global \changed{normalization}}).

\changed{
\ac{er} normalization considers the normalization problem, where the goal is to determine whether a transformation (called ``infomorphism'' in~\cite{thalheim_entity_2000, thalheim_schema_2020}) $\varInfomorphism$ exists that can be applied to an \ac{er} diagram $E$ such that it fulfills the properties of a particular normal form (cf. \cite{thalheim_entity_2000} for details on the properties). One important property is that these transformations need to be \emph{reversible}, i.e., $\varInfomorphism_\text{inv}(\varInfomorphism(E)) = E$ holds for $\varInfomorphism$ and its inverse $\varInfomorphism_\text{inv}$. 
}

\emph{Normalization for \acp{lpg}} was first proposed by \citeauthor{skavantzos_third_2025}~\cite{skavantzos_normalizing_2023,skavantzos_third_2025}. 
\acp{lpg} are normalized using a decomposition procedure. 
In simplified terms, 
the principle is to first transform a set of nodes with properties $P$ and \changed{labels $L$ into} a relation with attributes $P$. 
Then, this relation is decomposed,
and, finally, the obtained relations are transformed back into an \ac{lpg}. 
This approach therefore reduces redundancy by resolving within-node dependencies but does not consider the additional types of dependencies that can occur within edges or graph patterns \changed{(i.e., between-graph-object dependencies).}
Hence, in this paper, we propose a generalized approach to close this gap by considering these types of dependencies as well. 

\subsection{Functional Dependencies for Graphs}
In the context of \ac{lpg} normalization, 
\changed{\citeauthor{skavantzos_uniqueness_2021}~\cite{skavantzos_third_2025,skavantzos_normalizing_2023,skavantzos_uniqueness_2021}} proposed \textit{\acp{gfd}} to express \acp{fd} within nodes, 
and \textit{\acp{guc}} to express keys.
\acp{gfd} and \changed{\acp{guc}} are defined on \changed{a subset of} properties of a subset of 
nodes $\setNodeIDs$ that
have assigned a set of labels $L \subseteq \infSetLabels$ and have defined values for property keys $P \subseteq \infSetPropKeys$.
The semantics of \acp{gfd} is similar to the semantics of \acp{fd}.
For example, the \ac{gfd} 
\begin{minipage}{\columnwidth}
\vspace*{-0.3em}
\footnotesize
\begin{align*}
    \changed{\varDep_\text{\ac{gfd}-Annual}}:=&\{\textsf{Course},\textsf{Annual}\}:
    \{\textsf{title},\textsf{year},\textsf{semester}\}:\\&
    \{\textsf{title},\textsf{year}\}\determ\{\textsf{semester}\}
\end{align*}
\end{minipage}
expresses dependency \hyperref[dep:annual]{\changed{$\varDep_\text{Annual}$}} (cf. Section~\ref{sec:preliminaries}).
Similarly, a \ac{guc} \\
\begin{minipage}{\columnwidth}
\footnotesize
$$\changed{\varDep_\text{\ac{guc}-Course}}:={\{\textsf{Course}\}:\{\textsf{title},\textsf{year}\}:\{\textsf{title},\textsf{year}\}}$$
\end{minipage}
expresses dependency \hyperref[dep:course]{$\varDep_\text{Course}$},
i.e., courses are identified by their title and year.
Notably, \acp{gfd} and \acp{guc} are concerned with nodes.
In our terminology, they can only express within-node dependencies.

Besides \acp{gfd} and \ac{guc},
a variety of other dependencies \changed{applicable to \acp{lpg}} has been defined. 
Including \acp{gfd}, 
\citet{manouvrier_graph_2025} analyzed \num{13} types of dependencies.
However, the focus of most of these dependency types is on the identification of graph objects, 
on the specification of integrity constraints involving constant values, 
or on 
inference of graph objects.
Except for the previously mentioned \acp{gfd}, 
none of the dependency types was designed for the purpose of \emph{normalizing} graphs~\cite{manouvrier_graph_2025}. 

\changed{Not for \acp{lpg}, but for arbitrary $n$-ary relations \citet{hellings_implication_2014} 
defined \acfp{fc}. They define an axiomatic system to allow reasoning over \acp{fc}. In particular, \acp{fc} can be used to define \acp{fd} over RDF-based graphs.
}

\changed{Since \Cref{sec:normalization} defines a \emph{graph-native} approach to normalization based on dependencies involving nodes and edges, we need a dependency type tailored to this task.}
\vspace{-0.5em}

\section{Graph Object Functional Dependencies}
\label{sec:dependency}
As a first step towards graph-native normalization,
we define \acfp{gnfd} as a dependency type for \acp{lpg} that goes beyond within-node dependencies.
A \ac{gnfd} $\varDep$ is an expression of the form:\\[-0.4em]
\begin{minipage}{\columnwidth}
    \[
  \varDep ::={} \varQueryPattern :: X \determ Y
\]
\end{minipage}\\
where $\varQueryPattern$ is a graph pattern and $X,Y$ are sets of
variables (i.e., object identity- or property variables; cf. Section~\ref{sec:preliminaries}).
We refer to the pattern~$\varQueryPattern$ of a $\ac{gnfd}$ as its \emph{scope}, and to $X\determ Y$ as its \textit{descriptor} $\varDescriptor$.
We call a \ac{gnfd} $Q::X\determ Y$ \emph{trivial} if $Y\subseteq X$. 

Whether a dependency \changed{is a \emph{within}- or \emph{between}-graph-object dependency} simply depends on the number of graph object identity variables involved in the dependency.
We make this more precise:
\begin{definition}
    We call a \ac{gnfd} $\varphi = Q::X\determ Y$ a \emph{within}-graph-object dependency when the variables in $X\cup Y$ use exactly one object identity variable $x$. If $x$ occurs as a node (resp. edge) in the scope $Q$, we call it a \emph{within-node} (resp. edge) dependency.
    Otherwise, when two object identity variables occur in $X\cup Y$, we call $\varphi$ a \emph{between}-graph-object dependency.
\end{definition}

Recall that graph patterns $\varQueryPattern$ match subgraphs in a given graph~$\varPropertyGraph$, 
with the requirement
that certain sets of labels $L$ and keys $P$ must be present in the matched graph objects.
The result of evaluating a graph pattern in a given graph is a relation where the relation schema consists of attributes that are given by the variables mentioned in the pattern.
Then, \ac{fd} $X\determ Y$ must hold in this relation.

Formally, a graph $\varGraph$ satisfies a \ac{gnfd} $\varQueryPattern::X\determ Y$ if
$\sem\varQueryPattern\varGraph \sat X\determ Y$.
Furthermore, we call a set of \acp{gnfd} $\Sigma$ a GN-Schema. 
Therefore, more generally, we say a graph $G$ satisfies a schema $\Sigma$, 
denoted by $G\sat \Sigma$ if it satisfies each of the dependencies in $\Sigma$.

\begin{example}\label{ex:formal-deps}
By using \acp{gnfd}, we are able to express all textually described dependencies of the running example (cf. \Cref{sec:preliminaries}).\\
\begin{minipage}{\columnwidth}
\vspace{0.2em}
    {\footnotesize
    \begin{itemize}[leftmargin=5em, listparindent=0em]
        \item[\changed{$\varDep_\text{Course}$}:]
        $(\pobject x{\{\mathsf{Course}\}}{\{\mathsf{title},\mathsf{year}\}}) :: x.\mathsf{title}, x.\mathsf{year}\determ x$\vspace{0.3em}
        \item[\changed{$\varDep_\text{Book}$}:] $()\pedgeRight{\pobject y{\{\mathsf{Teaches}\}}\{\mathsf{using}\}}(\pobject {x}{\{\mathsf{Course}\}}{\emptyset}):: x\determ y.\mathsf{using}$\vspace{0.3em}
        
        \item[\changed{$\varDep_\text{GroupNo}$}:] $()\pedgeRight{\pobject y{\{\mathsf{inGroupWith}\}}\{\mathsf{groupNo},\mathsf{name}\}}() ::$ $y.\mathsf{groupNo}\determ y.\mathsf{name}$\vspace{0.3em}
        \item[\changed{$\varDep_\text{Annual}$}:] $(\pobject x{\{\mathsf{Course},\mathsf{Annual}\}}\{\mathsf{title},\mathsf{year},\mathsf{sem}\}) ::$ $ x.\mathsf{title}, x.\mathsf{year}\determ x.\mathsf{sem}$
    \end{itemize}
    }
    \end{minipage}\vspace{0.2em}\\
Dependency \hyperref[dep:course]{\changed{$\varDep_\text{Course}$}} acts as a uniqueness (superkey) constraint and shows how to express \acp{guc} using \acp{gnfd}. 
Dependency \hyperref[dep:book]{\changed{$\varDep_\text{Book}$}} highlights our generalization of \acp{gfd}; we are able to express dependencies \emph{between} graph objects.
Similarly, \hyperref[dep:groupno]{\changed{$\varDep_\text{GroupNo}$}} shows how \acp{gnfd} can express within-edge dependencies. 
Finally, example \hyperref[dep:annual]{\changed{$\varDep_\text{Annual}$}} shows how to express \acp{gfd} using within-node \acp{gnfd}. 
\end{example}

\paragraph{Axiomatization}
Functional dependencies in the relational model have an axiomatization called \emph{Armstrong's axioms}.
Analogously, we also have reasoning rules for \acp{gnfd}.
First, we state axioms that are intrinsic to \acp{lpg}. 
We call the following three axioms \emph{structurally implied} because they are given by the graph structure of the data model.
Let $G$ be a graph.
\begin{description}
    \item[Object Identity (Node)] $G\models (\pobject o\emptyset k):: o\determ o.k$ 
    \item[Object Identity (Edge)] $G\models ()\pedgeRight{\pobject o\emptyset k}():: o\determ o.k$
    \item[Edge Endpoint] $G\models (x)\pedgeRight{e}(y):: e\determ x$ and $e\determ y$ 
\end{description}
Before we continue with the associated reasoning rules,
we need to introduce the notion of \emph{pattern inclusion}.
Let $Q,Q'$ be basic graph patterns.
We say $Q$ is more general than $Q'$, 
which we write as $Q\sqsupseteq Q'$ if,
for any graph $G$, $\sem QG \supseteq \pi_{\attr(Q)}(\sem {Q'}G$).\footnote{We assume the standard relational projection semantics for $\pi_{\attr(Q)}$, i.e., we project on the attributes of $Q$.} 
We characterize basic graph pattern inclusion as follows:
\begin{table}
\small
    \centering
    \footnotesize
        \caption{Conditions for $Q\sqsupseteq Q'$}
        \vspace*{-1em}
    \begin{tabular}{cl L{2.2cm}}
    \toprule
        $Q$ & $Q'$ & $Q\sqsupseteq Q'$ if \\
        \midrule
        $(\pobject xLP)$ & $(\pobject {x'}{L'}{P'})$ & $L'\subseteq L$ and $P'\subseteq P$ \\  
        $(\pobject xLP)$ & $(\pobject {x'}{L'}{P'})\pedgeRightLeft{\pobject y{L''}{P''}}()$ & $L'\subseteq L$ and $P'\subseteq P$ \\
         $(\pobject {x}{L_x}{P_x})\pedgeRightLeft{\pobject y{L_y}{P_y}}()$ & $(\pobject {x'}{L_x'}{P_x'})\pedgeRightLeft{\pobject y{L_y'}{P_y'}}()$ & $L_x'\subseteq L_x$, $P_x'\subseteq P_x$;\newline  $L_y'\subseteq L_y$, $P_y'\subseteq P_y$ \\
    \bottomrule
\end{tabular}
    \label{tab:moregeneralthan}
\end{table}
\begin{lemma}\label{lem:moregeneralthan}
    $Q\sqsupseteq Q'$ iff they are as shown in Table~\ref{tab:moregeneralthan}.
\end{lemma}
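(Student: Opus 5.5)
The plan is to prove the two directions separately, reading $Q\sqsupseteq Q'$ through the evaluation semantics of \Cref{tab:evalpattern}. Throughout, I identify the object identity variables of $Q$ and $Q'$ ($x$ with $x'$, and the edge variables), so that $\pi_{\attr(Q)}$ is meaningful. As a preliminary step, I would note that $\pi_{\attr(Q)}(\sem{Q'}G)$ is only defined when $\attr(Q)\subseteq\attr(Q')$, up to this renaming. This already rules out every shape combination not listed in \Cref{tab:moregeneralthan}: for example, an edge pattern $Q$ against a node-only pattern $Q'$ fails because the edge variable $y$ is not an attribute of $Q'$.

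\emph{Soundness (``if'').} I take an arbitrary graph $G$ and a tuple $\tup'\in\sem{Q'}G$, and show that $\tup'$ restricted to $\attr(Q)$ lies in $\sem QG$. The argument runs row by row.
\begin{itemize}
\item \emph{First row.} $\tup'(x')=n$ is a node whose labels satisfy the label condition of \Cref{tab:evalpattern} for $L'$. Since $L'\subseteq L$, they satisfy it for $L$, because the condition $\funcLabelling(n)\subseteq L'$ is monotone in the label set. For the key condition I need that every key demanded by $Q$ is defined on $n$ and that its value is the one recorded in $\tup'$; this is where the key-set comparison in the table is used.
\item \emph{Second row.} It reduces to the first, since the node part of $\sem{Q'}G$ is by definition a tuple of $\sem{(\pobject{x'}{L'}{P'})}G$, and the projection simply forgets the edge attributes.
\item \emph{Third row.} I apply the node argument to the component $\tup$ and the analogous edge argument to $e=\tup'(y)$. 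The source condition $\funcDirectedEdgeSource(e)=\tup(x)$ is the same in both patterns, so it transfers unchanged.
\end{itemize}

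\emph{Completeness (``only if'').} I argue by contraposition. If a condition of the table fails, I build a minimal witness graph $G$. For a node pattern, $G$ is a single node $n$ with $\funcLabelling(n)=L'$ and with $\funcProperty(n,k)$ defined, with a fresh constant, exactly for $k\in P'$. Then $\sem{Q'}G$ is nonempty. However, a label in $L'\setminus L$ violates $\funcLabelling(n)\subseteq L$, so $n\notin\sem QG$. Likewise, a key mismatch makes $Q$ demand a property that $n$ lacks, or makes the projection ill-defined. For the edge rows, I add one target node and one edge $e$ with $\funcLabelling(e)=L_y'$ and keys exactly $P_y'$, and argue in the same way on the edge component.

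\emph{Main obstacle.} I expect the delicate step to be the key-set condition. The inclusion of attribute sets needed for the projection ($\attr(Q)\subseteq\attr(Q')$, i.e.\ keys of $Q$ among keys of $Q'$) has to be reconciled with the direction $P'\subseteq P$ as printed in \Cref{tab:moregeneralthan}. I would first fix the convention carefully, stating precisely how the key sets enter both $\attr(\cdot)$ and the satisfaction condition of \Cref{tab:evalpattern}. I would then check that the witness graph above separates the two patterns exactly when the table's key condition fails. If the printed orientation turns out inconsistent with the projection, I would state the lemma with the orientation forced by $\attr(Q)\subseteq\attr(Q')$ and prove that version by the same argument.
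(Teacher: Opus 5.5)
Your plan has the same two steps as the paper's sketch: unfold Table~\ref{tab:evalpattern} row by row for ``if'', and exclude the remaining shape pairs for ``only if''. As written, however, it cannot be completed, for two concrete reasons.

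First, the orientation problem you flag is not a convention to be reconciled: the printed key condition is simply wrong, and your key step fails under it. For $Q=(\pobject{x}{\emptyset}{\{k\}})$ and $Q'=(\pobject{x'}{\emptyset}{\emptyset})$ the table asserts $Q\sqsupseteq Q'$. Yet on a one-node graph with no labels and no $k$ we get $\sem{Q'}{G}\neq\emptyset=\sem{Q}{G}$, and $\pi_{\attr(Q)}$ is not even defined. Swapping the two patterns gives a true instance of $\sqsupseteq$ that the table rejects. So the version with $P\subseteq P'$ (resp.\ $P_x\subseteq P_x'$, $P_y\subseteq P_y'$) must be your main line, not a fallback.

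The same holds for labels. Your monotonicity step uses the literal clause $\funcLabelling(n)\subseteq L$ of Table~\ref{tab:evalpattern}. The paper, however, intends ``at least labels $L$'', i.e.\ $L\subseteq\funcLabelling(n)$ (see the prose after Figure~\ref{fig:graphical-notation}). It also applies the lemma that way in Example~\ref{ex:algo}, declaring $(\pobject{x}{\{\textsf{Course}\}}{P})$ more general than a pattern whose node carries $\{\textsf{Course},\textsf{International}\}$, which the printed $L'\subseteq L$ forbids. With $L\subseteq L'$ and $P\subseteq P'$ (componentwise for the edge), your soundness argument and your one-node/one-edge witness graphs go through after the obvious swap.

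Second, your preliminary claim that attribute inclusion rules out every shape pair missing from Table~\ref{tab:moregeneralthan} is false. The grammar has edge-only patterns $()\pedgeRightLeft{\pobject{y}{L}{P}}()$; the scope of $\varDep_\text{GroupNo}$ is one. Take such a $Q$, with $Q'$ either edge-only or a node-edge pattern sharing the edge variable. Then $\attr(Q)\subseteq\attr(Q')$ can hold, and so can $Q\sqsupseteq Q'$ (trivially when $Q=Q'$), yet neither pair is in the table. So the ``only if'' direction is false as printed.

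Your completeness part must therefore add these two rows, with condition $L_y\subseteq L_y'$, $P_y\subseteq P_y'$. It should exclude only the pairs $(Q,Q')$ in which $Q$ mentions a variable absent from $Q'$: (node, edge-only), (edge-only, node), (node-edge, node) and (node-edge, edge-only). Of these, the paper's sketch treats only (node-edge, node).

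In row~3 the two arrows must also point the same way. Under the intended semantics $\leftarrow$ constrains $\funcDirectedEdgeTarget$ rather than $\funcDirectedEdgeSource$, and a single edge between two distinct nodes separates opposite directions. Your step ``the source condition is the same in both patterns'' needs that hypothesis.

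The paper's sketch passes over all of this as well. What your argument would actually establish is a corrected and extended version of the table, not the lemma as printed.
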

\begin{proof}[Proof sketch]
    One can verify that the relations shown in Table~\ref{tab:moregeneralthan} hold. Furthermore, the remaining case is where an edge occurs in $Q$, but not in $Q'$. Suppose $Q := (\pobject {x}{L}{P})\pedgeRight{\pobject y{L'}{P'}}()$ and $Q' := (\pobject {x'}{L''}{P''})$. Then, even if $L''\subseteq L$ and $P''\subseteq P$, pattern $Q$ is more restrictive since it requires the presence of an edge.
\end{proof}
Furthermore, Table~\ref{tab:moregeneralthan} gives us a straightforward algorithm to check whether $\moregeneralthan{Q}{Q'}$ by matching the patterns and checking the subsets of property keys and labels.

\changed{We now state the reasoning rules for \acp{gnfd}.}
Let $Q,Q'$ be basic graph patterns.
\begin{description}
    \item[Reflexivity] If $Y\subseteq X$, then $\varQueryPattern::X\determ Y$.
    \item[Augmentation] If $\varQueryPattern::X\determ Y$ and
  $Z\subseteq \attr(\varQueryPattern)$, then
      $\varQueryPattern::XZ\determ YZ$.
    \item[Decomposition] If $\varQueryPattern::X\determ ZY$, then
      $\varQueryPattern::X\determ Y$.
    \item[Restriction] If $Q::X\determ Y$, and $\moregeneralthan{Q}{Q'}$, then $Q'::X\determ Y$.
    \item[Transitivity] If $\varQueryPattern::X\determ Y$, $\varQueryPattern'::Y\determ Z$ and $\moregeneralthan{Q'}{Q}$, then $\varQueryPattern':: X\determ Z$.
\end{description}

\begin{theorem}
  The axioms above provide a sound and complete axiomatization of \acp{gnfd}.
\end{theorem}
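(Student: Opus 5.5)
Soundness and completeness are handled separately, with completeness reduced to the classical completeness of Armstrong's axioms by building a two-tuple counterexample graph.

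\emph{Soundness.} We check each rule against the semantics $G\sat Q::X\determ Y$ iff $\sem QG\sat X\determ Y$. The three structurally implied axioms hold because $\funcProperty$, $\funcDirectedEdgeSource$ and $\funcDirectedEdgeTarget$ are functions. Two rows agreeing on an identity variable $o$ name the same object and hence carry the same value for $o.k$ and the same endpoints. Reflexivity, Augmentation and Decomposition are the Armstrong rules applied to the fixed relation $\sem QG$, so they are sound. For Restriction, $\moregeneralthan{Q}{Q'}$ gives $\pi_{\attr(Q)}(\sem{Q'}G)\subseteq\sem QG$. Two rows of $\sem{Q'}G$ agreeing on $X$ project to rows of $\sem QG$ agreeing on $X$, and therefore on $Y$; since $X\cup Y\subseteq\attr(Q)$, the projection does not affect this. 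For Transitivity, take $\mu_1,\mu_2\in\sem QG$ agreeing on $X$. They agree on $Y$ by the first premise. Because $\moregeneralthan{Q'}{Q}$, their projections lie in $\sem{Q'}G$, so they agree on $Z$ by the second premise. This yields $Q::X\determ Z$; I read the rule's conclusion as stated on the more specific pattern.

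\emph{Completeness.} Fix $\Sigma$ and a target $\varphi=Q::X\determ Y$ with $\Sigma\not\vdash\varphi$. We build a graph $G\sat\Sigma$ with $G\not\sat\varphi$, following the classical two-tuple construction. Define the closure
\[
X^+_Q=\{v\in\attr(Q)\mid \Sigma\vdash Q::X\determ v\}.
\]
It includes the structural consequences, such as all $o.k\in\attr(Q)$ whenever $o\in X^+_Q$ and edge endpoints when present. By Decomposition and Augmentation, $Y\not\subseteq X^+_Q$.

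Next we build $G$ from two matches $m_1,m_2$ of $Q$. Each match uses fresh objects carrying exactly the labels and keys demanded by $Q$; labels are taken from $Q$ so that the matching direction of Table~\ref{tab:evalpattern} is respected. The matches are glued along $X^+_Q$:
\begin{itemize}
\item if an identity variable lies in $X^+_Q$, both matches use the same object;
\item a property variable in $X^+_Q$ gets the common value $0$;
\item any other property variable gets value $0$ in $m_1$ and $1$ in $m_2$.
\end{itemize}
Closure under the structural axioms makes the gluing consistent: a shared object needs equal property values and equal endpoints. The two resulting rows of $\sem QG$ agree on $X$ and disagree on some variable of $Y$, so $G\not\sat\varphi$.

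\emph{Main obstacle.} The hard step is showing $G\sat\sigma$ for every $\sigma=Q'::U\determ V\in\Sigma$. Rows of $\sem{Q'}G$ are essentially restrictions of $m_1$ or $m_2$ to patterns that $Q$ specializes. Any row not of this form matches some $Q'$ with $\moregeneralthan{Q'}{Q}$ via Lemma~\ref{lem:moregeneralthan}; when $Q'$ contains an edge that $Q$ lacks, $\sem{Q'}G$ is empty or a single row and $\sigma$ holds trivially. Given two distinct rows agreeing on $U$, the values we chose force $U\subseteq X^+_Q$ under the renaming. Restriction then yields $Q::U\determ V$, and Transitivity together with Augmentation gives $V\subseteq X^+_Q$, so the rows agree on $V$. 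The delicate part is the variable renaming between $Q'$ and $Q$. It requires either a standardization convention or a symmetric treatment of rows of $Q'$ that match with swapped edge directions. I would first handle single-object patterns, where the argument collapses exactly to Armstrong's, and then extend the construction to node–edge patterns.
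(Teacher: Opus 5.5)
Your route differs from the paper's. The paper builds no counterexample graph. It encodes an LPG as one 4-ary relation, translates GO-FDs into Hellings et al.'s functional constraints on patterns, and imports completeness from their axiomatization, asserting in its step (3) that the GO-FD rules faithfully adapt theirs. Your soundness half is fine. Your repair of Transitivity is also forced: the literal rule, which concludes on the more general scope, is unsound. From $(x{:}\{A\}{:}\{k\})::\emptyset\determ x.k$ and Reflexivity on $(x{:}\emptyset{:}\{k\})$ it derives $(x{:}\emptyset{:}\{k\})::\emptyset\determ x.k$. The completeness half, however, fails exactly at the step you flag, and the problem is substantive, not bookkeeping. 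Under your reading, every rule concludes on a scope no more general than its premises' scopes. So a derivation of a GO-FD on $Q$ can only use members of $\Sigma$ whose scope is $\sqsupseteq Q$. Your case analysis assumes that any scope matched twice by your graph is more general than $Q$, so that Restriction applies, and that every other scope is empty or a single row. But a scope strictly \emph{more specific} than $Q$ can match both $m_1$ and $m_2$, and its dependencies can never be pulled back into $X^+_Q$.

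Here is a concrete case. Let $Q=()\xrightarrow{y{:}\emptyset{:}\{k\}}()$ and $\varphi=Q::y.k\determ y$. Let $Q'=(x{:}\emptyset{:}\emptyset)\xrightarrow{y{:}\emptyset{:}\{k\}}()$ and $\Sigma=\{Q'::y.k\determ y\}$. Then $X^+_Q=\{y.k\}$, since only consequences of $y\determ y.k$ are derivable on $Q$. Your graph has two distinct edges with $k=0$, and both have sources. So $\sem{Q'}G$ has two rows that agree on $y.k$ and differ on $y$, i.e.\ $G\not\sat\Sigma$.

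No other graph does better. Reading label sets as ``at least $L$'', as the paper's prose does, every source matches $(x{:}\emptyset{:}\emptyset)$. Hence $\pi_{\{y,\,y.k\}}(\sem{Q'}G)=\sem QG$ for every $G$, so $\Sigma\sat\varphi$ while $\Sigma\not\vdash\varphi$.

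Your fresh anonymous endpoints cause the same problem from another side. They are matched by $(x'{:}\emptyset{:}\emptyset)$ and by reverse-direction node--edge scopes, which your analysis does not treat. For example, take $(x{:}\emptyset{:}\emptyset)::\emptyset\determ x$ together with $(x{:}\emptyset{:}\emptyset)\xleftarrow{y{:}\emptyset{:}\emptyset}()::x\determ y$. These semantically imply $(x{:}\emptyset{:}\emptyset)\xrightarrow{y{:}\emptyset{:}\emptyset}()::\emptyset\determ y$. That conclusion is underivable, because the reversed scope is not $\sqsupseteq$ the forward one.

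So, under the sound reading you adopt, the rule system is not complete, and no two-row construction can be finished. A direct proof first needs extra rules:
\begin{enumerate}
\item variable renaming, which you already note;
\item a rule lifting dependencies between scopes that differ only by an unconstrained endpoint;
\item something handling interactions through unnamed endpoints.
\end{enumerate}
The paper's sketch does not address these cases either. They are exactly what its step (3) would have to account for.
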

\begin{proof}[proof sketch]
  In case $Q = Q'$ the soundness and completeness of the axiomatization, within the scope $Q$, follows from Armstrong's Axioms. However, generally, the completeness of the axiomatization is non-trivial. To show completeness,
  we rely on the insight that \acp{gnfd} can be encoded into \changed{\emph{\acfp{fc}}} over $n$-ary relations, 
  as defined by \citeauthor{hellings_implication_2014}~\cite{hellings_implication_2014}. 
  In their work, they provide a complete axiomatization of \changed{\acp{fc}}, which we can use to show the completeness of our axiomatization.
  To this end, we must verify that (1) we can encode \acp{lpg} in a $n$-ary relation; (2) we can translate \acp{gnfd} into \changed{\acp{fc}}; and (3) our axioms correspond to theirs.
  
  First, for (1), any \acp{lpg} can be encoded into a single relation with arity 4, e.g., as done in \emph{domain graphs}~\cite{vrgoc_milleniumdb_2023}. 
  Then, for (2), we have to show that \changed{\acp{fc}} can express \acp{gnfd}:
  graph patterns without repetitions can be expressed over this 4-ary relation by `patterns' from \citet{hellings_implication_2014}, since they are essentially conjunctive queries.
  Furthermore, their constraint itself \emph{is} an FD, as are our descriptors. 
  For (3) one can verify that we faithfully adapted \citeauthor{hellings_implication_2014}'s~\cite{hellings_implication_2014} axioms in a straightforward way.\footnote{Interestingly, this argument captures the axiomatization of a generalization of \acp{gnfd} with graph patterns without repetitions.
  However, such dependencies are out of the scope of this work.}
\end{proof}

\changed{
\paragraph{Obtaining \acp{gnfd}}
\acp{gnfd} encode domain knowledge of their scopes and are specified by a curator, typically with implicit knowledge of the graph's construction and the subgraphs relevant to the application. \citet{skavantzos_third_2025} report that even for within-node dependencies, automated mining gives only heuristics; mining \acp{gnfd} over more complex graph patterns is correspondingly harder, and automatic retrieval of \acp{gnfd} remains an open problem.
}

\section{Redundancies in Labeled Property Graphs}
\label{sec:native-property-graph-normal-forms}
Since one of the main objectives of normalization is to reduce redundancy, we first identify different types of redundancies that might occur in \acp{lpg} under \acp{gnfd} and show how these can be
resolved in a graph-native way.
Based on these insights, we introduce \acfp{gnnf}.

\subsection{Redundancy Patterns}
Graph data is called \emph{redundant if it represents the same piece of information \changed{multiple times}}. 

\begin{example}\label{ex:redundancy}
Consider again the running example in \Cref{fig:student-in-group} \changed{and dependency~\hyperref[dep:groupno]{$\varDep_\text{GroupNo}$}}, which states that a student group's number determines \changed{the name of the group they are in}. The fact that group 1 is named ``Heroes'' is represented twice, in edges \texttt{e4} and \texttt{e5}.    
    \Cref{fig:red-in-graphs} visually displays the redundancy of the group name in these edges. Preferably, the group name ``Heroes'' \changed{should} only be represented once in combination with group number 1.

    We can characterize these redundancies from a relational perspective. 
    Consider the relation depicted in \Cref{fig:red-in-rel}, which shows the relation represented by the scope of dependency~\changed{\hyperref[dep:groupno]{$\varDep_\text{GroupNo}$}}, including the source and target of the edges. Clearly, if we look at the attributes used in the descriptor of~\hyperref[dep:groupno]{\changed{$\varDep_\text{GroupNo}$}}, i.e., \textsf{groupNo} and \textsf{name}, we can identify the repeated fact.
\end{example}

\begin{figure}
    \centering
    \begin{subfigure}{\linewidth}
    \centering
    \footnotesize
\begin{tikzpicture}[node distance=1.5cm]

\node[vertex style=Node,align=center,minimum height=0.4cm,minimum width=0.6cm] (7) {
\texttt{n7}};

\node[vertex style=Node, right of=7,align=center, xshift=3.5em,minimum height=0.4cm,minimum width=0.6cm] (8) {
\texttt{n8}};

\node[vertex style=Node, right of=8,align=center,minimum height=0.4cm,minimum width=0.6cm] (9) {
\texttt{n9}};

\node[vertex style=Node, right of=9,align=center, xshift=3.5em,minimum height=0.4cm,minimum width=0.6cm] (81) {
\texttt{n8}};

\draw [->, align=center] (7) to   node[text style,rotate=0,below,align=left,yshift=1.25em] {\textbf{:inGroupWith}\\\texttt{e4}\\groupNo: \phantom{``}1\\\dotuline{name: ``Heroes''}} (8);
\draw [->, align=center] (9) to   node[text style,rotate=0,below,align=left,yshift=1.25em] {\textbf{:inGroupWith}\\\texttt{e5}\\groupNo: \phantom{``}1\\\dotuline{name: ``Heroes''}} (81);

\draw[->, color=gray] (0.4,-0.35) to [bend right=45] (0.4,-0.65);
\draw[->, color=gray] (4.25,-0.35) to [bend right=45] (4.25,-0.65);

\end{tikzpicture}
    \caption{Redundancy within the edges}
    \label{fig:red-in-graphs}
    \end{subfigure}

    \begin{subfigure}{\linewidth}
    \centering
    \footnotesize

    \begin{tabular}{l r l l l} 
        & \multicolumn{2}{c}{
\begin{tikzpicture}
    \draw[->, color=gray] (0,0) to [bend left=45] (1,0);
\end{tikzpicture}
        
        }    \\
        \toprule
        $x$ & $x.\mathsf{groupNo}$ & $x.\mathsf{name}$ & $\hat{x}_{\mathsf{src}}$ & $\hat{x}_{\mathsf{tgt}}$ \\
        \midrule
        \texttt{e4} & \multicolumn{2}{r}{\dotuline{1\hspace{1.6em} Heroes}} & \texttt{n7} & \texttt{n8} \\
        \texttt{e5} & \multicolumn{2}{r}{\dotuline{1\hspace{1.6em} Heroes}}  & \texttt{n9} & \texttt{n8} \\
        \bottomrule

    \end{tabular}

    \caption{Redundancy from a relational perspective}
    \label{fig:red-in-rel}
    \end{subfigure}

    \begin{subfigure}{\linewidth}
    \centering
    \footnotesize

    \begin{tabular}[b]{l r l l@{~}} 
    $R_\text{edge}$\vspace{0.2em}\\
        \toprule
        $x$ & $x.\mathsf{groupNo}$ & $\hat{x}_{\mathsf{src}}$ & $\hat{x}_{\mathsf{tgt}}$ \\
        \midrule
        \texttt{e4} & 1 & \texttt{n7} & \texttt{n8} \\
        \texttt{e5} & 1  & \texttt{n9} & \texttt{n8} \\
        \bottomrule
    \end{tabular}
\hspace{2em}
    \begin{tabular}[b]{r l} 
         \multicolumn{2}{l}{$R_\text{groupNo}$\vspace{0.2em}\hspace{1em}
\begin{tikzpicture}
    \draw[->, color=gray] (0,0) to [bend left=45] (1,0);
\end{tikzpicture}
        
        }    \\
        \toprule
         $x.\mathsf{groupNo}$ & $x.\mathsf{name}$ \\
        \midrule
        \multicolumn{2}{r}{1\hspace{1.6em} Heroes} \\
        \bottomrule \\

    \end{tabular}

\vspace*{-6.1em}
\hspace*{-4em}
\begin{tikzpicture}
    \draw[->, dashed] (0,0) to [bend left=20] (2.7,-0.1);
\end{tikzpicture}
\vspace{4.5em}
    \vspace*{-1em}
    \caption{Decomposed relations. The dashed line is a foreign key; the \textcolor{gray}{gray} line is the descriptor of the original dependency.}
    \label{fig:decomposed_sketch}
    \end{subfigure}

    \begin{subfigure}{\linewidth}
    \centering
    \footnotesize
\begin{tikzpicture}[node distance=1.5cm]

\node[vertex style=BurntOrange,align=center,minimum height=0.4cm,minimum width=0.6cm] (7) {
\texttt{n7}};

\node[vertex style=BurntOrange,align=center,minimum height=0.4cm,minimum width=0.6cm, right of=7, yshift=0.75cm] (e4) {
\texttt{e4}};

\node[vertex style=BurntOrange, right of=e4,align=center, minimum height=0.4cm,minimum width=0.6cm, yshift=0.75cm] (8) {
\texttt{n8}};

\node[vertex style=BurntOrange,align=center,minimum height=0.4cm,minimum width=0.6cm, right of=8,yshift=-0.75cm] (e5) {
\texttt{e5}};
\node[vertex style=BurntOrange, right of=e5,align=center,minimum height=0.4cm,minimum width=0.6cm,yshift=-0.75cm] (9) {
\texttt{n9}};

\node[vertex style=BurntOrange, below of=8,align=center,minimum height=0.4cm,minimum width=0.6cm] (o) {
\texttt{o1}};
\node[text style,rotate=0,below,align=left,below of = o, yshift=1cm] {groupNo: \phantom{``}1\\name: ``Heroes''};

\begin{scope}[on background layer]
        \fill[yellow,opacity=0.07] ($(7)-(0.5,0.8)$) rectangle ($(8)+(0.8,0.3)$);
        \fill[blue,opacity=0.07] ($(9)-(-0.5,0.8)$) rectangle ($(8)+(-0.8,0.3)$);

    \end{scope}

\draw [->, align=center] (7) to  
node[text style,rotate=0,below,align=left,yshift=1.25em] {\textbf{:iGW\textsubscript{src}~}
} 
(e4);
\draw [->, align=center] (e4) to  
node[text style,rotate=0,below,align=left,yshift=1.25em] {\textbf{:iGW\textsubscript{tgt}~~}
} 
(8);
\draw [->, align=center] (9) to  
node[text style,rotate=0,below,align=left,yshift=1.25em] {\textbf{~:iGW\textsubscript{src}}
} 
(e5);
\draw [->, align=center] (e5) to  
node[text style,rotate=0,below,align=left,yshift=1.25em] {\textbf{~:iGW\textsubscript{tgt}}
} 
(8);
\draw [->, align=center] (e4) to  
node[text style,rotate=0,below,align=left,yshift=1.25em] {\textbf{~:GDet.}} 
(o);
\draw [->, align=center] (e5) to  
node[text style,rotate=0,below,align=left,yshift=1.25em] {\textbf{~:GDet.}} 
(o);

\draw[->, color=gray] (2.4,-0.3) to [bend left=20] (2.67,-0.1);

\end{tikzpicture}
    \caption{Transformed subgraph of \Cref{fig:red-in-graphs} without redundancies}
    \label{fig:transformed_sketch}
    \end{subfigure}
    \vspace*{-2.3em}
    \caption{Resolving redundancies in \Cref{fig:example} related to \hyperref[dep:groupno]{$\varDep_\text{GroupNo}$}: the original redundancy (a, b), its relational decomposition (c), and the transformed subgraph (d).}
    \label{fig:redundancy_sketch}
    \vspace*{-0.3em}
\end{figure}

To simplify our discussion of redundancies,
we assume that the dependencies are \emph{strictly} within or between objects, 
i.e., for a descriptor $X\determ Y$ at most one object identity variable may appear in \changed{each}, $X$ and $Y$. 
We call such dependencies \emph{strict} \acp{gnfd}. 
They capture natural dependencies, such as the ones used in our examples. 

\begin{figure}
    \centering
    
    \begin{subfigure}{\columnwidth}
    \footnotesize
    \centering

    \begin{tikzpicture}[node distance=1em]
            \node[vertex style=Node,align=center,minimum height=0.4cm,minimum width=0.6cm] (1) {$x_1$};

\node[text style,right of=1,align=center,yshift=0.7em,xshift=2em] (2_id) {$y$};

\draw [<-, bend left=50,color=gray, align=center] (1.north east) to  (2_id.north);

\node[vertex style=Node, right of=1,align=center,minimum height=0.4cm,minimum width=0.6cm,xshift=6em] (3) {$x_2$}
edge [<-,align=center]  (1);

\end{tikzpicture} \hspace{1em}
    \begin{tikzpicture}[node distance=1em]
            \node[vertex style=Node,align=center,minimum height=0.4cm,minimum width=0.6cm] (1) {$x_1$};

\node[text style,right of=1,align=center,yshift=0.7em,xshift=2em] (2_id) {$y$};
\node[text style,right of=1,align=center,yshift=-0.7em,xshift=2em] (2_props1) {$k$};
 
\draw [<-, bend left=30,color=gray, align=center] (2_props1.west) to  (2_id.west);

\node[vertex style=Node, right of=1,align=center,minimum height=0.4cm,minimum width=0.6cm,xshift=6em] (3) {$x_2$}
edge [<-,align=center]  (1);

\end{tikzpicture} 
       
    \begin{tikzpicture}[node distance=1em]
            \node[vertex style=Node,align=center,minimum height=0.4cm,minimum width=0.6cm] (1) {$x_1$};
\node[text style,below of=1,align=center,yshift=-0.6em] (1_props1) {$k$};

\node[text style,right of=1,align=center,yshift=0.7em,xshift=2em] (2_id) {$y$};
 
\draw [<-, bend right=20,color=gray, align=center] (1_props1.east) to  (2_id.west);

\node[vertex style=Node, right of=1,align=center,minimum height=0.4cm,minimum width=0.6cm,xshift=6em] (3) {$x_2$}
edge [<-,align=center]  (1);

\end{tikzpicture} \hspace{1em}
    \begin{tikzpicture}[node distance=1em]
            \node[vertex style=Node,align=center,minimum height=0.4cm,minimum width=0.6cm] (1) {$x_1$};
\node[text style,below of=1,align=center,yshift=-0.6em] (1_props1) {$k$};

\node[text style,right of=1,align=center,yshift=0.7em,xshift=2em] (2_id) {$y$};
 
\draw [<-, bend left=30,color=gray, align=center] (1_props1.west) to  (1.west);

\node[vertex style=Node, right of=1,align=center,minimum height=0.4cm,minimum width=0.6cm,xshift=6em] (3) {$x_2$}
edge [<-,align=center]  (1);

\end{tikzpicture}

    \vspace*{-1em}
    \caption{Structurally implied \acp{gnfd}}
    \label{fig:no-transform-struct-impl}
    \end{subfigure}
    \begin{subfigure}{0.49\columnwidth}
    \centering
    \footnotesize
    \begin{minipage}[t][2.2cm][b]{0.5\textwidth}
    \centering
    \begin{tikzpicture}[node distance=1em]
            \node[vertex style=Node,align=center,minimum height=0.4cm,minimum width=0.6cm] (1) {$x_1$};
\node[text style,below of=1,align=center,yshift=-0.6em] (1_props1) {$k$};

\node[text style,right of=1,align=center,yshift=0.7em,xshift=2em] (2_id) {$y$};
 
\draw [->, bend left=50,color=gray, align=center] (1_props1.west) to  (1.west);

\node[vertex style=Node, right of=1,align=center,minimum height=0.4cm,minimum width=0.6cm,xshift=6em] (3) {$x_2$}
edge [<-,align=center]  (1);

\end{tikzpicture}
    \begin{tikzpicture}[node distance=1em]
            \node[vertex style=Node,align=center,minimum height=0.4cm,minimum width=0.6cm] (1) {$x_1$};

\node[text style,right of=1,align=center,yshift=0.7em,xshift=2em] (2_id) {$y$};
\node[text style,right of=1,align=center,yshift=-0.7em,xshift=2em] (2_props1) {$k$};
 
\draw [->, bend left=50,color=gray, align=center] (2_props1.west) to  (2_id.west);

\node[vertex style=Node, right of=1,align=center,minimum height=0.4cm,minimum width=0.6cm,xshift=6em] (3) {$x_2$}
edge [<-,align=center]  (1);

\end{tikzpicture}
    \end{minipage}
    \vspace*{-1.5em}
    \caption{Superkey \acp{gnfd}}
    \label{fig:no-transform-key}
    \end{subfigure}
    \begin{subfigure}{0.49\columnwidth}
    \centering
    \footnotesize
    \begin{minipage}[t][2.2cm][b]{0.5\textwidth}
    \centering
    \begin{tikzpicture}[node distance=1em]
            \node[vertex style=Node,align=center,minimum height=0.4cm,minimum width=0.6cm] (1) {$x_1$};

\node[text style,right of=1,align=center,yshift=0.7em,xshift=2em] (2_id) {$y$};
 
\draw [->, bend left=50,color=gray, align=center] (1.north east) to  (2_id.north);

\node[vertex style=Node, right of=1,align=center,minimum height=0.4cm,minimum width=0.6cm,xshift=6em] (3) {$x_2$}
edge [<-,align=center]  (1);

\end{tikzpicture}
    
    \begin{tikzpicture}[node distance=1em]
            \node[vertex style=Node,align=center,minimum height=0.4cm,minimum width=0.6cm] (1) {$x_1$};
\node[text style,below of=1,align=center,yshift=-0.6em] (1_props1) {$k$};

\node[text style,right of=1,align=center,yshift=0.7em,xshift=2em] (2_id) {$y$};
 
\draw [->, bend right=20,color=gray, align=center] (1_props1.east) to  (2_id.west);

\node[vertex style=Node, right of=1,align=center,minimum height=0.4cm,minimum width=0.6cm,xshift=6em] (3) {$x_2$}
edge [<-,align=center]  (1);

\end{tikzpicture}
    \end{minipage}
    \vspace*{-1.5em}
    \caption{Cardinality-limiting \acp{gnfd}}
    \label{fig:no-transform-cardinality}
    \end{subfigure}
    \vspace*{-1em}
    \caption{\acp{gnfd} of different shapes for which no redundancies can occur. \ac{gnfd} descriptors are drawn in \textcolor{gray}{gray}. \changed{For simplicity, only one direction of edges is shown.}}
    \label{fig:dep_no_redundancy}
\end{figure}

To aid our discussion of redundancy, 
we discuss the \emph{shape} of a \ac{gnfd}:
the shape is determined by inspecting the \emph{descriptor} $X\determ Y$.
Specifically, our discussion of redundancy relies on the \changed{types} of variables that are present in $X$ and $Y$ (recall that variables are either identity- or property variables, cf. Section~\ref{sec:preliminaries}).

First, \changed{as shown in \Cref{fig:dep_no_redundancy}},
we identify three groups of \acp{gnfd} for which no redundancies occur.\footnote{While only one property key $k$ is depicted in these dependencies, this is only a graphical simplification: generally, these can be sets of property keys.} 
Each group contains a number of dependencies of different shapes, using the graphical notation. 
The source and target of the gray arrow indicate the shape. \changed{For simplicity, only one direction is shown for the edges. Still, the same classes also apply to the inverse direction.}

\begin{theorem}
    No redundancies occur under dependencies \changed{that follow the shapes shown} in \Cref{fig:dep_no_redundancy}.
\end{theorem}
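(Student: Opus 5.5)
The plan is to make ``redundancy'' precise in the relational sense used in \Cref{ex:redundancy} and then treat the three groups of \Cref{fig:dep_no_redundancy} one at a time. Take a strict \ac{gnfd} $\varphi = Q::X\determ Y$ that is not trivial and a graph $G\sat\varphi$. I will say $\varphi$ causes redundancy in $G$ if there are two distinct tuples $\tup_1,\tup_2\in\sem QG$ that agree on $X\cup Y$, and the stored values that produce $\tup_i(Y)$ come from two different graph objects. In that situation the same fact $X\mapsto Y$ is physically stored twice; this is exactly the pattern of edges \texttt{e4} and \texttt{e5}. The theorem then reduces to showing, for each shape, that whenever two tuples agree on $X$, the $Y$-values they carry are read from one and the same object (or are not stored at all). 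If that holds, the fact is represented only once.

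\emph{Structurally implied shapes} (\Cref{fig:dep_no_redundancy}a). These are descriptors $o\determ o.k$ and $y\determ x_1$ (and $y\determ x_2$), and they hold by the structural axioms Object Identity and Edge Endpoint. Take two tuples that agree on the identity variable $o$ (resp.\ $y$). The value of $o.k$ is $\funcProperty(o,k)$, a single stored entry of a single object. Likewise the endpoint is $\funcDirectedEdgeSource(y)$, a single entry of the function $\funcDirectedEdgeSource$. So the $Y$-value is stored exactly once. The apparent repetition across tuples only arises because $Q$ joins an object with several neighbours; it is an artefact of the pattern's evaluation, not a second stored copy.

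\emph{Superkey shapes} (\Cref{fig:dep_no_redundancy}b). Here $X=\{x.k\}$ and $Y=\{x\}$ for a single object variable $x$, and strictness ensures no other object is involved. Suppose two tuples agree on $x.k$. By $\varphi$ they agree on $x$, hence they denote the same object, and by Object Identity (via Transitivity) they agree on all of $\attr$ restricted to $x$. Projected onto the variables of $x$, the two tuples therefore come from one object. This mirrors the relational fact that an FD whose left-hand side is a superkey causes no redundancy, which is the \ac{bcnf} condition.

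\emph{Cardinality-limiting shapes} (\Cref{fig:dep_no_redundancy}c). The descriptors are $x_1\determ y$ and $x_1.k\determ y$. Composing with Edge Endpoint via Transitivity, the determined side $y$ is an object identity, not a stored value. The dependency therefore only restricts how many incident edges each node (or each $k$-value class) may have, and no property value is duplicated by it. Any two tuples agreeing on $X$ share the same edge $y$, and thus share its single stored identity.

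I expect the main obstacle to be the first step: giving a definition of redundancy that is neither too strong nor too weak. It must rule out the ``repetitions'' that are mere artefacts of joining in $\sem QG$, for example in the structural case, while still capturing the \texttt{e4}/\texttt{e5} case. I would first try to count distinct (object, key) storage locations supplying the $Y$-values. If that proves awkward, I would fall back on a Vincent-style definition: a value is redundant if changing it in one location necessarily violates $\varphi$.
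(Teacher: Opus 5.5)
Your three cases match the paper's own informal proof: one source and target per edge and one value per key for group (a), a superkey forcing the same object for (b), and at most one matching outgoing edge for (c). What you add is a formal definition of redundancy, and it fails at exactly the point you flagged.

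\textbf{Where the definition fails.} It only asks where $\mu_i(Y)$ is read from, and when $Y$ is an object identity you treat it as read from the object itself, or as ``not stored at all''. Any adequate definition must at least make the patterns of \Cref{tab:transformation-patterns} redundant, and yours does not.
\begin{itemize}
\item Take $\psi_{\text{between-ep-n}}$, i.e.\ $y.k_1\Rightarrow x_1$, and two edges $e_1,e_2$ out of a node $n$ with the same $k_1$-value. The two tuples agree on $X\cup Y$, and both $Y$-values are $n$, ``read from one object'', so nothing is flagged. Yet $\mathsf{src}(e_1)=n$ is forced by $e_1.k_1$ together with $e_2$.
\item If you instead attribute $x_1$ to the edge's $\mathsf{src}$-entry, the other reading breaks. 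The left superkey shape of group (b), whose scope contains an edge, is then flagged by the tuples $(n,e_1)$ and $(n,e_2)$, contradicting the theorem.
\item Your justification in (c), ``$y$ is an object identity, not a stored value, so no property value is duplicated'', shows the problem. It applies verbatim to $y.k_1\Rightarrow x_1$, so it cannot be what makes (c) safe.
\end{itemize}

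\textbf{How to repair it.} For each tuple, record all storage cells needed to produce its $X\cup Y$-values, including the links between their objects: the property entries $(o,k)$ and the endpoint entries $\mathsf{src}(e)$, $\mathsf{tgt}(e)$. Call $\varphi$ redundant in $G$ when two tuples agree on $X\cup Y$ but use different cells. Your Vincent-style fallback also works. With this definition, $\psi_{\text{between-ep-n}}$ is flagged through the distinct cells $\mathsf{src}(e_1)$ and $\mathsf{src}(e_2)$, and each case of the figure holds for the right reason:
\begin{itemize}
\item \emph{The scope's edge variable appears in $X\cup Y$.} Then agreement on $X\cup Y$ already forces $\mu_1=\mu_2$, because a single-edge scope yields one tuple per edge. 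This disposes of $y\Rightarrow x_1$, $y\Rightarrow y.k$, the edge key $y.k\Rightarrow y$, and both cardinality-limiting shapes. It also covers $y\Rightarrow x_1.k$, a shape of group (a) missing from your list. No appeal to Transitivity is needed.
\item \emph{The shapes $x\Rightarrow x.k$ and $x.k\Rightarrow x$.} The entire association lives in the single cell $(x,k)$, however many edges the scope joins to $x$.
\item \emph{Group (a) under the Vincent-style definition.} It is immediate, since structurally implied dependencies hold in every graph, so no update can violate them. This is what the paper means when it remarks that any change of the graph yields a graph satisfying similar dependencies.
\end{itemize}
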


\edbtarxiv{
\begin{proof}[Proof sketch]
Groups (a)--(c) cover patterns for which the same fact cannot appear twice: (a) follows from the structurally implied rules (each edge has unique source/target; each property has at most one value); (b) are superkey constraints, which, by definition, do not allow for repetition; (c) constrains the out-degree to one.
\end{proof}
\vspace{-0.9em}
}{
\begin{proof}[proof]
    Group (a) shows dependencies that are satisfied by every graph.
    They can be derived from the structurally implied dependencies in combination with the reasoning rules (cf. \Cref{sec:dependency}).
    \changed{For such patterns, it is not possible to express the same fact twice:}
    for any given edge, there is exactly one source and target. 
    Similarly, for any graph object, there can be at most one value for a property key.
    Furthermore, any change in the graph structure will result in a graph that satisfies similar dependencies. 
    Group (b) depicts dependencies that act as graph object (super)keys: 
    either for nodes, or for edges. 
    As for the relational setting,
    keys are the only kind of dependencies that are desired to be present in your data. 
    Concretely, if $k$ is a superkey for a certain node, 
    this cannot be represented twice due to the nature of superkeys.
    Finally, group (c) shows dependencies that limit the number of outgoing edges of a node to be at most 1. 
    By definition, such a fact cannot be stated twice.
\end{proof}
}

All other dependency shapes allow for redundancies to occur.
We call dependencies of such shapes \emph{redundancy patterns}. 
They are listed in the second column of Table~\ref{tab:transformation-patterns} ($T$). 

\begin{table}[]
    \centering
    \caption{Graphical representation of the graph transformations that resolve redundancy patterns. 
    The descriptor of a \acp{gnfd} is shown as a \textcolor{gray}{gray} arrow. New nodes are constructed on values $x$ matched in the redundancy pattern (shown using the $\newnode{x}$ notation), similarly, new labels are constructed (shown using the $\newlbl{x}$ notation).}
    \vspace{-1em}
    \label{tab:transformation-patterns}
    \footnotesize
    \begin{tabular}{@{~}l@{\enspace}L{2.7cm}@{\hspace{0.8em}}L{4.1cm}@{}}
    \toprule
    \textbf{\acs{id}} & \textbf{Redundancy pattern (\changed{$T$})} & \textbf{Resolved redundancy pattern (\changed{$T_\textsf{new}$} )} \\
    \midrule
        \multicolumn{3}{l}{\textit{Redundancies and transformations associated with within-graph object \acp{gnfd}}} \\
    \hline
    $\varInfomorphism_{\text{within-n}}$  & \begin{tikzpicture}[node distance=1em]
            \node[vertex style=Node,align=center,minimum height=0.4cm,minimum width=0.6cm] (1) {$x{:}L$};
\node[text style,below of=1,align=center,yshift=-0.6em] (1_props1) {$k_1$};
\node[text style,below of=1_props1,align=center] (1_props2) {$k_2$};

\draw [->, bend right=30,color=gray, align=center] (1_props1.west) to  (1_props2.west);

        \end{tikzpicture}  & \begin{tikzpicture}[node distance=1em]
            \node[vertex style=Node,align=center,minimum height=0.4cm,minimum width=0.6cm] (1) {$x{:}L$};

\node[vertex style=Node,align=center,minimum height=0.4cm,minimum width=0.6cm,right of=1,xshift=7em] (2) {$\newnode{Lk_1}{:}\newlbl{Lk_1}$} edge [<-,align=left] node[text style]{{$\newlbl{Lk_1}$}} (1);
\node[text style,below of=2,align=center,yshift=-0.9em] (2_props1) {$k_1$};
\node[text style,below of=2_props1,align=center] (2_props2) {$k_2$};

\draw [->, bend right=100,color=gray, align=center] (2_props1.east) to  (2.east);

        \end{tikzpicture} \\
    $\varInfomorphism_{\text{within-e}}$ & \begin{tikzpicture}[node distance=1em]
            \node[vertex style=Node,align=center,minimum height=0.4cm,minimum width=0.6cm] (1) {$x_1$};

\node[vertex style=Node, right of=1,align=center,minimum height=0.4cm,minimum width=0.6cm,xshift=4.5em] (3) {$x_2$}
edge [<-,align=center]  (1);

\node[text style,right of=1,align=center,yshift=0.5em,xshift=1.5em] (1_id) {$y:L$};
\node[text style,right of=1,align=center,yshift=-0.5em,xshift=1.5em] (1_props1) {$k_1$};
\node[text style,below of=1_props1,align=center] (1_props2) {$k_2$};

\draw [->, bend right=30,color=gray, align=center] (1_props1.west) to  (1_props2.west);

\end{tikzpicture}  & \begin{tikzpicture}[node distance=1em]

\node[vertex style=Node,align=center,minimum height=0.4cm,minimum width=0.6cm] (1) {$\newnode{y}:L$};

\node[vertex style=Node, left of=1,align=center,minimum height=0.4cm,minimum width=0.6cm,xshift=-6em] (0) {$x_1$}
edge [->,align=center]  (1);

\node[text style,right of=1,align=center,yshift=0.7em,xshift=3em] (2_id) {\newlbl{$y_\text{tgt}$}};
\node[text style,right of=0,align=center,yshift=0.7em,xshift=1.7em] (1_id) {\newlbl{$y_\text{src}$}};

\node[vertex style=Node, right of=1,align=center,minimum height=0.4cm,minimum width=0.6cm,xshift=6em] (3) {$x_2$}
edge [<-,align=center]  (1);

\node[vertex style=Node, below of=1,align=center,minimum height=0.4cm,minimum width=0.6cm,yshift=-3em] (4) {$\newnode{Lk_1}:\newlbl{Lk_1}$}
edge [<-,align=center]  (1);

\node[text style,below of=1,align=center,yshift=-0.9em,xshift=1.4em] (3_id) {\newlbl{$Lk_1$}};

\node[text style,right of=4,align=center,yshift=+0em,xshift=4em] (1_props1) {$k_1$};
\node[text style,below of=1_props1,align=center] (1_props2) {$k_2$};

\draw [->, bend right=50, color=gray, align=center] (1_props1.north) to  (4.north east);

\end{tikzpicture} \\
    \midrule
    \multicolumn{3}{l}{\textit{Redundancies and transformations associated with between-graph object \acp{gnfd}}} \\
    \hline
    $\varInfomorphism_{\text{between-n-ep}}$  & \begin{tikzpicture}[node distance=1em]
            \node[vertex style=Node,align=center,minimum height=0.4cm,minimum width=0.6cm] (1) {$x_1$};

\node[text style,right of=1,align=center,yshift=0.7em,xshift=1.5em] (2_id) {$y{:}L$};
\node[text style,right of=1,align=center,yshift=-0.7em,xshift=1.5em] (2_props1) {$k_1$};

\draw [->, bend right=20, color=gray, align=center] (1.south) to  (2_props1.west);

\node[vertex style=Node, right of=1,align=center,minimum height=0.4cm,minimum width=0.6cm,xshift=4.5em] (3) {$x_2$}
edge [<-,align=center]  (1);

        \end{tikzpicture} & \begin{tikzpicture}[node distance=1em]
            \node[vertex style=Node,align=center,minimum height=0.4cm,minimum width=0.6cm] (1) {$x_1$};
\node[text style,below of=1,align=center,yshift=-0.6em] (2_props2) {$k_1$};

\node[text style,right of=1,align=center,yshift=0.7em,xshift=2em] (2_id) {$y$};
\draw [->, bend right=30, color=gray, align=center] (1.west) to  (2_props2.west);

\node[vertex style=Node, right of=1,align=center,minimum height=0.4cm,minimum width=0.6cm,xshift=5em] (3) {$x_2$}
edge [<-,align=center]  (1);

        \end{tikzpicture}\\
    $\varInfomorphism_{\text{between-np-ep}}$  & \begin{tikzpicture}[node distance=1em]
            \node[vertex style=Node,align=center,minimum height=0.4cm,minimum width=0.6cm] (1) {$x_1{:}L$};
\node[text style,below of=1,align=center,yshift=-0.6em] (1_props1) {$k_1$};

\node[text style,right of=1,align=center,yshift=0.7em,xshift=2em] (2_id) {$y$};
\node[text style,right of=1,align=center,yshift=-0.7em,xshift=2em] (2_props1) {$k_2$};
\draw [->, bend right=10, color=gray, align=center] (1_props1.east) to  (2_props1.west);

\node[vertex style=Node, right of=1,align=center,minimum height=0.4cm,minimum width=0.6cm,xshift=4.5em] (3) {$x_2$}
edge [<-,align=center]  (1);

\end{tikzpicture} &    \begin{tikzpicture}[node distance=1em]
            \node[vertex style=Node,align=center,minimum height=0.4cm,minimum width=0.6cm] (1) {$x_1{:}L$};

\node[text style,right of=1,align=center,yshift=0.7em,xshift=2em] (2_id) {$y$};

\node[vertex style=Node, right of=1,align=center,minimum height=0.4cm,minimum width=0.6cm,xshift=5em] (3) {$x_2$}
edge [<-,align=center]  (1);

\node[vertex style=Node, below of=1,align=center,minimum height=0.4cm,minimum width=0.6cm,yshift=-3em] (4) {{$\newnode{Lk_1}{:}\newlbl{Lk_1}$}}
edge [<-,align=center]  (1);

\node[text style,below of=1,align=center,yshift=-0.9em,xshift=1.4em] (3_id) {$\newlbl{Lk_1}$};

\node[text style,right of=4,align=center,yshift=+0em,xshift=3.5em] (1_props1) {$k_1$};
\node[text style,below of=1_props1,align=center] (1_props2) {$k_2$};

\draw [->, bend right=50, color=gray, align=center] (1_props1.north) to  (4.north east);

\end{tikzpicture}\\[4em]
    $\varInfomorphism_{\text{between-ep-np}}$  & \begin{tikzpicture}[node distance=1em]
            \node[vertex style=Node,align=center,minimum height=0.4cm,minimum width=0.6cm] (1) {$x_1$};
\node[text style,below of=1,align=center,yshift=-0.6em] (1_props1) {$k_1$};

\node[text style,right of=1,align=center,yshift=0.7em,xshift=1.5em] (2_id) {$y{:}L$};
\node[text style,right of=1,align=center,yshift=-0.7em,xshift=1.5em] (2_props1) {$k_2$};
 
\draw [<-, bend right=10, color=gray, align=center] (1_props1.east) to  (2_props1.west);

\node[vertex style=Node, right of=1,align=center,minimum height=0.4cm,minimum width=0.6cm,xshift=4.5em] (3) {$x_2$}
edge [<-,align=center]  (1);

\end{tikzpicture} &  \begin{tikzpicture}[node distance=1em]

\node[vertex style=Node,align=center,minimum height=0.4cm,minimum width=0.6cm] (1) {$\newnode{y}{:}L$};

\node[vertex style=Node, left of=1,align=center,minimum height=0.4cm,minimum width=0.6cm,xshift=-6em] (0) {$x_1$}
edge [->,align=center]  (1);

\node[text style,right of=1,align=center,yshift=0.7em,xshift=3em] (2_id) {\newlbl{$ytgt$}};
\node[text style,right of=0,align=center,yshift=0.7em,xshift=1.7em] (1_id) {\newlbl{$ysrc$}};

\node[vertex style=Node, right of=1,align=center,minimum height=0.4cm,minimum width=0.6cm,xshift=6em] (3) {$x_2$}
edge [<-,align=center]  (1);

\node[vertex style=Node, below of=1,align=center,minimum height=0.4cm,minimum width=0.6cm,yshift=-3em] (4) {$\newnode{yk_2}{:}\newlbl{yk_2}$}
edge [<-,align=center]  (1);

\node[text style,below of=1,align=center,yshift=-0.9em,xshift=1.4em] (3_id) {\newlbl{$yk_2$}};

\node[text style,right of=4,align=center,yshift=+0em,xshift=3.5em] (1_props1) {$k_2$};
\node[text style,below of=1_props1,align=center] (1_props2) {$k_1$};

\draw [->, bend right=50, color=gray, align=center] (1_props1.north) to  (4.north east);

\end{tikzpicture}\\[4em]
    $\varInfomorphism_{\text{between-ep-n}}$  & \begin{tikzpicture}[node distance=1em]
            \node[vertex style=Node,align=center,minimum height=0.4cm,minimum width=0.6cm] (1) {$x_1$};

\node[text style,right of=1,align=center,yshift=0.7em,xshift=1.5em] (2_id) {$y{:}L$};
\node[text style,right of=1,align=center,yshift=-0.7em,xshift=1.5em] (2_props1) {$k_1$};
\draw [<-, bend right=20, color=gray, align=center] (1.south) to  (2_props1.west);

\node[vertex style=Node, right of=1,align=center,minimum height=0.4cm,minimum width=0.6cm,xshift=4.5em] (3) {$x_2$}
edge [<-,align=center]  (1);

\end{tikzpicture} & \begin{tikzpicture}[node distance=1em]

\node[vertex style=Node,align=center,minimum height=0.4cm,minimum width=0.6cm] (1) {$\newnode{y}{:}L$};

\node[vertex style=Node, left of=1,align=center,minimum height=0.4cm,minimum width=0.6cm,xshift=-6em] (0) {$x_1$}
edge [->,align=center]  (1);

\node[text style,right of=1,align=center,yshift=0.7em,xshift=3em] (2_id) {\newlbl{$ytgt$}};
\node[text style,right of=0,align=center,yshift=0.7em,xshift=1.7em] (1_id) {\newlbl{$ysrc$}};

\node[vertex style=Node, right of=1,align=center,minimum height=0.4cm,minimum width=0.6cm,xshift=6em] (3) {$x_2$}
edge [<-,align=center]  (1);

\node[vertex style=Node, below of=1,align=center,minimum height=0.4cm,minimum width=0.6cm,yshift=-3em] (4) {$\newnode{yk_2}{:}\newlbl{yk_2}$}
edge [<-,align=center]  (1);

\node[text style,below of=1,align=center,yshift=-0.9em,xshift=1.4em] (3_id) {\newlbl{$yk_2$}};

\node[text style,right of=4,align=center,yshift=+0em,xshift=3.5em] (1_props1) {$k_1$};

\draw [->, bend right=50, color=gray, align=center] (1_props1.north) to  (4.north east);

\end{tikzpicture} \\
    \bottomrule
\end{tabular}
\end{table}
\begin{table*}[]
    \centering
    \footnotesize
    \caption{Rationale for the redundancy removing transformations}
    \label{tab:trnasform-justifications}
    \vspace*{-1em}
      \begin{tabular}{l@{\enspace}L{2.5cm}@{\enspace}L{2.4cm}@{\enspace}L{10.6cm}}
    \toprule
     \textbf{\acs{id}}  & \textbf{Relation schema of the redundancy pattern} & \textbf{Decomposed relation schema} & \textbf{Transformation rationale} \\
     \midrule
              \multicolumn{4}{l}{\textit{Rationales for within-\ac{gnfd} transformations }}  \\
         \hline
$\varInfomorphism_{\text{within-n}}$ & $\{x,x.k_1,x.k_2\}$
& 
$R_1=\{x,x.k_1\}$ 
\newline\vspace{0.5em}
$R_2=\{x.k_1,x.k_2\}$ 
\newline\vspace{0.5em} 
$x.k_1$ in $R_1$ is a \ac{fk}. 
&
$R_1$ has one referencing attribute: the node $x$ itself. We conclude that $R_1$ represents node $x$. $R_2$ has no referencing attributes, and therefore, it represents a new node $n$. Since $R_1$ has a \ac{fk} to $R_2$, we materialize it as an edge between $x$ and $n$. Since we no longer need the FK, $k_1$ needs not be stored in node $x$. \\

$\varInfomorphism_{\text{within-e}}$   & $\{x_1,x_2,y,y.k_1,y.k_2\}$ & 
$R_1=\{x_1,x_2,y,y.k_1\}$ 
\newline\vspace{0.5em}
$R_2=\{y.k_1,y.k_2\}$ 
\newline\vspace{0.5em} 
$y.k_1$ in $R_1$ is a \ac{fk}.
& $R_1$ has three referencing attributes: edge $y$, its source and target. We conclude that $R_1$ represents edge $y$. $R_2$ has no referencing attributes, and therefore, it represents a new node $n$. Since $R_1$ has a FK to $R_2$, we want to materialize it as an edge. Therefore, we reify $y$, with a reifier node $n'$. Then, we create the edge between $n'$ and $n$. Since we no longer need the FK, $k_1$ needs not be stored in node $n'$. \\
\hline
         \multicolumn{4}{l}
         {\textit{Rationales for between-\ac{gnfd} transformations}} \\
         \hline

      $\varInfomorphism_{\text{between-n-ep}}$&
      $\{x_1, x_2, y,y.k_1\}$ & $R_1=\{x_1, x_2, y\}$ \newline\vspace{0.5em}
      $R_2=\{x_1,y.k_1\}$\newline\vspace{0.5em}
      $x_1$ in $R_1$ is a \ac{fk}. & 
      $R_1$ has three referencing attributes: edge $y$, and its source and target. We conclude that $R_1$ re\-presents edge $y$. $R_2$ has one referencing attribute $x_1$, and therefore, we conclude it represents node $x_1$. Additionally, $R_2$ contains a new property key, which represents moving property key $k_1$ from $y$ to $x_1$.
      \\
$\varInfomorphism_{\text{between-np-ep}}$& $\{x_1, x_2, y,x.k_1,y.k_2\}$ & 
$R_1=\{x_1, x_2, y,x.k_1\}$ 
\newline\vspace{0.5em}
$R_2=\{x.k_1,y.k_2\}$
\newline\vspace{0.5em}
      $x.k_1$ is a \ac{fk}. &
      $R_1$ has three referencing attributes: edge $y$, its source and target. We conclude that $R_1$ represents edge $y$.
      $R_2$ has no referencing attributes, and therefore, it represents a new node $n$. Since $R_1$ has a FK using a property from $x_1$, we materialize it as an edge between $x_1$ and $n$. Since we no longer need the FK, $k_1$ needs not be stored in node $x_1$.\\
      
$\varInfomorphism_{\text{between-ep-np}}$& $\{x_1, x_2, y,x.k_1,y.k_2\}$ &
    $R_1=\{x_1, x_2, y,y.k_2\}$ \newline\vspace{0.5em}
      $R_2=\{y.k_2,x.k_1\}$\newline\vspace{0.5em}
      $X.x_1$ is a \ac{fk}. &
      $R_1$ has three referencing attributes: edge $y$, its source and target. We conclude $R_1$ represents the edge $y$. 
      $R_2$ has no referencing attributes, therefore, it represents a new node $n$. 
      Since $R_1$ has a FK using a property key from $y$, we want to materialize it as an edge. Therefore, we reify $y$, with a reifier node $n'$. Then, we create the edge between $n'$ and $n$. Since we no longer need the FK, $k_1$ needs not be stored in node $n'$.\\

$\varInfomorphism_{\text{between-ep-n}}$& $\{x_1,x_2,y,y.k_1\}$ &$
    R_1=\{x_1,x_2,y,y.k_1\}$ \newline\vspace{0.5em}
      $R_2=\{x_1,y.k_1\}$\newline\vspace{0.5em}
      $y.k_1$ is a \ac{fk}. & $R_1$ has three referencing attributes: edge $y$, and its source and target. We conclude that $R_1$ represents edge $y$. $R_2$ has one referencing attribute $x_1$, however, since it needs to be referenced from the edge $y$, an edge will be materialized between the reified node $n'$ and a new node, specially for $k_1$. \\
    \bottomrule
    \end{tabular}    

\end{table*}

\subsection{Redundancy Removing Transformations}
Since our objective is to eliminate redundant graph data, 
we propose a general technique that, 
for each possible redundancy pattern,
eliminates the redundancy by specifying a specific graph transformation (cf. \Cref{sec:preliminaries}). 
\changed{This is similar to \ac{er} normalization~\cite{thalheim_entity_2000}, where also transformations are employed to avoid modeling that allows for redundancies. }
The transformations are shown in \Cref{tab:transformation-patterns}, which depicts the redundancy pattern ($T$), as well as the resolved pattern ($T_\mathsf{new}$). 
Gray arrows denote the dependencies present in the patterns.
For example, the transformation $\varInfomorphism_\text{between-n-ep}$ applies to \acp{gnfd} of the form $(\pobject {x_1}{\emptyset}{\emptyset})\pedgeRight{\pobject{y}{L}{\{k_1,\dots\}}}()::x_1\determ y.k_1$.
Concretely, this covers, e.g., dependency \hyperref[dep:book]{\changed{$\varDep_\text{Book}$}} (cf. \Cref{sec:preliminaries}).

Generally, \changed{the discussed} transformations follow a principled approach: 
elements mentioned in the descriptor \changed{$X\determ Y$} of a \ac{gnfd} belong together. 
More specifically, they are bundled together into a new node $n_\mathit{new}$ \changed{that is identified by $X$.
This %
node $n_\mathit{new}$ additionally must be reachable from the graph object $o$ of which $X$ was initially a part of}. 
If $o$ is a node, an edge is created to reference $n_\mathit{new}$ from $o$.
However, when $o$ is an edge, a different approach must be used. %

For descriptors where the left-hand side consists of variables originating from an edge, we employ a \emph{reification pattern}:
the edge is replaced by a reifier node $n_\mathit{reif}$, 
which has an edge to the newly created node $n_\mathit{new}$. 
These strategies are \edbtarxiv{}{succinctly }summarized as graph transformations in Table~\ref{tab:transformation-patterns}. 
We motivate this approach with an example.

\begin{example}
\label{ex:redtrans}
    Continuing Example~\ref{ex:redundancy}, we start by observing that relation $R$ resulting from the scope of \changed{$\varphi_\text{GroupNo}$,
    as shown in \Cref{fig:red-in-rel},} has three references to graph objects: 
    edge $x$, its source and its target. 
    This is characteristic for edges. 
    Nodes, on the other hand, never refer to any other graph object. 

    We show that the redundant information discussed in Example~\ref{ex:redundancy} is eliminated by executing the transformation $\varInfomorphism_\text{within-e}$ from Table~\ref{tab:transformation-patterns}.
    In the relational setting, we would decompose $R$ into the two relations $R_\mathit{Edge}$ and $R_\mathit{GroupNo}$ shown in \Cref{fig:decomposed_sketch}.
    However, it is not clear from the outset how these two relations correspond to graph objects. 
    Looking at $R_\mathit{GroupNo}$, it refers to no graph objects, which means it is a node in principle.
    On the other hand, $R_\mathit{Edge}$ refers to \emph{three} objects (the edge itself, source, and target), which corresponds to the edge $x$ itself. Since $R_\mathit{Edge}$ additionally also has a foreign key to $R_\mathit{GroupNo}$, we would like to capture this relationship by introducing an edge. However, we cannot create an edge where one of the endpoints is an edge itself.
    We argue that this must be resolved by introducing a \emph{reifier} node ($\newnode{y}$ in transformation $\varInfomorphism_\text{within-e}$) that can reference nodes from $R_{\mathit{GroupNo}}$ using an edge.
    Furthermore, since we consider elements of $R_{\mathit{GroupNo}}$ to be nodes, they will need an identifier: 
    in  $\varInfomorphism_\text{within-e}$ this identifier is constructed based on the labels and the property keys.
    Thus, we create an identifier based on the label \textsf{inGroupWith} and property key \textsf{groupNo}.
    Additionally, a suitable label $l$ is constructed for this node based on the same label and property. 
    \changed{The result after executing} this transformation is shown in \Cref{fig:transformed_sketch}.
    Notably, while in relational decomposition $R_\mathit{GroupNo}$ the group number is a key, the transformed pattern represents this by adding an explicit key constraint, specifically for nodes with label $l$ where the group number acts as a key for such nodes: $(\pobject{x}{l}{\{\mathsf{groupNo},\mathsf{name}\}})::\mathsf{groupNo}\determ x$.
\end{example}

We generalize the arguments from Example~\ref{ex:redtrans} in Table~\ref{tab:trnasform-justifications}.
Essentially, for each entry in \Cref{tab:transformation-patterns}, we look at the relation schema $R$ induced by the redundancy pattern.
\changed{In particular,} we are interested in the number of attributes that reference graph objects. 
We call such attributes \emph{referencing attributes}.
Indeed, when the pattern is a node, there will be exactly one such attribute.
Otherwise, if the pattern has an edge,
there are three: the source, target, and edge identity. 
Given a descriptor $X\determ Y$ (indicated by the gray arrows in Table~\ref{tab:transformation-patterns}), 
we decompose $R$ into two new relation schemas $R_1$ and $R_2$, with $\attr(R_1) = \attr(R) - Y$ and $\attr(R_2) = X\cup Y$. 
Depending on the number of referencing attributes,
we interpret the effect of the transformation\changed{, as described in the column ``transformation rationale'' of \Cref{tab:trnasform-justifications}}.

\changed{The redundancy removing transformations are both lossless and dependency preserving. Table~\ref{tab:trnasform-justifications} shows the equivalent relational decomposition for which we know these properties hold. Indeed, because the join of the decomposed relations schemas $R_1$ and $R_2$ results again in the relation schema of the redundancy pattern, we can say that the transformations are lossless. Furthermore, Table~\ref{tab:transformation-patterns} specifies exactly in what way the original dependency is preserved (indicated by the grey arrow in the ``Resolved redundancy pattern'' column). Specifically, the left-hand side of the descriptor of the original dependency now acts as a key of the newly created nodes. To reconstruct this original dependency from the transformed graph, the (structurally implied) Object Identity reasoning rule needs to be applied.}

Expressing the elimination of a redundancy pattern as a graph transformation provides insight into the elimination process itself.
When new nodes are created in this process, they may need to be merged together. 
Recall, that the transformations introduce new identifiers based on existing values on the graph data. 
Indeed, we want these new identities to overlap in a controlled manner: this expresses the merging of nodes that express the same information.

\begin{example}
    \Cref{fig:transformed_sketch} shows the subgraph \changed{of \Cref{fig:red-in-graphs}} in which redundancy has been eliminated. The yellow highlight shows the application of the transformation on edge \texttt{e4}, and the blue highlight the transformation of \texttt{e5}. Since the identity of the newly created node \texttt{o1} (representing the fact that group 1 is called ``Heroes'') is the same for edge \texttt{e4} and \texttt{e5}, the reification nodes both point to \texttt{o1}.
\end{example}

\subsection{Graph-Native Normal Forms}
While each individual redundancy pattern can be resolved with its associated transformation,
our goal is to mitigate all redundancy patterns described above. In this section, we define exactly under which conditions these redundancy patterns are mitigated. 
We call these the \acfp{gnnf}. 

In order to define the \acp{gnnf}, we leverage the ideas of relational normal forms (cf. \Cref{tab:rel-norm-form-characteristics}, \cite{alice,saake_datenbanken_2018}), and
\ac{er} diagram normalization (cf. \Cref{sec:related-work}, \cite{thalheim_entity_2000}). Recall the characteristics from the relational normal forms (cf. Table~\ref{tab:rel-norm-form-characteristics}). We adopt the definitions of the first two graph-native normal forms, GN-1NF and GN-2NF, directly from their relational counterparts. We formulate them as:
\begin{definition}[GN-1NF]
    \changed{A graph $\varGraph$ is in \emph{GN-1NF}} if all property values are atomic, i.e., $\forall c_1,c_2\in \infSetPropValues: c_1\notin c_2$. 
\end{definition}
Indeed, GN-1NF is independent of the dependencies that are present. While -- just as in many relational database systems -- non-atomic values are supported in graph database systems, e.g., in form of a list datatype, these are typically not considered in classical approaches to normalization. Since our definition of \acp{lpg} does not consider non-atomic values, all \acp{lpg} satisfy GN-1NF by definition. 

\begin{definition}[GN-2NF]
\changed{A graph $G$ is in \emph{GN-2NF} with respect to a GN-Schema $\Sigma$} if every property mentioned in the schema is fully dependent on a key.
\end{definition}
Recall the structurally implied axioms (cf. Section~\ref{sec:dependency}). Indeed, in any graph, for every property $k$ of graph object $o$, there is a trivial key: the object $o$ itself. Therefore, we conclude that every GN-Schema vacuously satisfies GN-2NF.

To define GN-3NF and GN-BCNF, we first introduce $Q$-GN-3NF and $Q$-GN-BCNF. We call these \emph{scoped} normal forms, in line with our \ac{gnfd} terminology, since they are defined relative to a graph pattern $Q$ (the scope).

\begin{definition}[$Q$-GN-3NF]
 \changed{A graph $G$ is in \emph{$Q$-GN-3NF}} with respect to a GN-Schema $\Sigma$ if whenever $\Sigma\models Q::X\determ Y$, then either: (1) $X$ is a superkey for $\attr(Q)$, (2) $Y$ is prime, i.e., it is a subset of a \changed{key}, or (3) $Q::X\determ Y$ is trivial. 
\end{definition}

Similarly, we define $Q$-GN-BCNF as follows:
\begin{definition}[$Q$-GN-BCNF]
A graph $G$ is in \emph{$Q$-GN-BCNF} with respect to a GN-Schema $\Sigma$ if whenever $\Sigma\models Q::X\determ Y$, then $X$ is a superkey for $\attr(Q)$, or $Q::X\determ Y$ is trivial. 
\end{definition}

These are direct adaptations of the well-known relational normal forms to the setting with \acp{lpg} and \acp{gnfd}. Furthermore, they generalize the $L{:}P$ normal forms of~\cite{skavantzos_third_2025}, %
\changed{which use node patterns for $Q$ and strictly within-node dependencies in $\Sigma$.}

Ultimately, we want to define a normal form for a GN-Schema without specifying a particular scope $Q$. 
In particular, \changed{we want a graph with a GN-Schema $\setOfDeps$} to be in $Q$-GN-3NF or $Q$-GN-BCNF for every $Q$ \changed{present in the dependencies of $\setOfDeps$.}
\begin{definition}[GN-3NF and GN-BCNF] \label{def:gnnf}
\changed{A graph $G$ is in GN-BCNF (resp. 3NF)} with respect to a GN-Schema $\setOfDeps$ if, for every $Q$ \changed{present in the dependencies of $\setOfDeps$}, it is in $Q$-GN-BCNF (resp. 3NF).
\end{definition}
Transforming a graph $G$ that satisfies a schema $\Sigma$ into a graph~$G'$ s.t. it satisfies $\Sigma'$, which is in GN-BCNF (or 3NF), is the process of \emph{graph-native normalization}. We say a graph (schema) is in \acf{gnnf} if it is either in GN-3NF or GN-BCNF.

\section{Graph-Native Normalization}
\label{sec:normalization}
Considering \acp{gnfd} and \acp{gnnf}, 
we are ready to define a graph-native approach for the normalization of \acp{lpg}.
This normalization process considers not only within-node but also within-edge and, generally, between-graph-object dependencies. 
\changed{While Section~\ref{sec:native-property-graph-normal-forms} addressed each redundancy pattern individually, this section eliminates them collectively for a given schema, transforming any graph into \ac{gnnf}. We first define normalization with respect to a single scope $Q$ (\Cref{alg:gn-norm}), and then extend it to full normalization (\Cref{alg:gn-norm-full}).}
\vspace{-0.5em}
\changed{\subsection{Scoped Normalization}}

We will start with the normalization process to transform a given graph $\varGraph$ and GN-schema (i.e., a set of dependencies) such that, 
for a given scope $Q$, it satisfies $Q$-GN-3NF, and may satisfy $Q$-GN-BCNF.

The process is inspired by the \emph{synthesis} algorithm of relational normalization~\cite{alice}, \changed{which always considers all dependencies. An alternative would be the decomposition algorithm from relational normalization~\cite{alice}, which guarantees \ac{bcnf} but is not always dependency-preserving and could thus leave redundancies in normalized graphs. The process consists of four phases, shown in \Cref{alg:gn-norm}.}

\edbtarxiv{

\begin{algorithm}[tbp]
\small
    \SetKwFunction{MinCov}{MinCov}
    \SetKwFunction{List}{List}
    \SetKwFunction{CreateDependencies}{CreateDependencies}
    \SetKwFunction{Transform}{Transform}
    \SetKwInOut{Input}{input}\SetKwInOut{Output}{output}
    \SetKwComment{Comment}{\# }{}

    \textbf{Input:} a scope $Q$; a schema $\setOfDeps$; and an \ac{lpg} $\varGraph$

    \medskip

    \textcolor{gray}{\textit{Phase 1: Determine the dependencies relevant to $Q$}}\\    
    $\setOfDeps_\varQueryPattern := \{\varQueryPattern::D \mid \text{$\varQueryPattern'::D\in\Sigma$ and $\moregeneralthan{\varQueryPattern'}{\varQueryPattern}$}\}$

    \medskip
    \textcolor{gray}{\textit{Phase 2: Compute the minimal cover}}\\
    $\setOfDeps_\varQueryPattern^\mathsf{min} := \MinCov(\Sigma_\varQueryPattern)$

    \medskip
    \textcolor{gray}{\textit{Phase 3: Construct transformations and new dependencies}}\\
    $\setOfTransfs := \{(\varQueryPattern,T_\mathit{new})\mid \text{$(T,T_\mathit{new})\in\setOfRedTransfs$ s.t. $T$ matches $\varDep\in\Sigma_\varQueryPattern^\mathsf{min}$}\}$
    
    $\setOfDeps_\varQueryPattern^\mathsf{new} := \CreateDependencies(\setOfTransfs)$

    \medskip
    \textcolor{gray}{\textit{Phase 4: Apply graph transformations}}\\

    $\varGraph := \Transform(\varGraph,\setOfTransfs)$

    \medskip
    \textbf{Result: } $\setOfDeps_\varQueryPattern^\mathsf{new}$ is the schema and $\varGraph$ is normalized w.r.t. $\varQueryPattern$ 

\caption{The transformation-based approach to graph-native normalization based on the synthesis algorithm}
\label{alg:gn-norm}
\end{algorithm}%

}{
\begin{algorithm}[tbp]
    \SetKwFunction{MinCov}{MinCov}
    \SetKwFunction{List}{List}
    \SetKwFunction{CreateDependencies}{CreateDependencies}
    \SetKwFunction{Transform}{Transform}
    \SetKwInOut{Input}{input}\SetKwInOut{Output}{output}
    \SetKwComment{Comment}{\# }{}

    \textbf{Input:} a scope $Q$; a GN-schema $\setOfDeps$; and an \ac{lpg} $\varGraph$

    \medskip

    \textcolor{gray}{\emph{Phase 1: Determine the set of dependencies $\setOfDeps_\varQueryPattern$ relevant to $Q$}}\\    
    $\setOfDeps_\varQueryPattern := \emptyset$

    \For{$\varphi: Q' ::X\determ Y \in\Sigma$}{
    \If{$\moregeneralthan{Q'}{Q}$}{
        $\setOfDeps_\varQueryPattern := \setOfDeps_\varQueryPattern \cup \{\varphi\}$
        }
    }
    
    \medskip
    \textcolor{gray}{\emph{Phase 2: Apply \emph{Restriction} and compute the minimal cover}}\\
    $\setOfDeps_\varQueryPattern^R  := \emptyset$

    \For{$\varphi: Q' ::X\determ Y \in\Sigma_Q$}{
            $\setOfDeps_\varQueryPattern^R := \setOfDeps_\varQueryPattern^R  \cup \{Q::X\determ Y\}$
    }

    $\setOfDeps_\varQueryPattern^\mathsf{min} := \emptyset$
    
    \For{$\varphi: Q' ::X\determ Y \in\Sigma_Q^R$}{
               \If{$\varDep$ is minimal and not redundant}{
        $\setOfDeps_\varQueryPattern^\mathsf{min} := \setOfDeps_\varQueryPattern^\mathsf{min} \cup \{\varphi\}$
        }
    }

    \medskip
    \textcolor{gray}{\emph{Phase 3: Construct transformations and new dependencies}}\\
    $\setOfTransfs := \emptyset$\\
    $\setOfDeps_\varQueryPattern^\mathsf{new} := \emptyset$

    \For{$\varphi: Q ::X\determ YZ \in \setOfDeps_\varQueryPattern^\mathsf{min}$}{
        \For{$Y' \subseteq YZ$}{
            \For{$(T, T_\textsf{new}) \in \setOfRedTransfs$}{
                \If{$X\determ Y' ~\hat{=}~ \psi$}{
                    $\setOfTransfs := \setOfTransfs \cup \big\{(Q, T_\textsf{new})\big\}$
                    
                    \If{$T \neq \psi_\text{between-n-ep}$}{
                        $\setOfDeps_\varQueryPattern^\mathsf{new} := \setOfDeps_\varQueryPattern^\mathsf{new} \cup T_\textsf{new} :: X^T \determ n$
                    }
                }
            }
        }
    }
    
    \medskip
    \textcolor{gray}{\emph{Phase 4: Apply graph transformations}}\\
    \For{$\psi \in \setOfTransfs$}{
        \texttt{create}$(\psi,G)$ 
        
        \textcolor{gray}{\emph{\% Create new graph objects and set properties as defined by $\psi$}}
    }
    \For{$\psi \in \setOfTransfs$}{
        \texttt{remove}$(\psi,G)$ 
        
        \textcolor{gray}{\emph{\% Remove properties and graph objects as defined by $\psi$}}
    }

    \medskip
    \textbf{Result: } $\setOfDeps_\varQueryPattern^\mathsf{new}$ is the schema and $\varGraph$ is normalized w.r.t. $\varQueryPattern$ 

\caption{\textsc{ScopedLPGNormalize}\\The transformation-based approach to graph-native normalization based on the synthesis algorithm}
\label{alg:gn-norm}
\end{algorithm}
}%

\paragraph{Phase 1: Determining Applicable Dependencies}
\changed{Since we normalize with respect to a particular scope $Q$, we need to determine
which dependencies in $\Sigma$ apply to that scope so that we can properly compute the minimal cover in the next phase. 
We know that a graph $\varGraph$ satisfies a GN-Schema $\setOfDeps$, i.e., $\varGraph\sat \Sigma$. 
Hence,
by the definition of \acp{gnfd}, the evaluation of the scope pattern $Q$ in graph $G$, i.e., $\sem QG$, satisfies every functional dependency $X\determ Y$ of a \ac{gnfd} $Q::X\determ Y$ that is present in $\Sigma$.
However, these are not the only dependencies $\sem QG$ satisfies. 
We also need to consider every \ac{gnfd} whose scope $Q'$ is more general than $Q$, i.e., $Q' \sqsupseteq Q$. 
Intuitively, this is because a more general scope $Q'$ also ``covers'' the matched subgraph patterns of the more specific scope $Q$. Formally, this is an application of 
the \emph{Restriction} reasoning rule (cf. \Cref{sec:dependency}). Recall that \Cref{lem:moregeneralthan} gives us a straightforward way to determine whether $Q'$ is more general than $Q$.
\edbtarxiv{}{Thus, the set of applicable dependencies $\Sigma_Q$ is defined as $\setOfDeps_\varQueryPattern := \{\varQueryPattern::X\determ Y \mid \text{$\varQueryPattern'::X \determ Y\in\Sigma$ and $\moregeneralthan{\varQueryPattern'}{\varQueryPattern}$}\}$. 
}}

\paragraph{Phase 2: Minimal Cover}
We would like to have a minimal cover $\setOfDeps_Q^\mathsf{min}$ of the set of all dependencies that are relevant for the scope~$Q$. 
As in relational normalization, this minimal cover is computed to avoid the consideration of redundant or non-minimal \acp{gnfd} for the determination of transformations $\setOfTransfs$ in Phase 3. 
We say a \ac{gnfd} $\varDep: Q::X\determ Z$ is (i) redundant if a dependency $\varDep$ is implied by all other dependencies of a GN-schema $\setOfDeps$, i.e., $ \setOfDeps -  \{\varDep\} \models\varDep$, 
and is (ii)~minimal if there is no $Y\subset X$ such that $\setOfDeps - \{\varDep\} \models Q::Y\determ Z$. 
We use the set of applicable dependencies from Phase~1~($\Sigma_Q$) to compute~$\setOfDeps_Q^\mathsf{min}$. 
\changed{After applying \emph{Restriction} (cf. \Cref{sec:dependency}) to each dependency in $\Sigma_Q$, we can reason over them using Armstrong's Axioms and compute the minimal cover as for \acp{fd}~\cite{alice}.}

\paragraph{Phase 3: Constructing Transformations}
The core of the graph-native normalization approach are the redundancy-removing transformations
from \Cref{sec:native-property-graph-normal-forms}. 
\edbtarxiv{\changed{We describe the transformation construction in detail in the full version~\cite{schrott2026graphnative}.}}{
\changed{To construct the set of transformations $\setOfTransfs$ needed for normalization, 
as the first step,
every dependency $\varphi: X\determ YZ \in \Sigma_Q^{\textsf{min}}$ with multiple variables on its right-hand side needs to be split up by using the \textit{Decomposition} rule (cf. \Cref{sec:dependency}) such that only one variable occurs on the right-hand side of $\varphi$, i.e., in~$Y$. 
As next step, based on the property- and object-identity variables (cf. \Cref{sec:preliminaries}) used in the descriptor $X\determ Y$, 
every \ac{gnfd} $\varDep : Q:: X\determ Y$ is matched with a redundancy pattern $T$ from a transformation $\psi\in \mathcal{T}_\textsf{red}$ shown in \Cref{tab:transformation-patterns}. For example, to match with the transformation $\varInfomorphism_\text{between-np-ep}$, $X$ must consist of a property variable of a node and $Y$ is a property variable of an edge.
In \Cref{alg:gn-norm}, we use the expression $X\determ Y~\hat =~\psi$ to denote the matching.
Following that, every matched transformation~$\varInfomorphism:(T,T_\textsf{new})$ is 
adapted to match the pattern $Q$, 
and then added to the set of transformations~$\setOfTransfs$ that will be applied in Phase 4.
As shown in \Cref{tab:transformation-patterns} and described in \Cref{sec:native-property-graph-normal-forms}, for transformations that involve the creation of a new node $n$ (all except $\varInfomorphism_\text{between-n-ep}$), 
a key for $n$ is defined based on the transformed left-hand side~$X^T$ of the original dependency~$\varphi$. 
For $\varInfomorphism_\text{between-n-ep}$ no new dependency needs to be added explicitly, since it is already structurally implied.
}
}

\paragraph{Phase 4: Applying Transformations}
Finally, the graph is transformed according to the constructed transformations $\mathcal{T}$.
\changed{Recall from \Cref{sec:preliminaries} that a set of transformations is applied in two phases: new graph data is created first, then superfluous data is removed. Since each individual transformation is lossless and dependency-preserving, so is the combined application of $\mathcal{T}$.}

\begin{example}\label{ex:algo}
\changed{
We show the application of \textsc{Scoped\-LPGNormalize} (\Cref{alg:gn-norm}) and highlight the importance of Phases 1 and 2.
For this we consider a GN-Schema $\Sigma$, which relies on two scopes that use $P=\{\textsf{title},\textsf{program},\textsf{language}\}$:
\begin{itemize}
\small
    \item[$Q_1$:] $(\pobject x{\{\textsf{Course}\}}{P})$ 
    \item[$Q_2$:] $(\pobject x{\{\textsf{Course},\textsf{International}\}}{P})\pedgeRight{\pobject {y}{\{\textsf{teaches}\}}{\{\textsf{usingBook}\}}}()$
\end{itemize}
The considered GN-Schema $\Sigma$ contains the dependencies: %
}
{\small
\begin{align*}
& \varphi_1{:}\;Q_1:: x.\textsf{title}\determ x.\textsf{program};\; \varphi_2{:}\;Q_1:: x.\textsf{title}\determ x.\textsf{language}; \\
& \varphi_3{:}\;Q_2:: x.\textsf{program}\determ x.\textsf{language};\; \varphi_4{:}\;Q_2::x.\textsf{language}\determ y.\textsf{usingBook}\\
& \text{and $\varphi_\text{Course}$ (cf. Example~\ref{ex:formal-deps})}
\end{align*}
}\\[-1.5em]
    \changed{For this application of \Cref{alg:gn-norm}, we decide to perform a scoped normalization regarding scope $Q_2$ and schema $\setOfDeps$ on graph a~$G$. }

    \changed{
    The \emph{first phase} loops over every \ac{gnfd} in $\Sigma$. Since $Q_1$ is more general than $Q_2$ (cf.~\Cref{tab:moregeneralthan}), all \acp{gnfd} that use $Q_1$ are  added to $\Sigma_Q$. 
    Intuitively, we can see that dependencies with scope $Q_1$ are applicable to $Q_2$:
    international courses are also courses, and if the necessary property keys are present, the (more general) course dependencies also apply to international courses.
    Trivially, $Q_2$ is more general than itself, so also every \ac{gnfd} that uses $Q_2$ is added to $\Sigma_Q$. 
    However, dependency~\hyperref[dep:course]{$\varDep_\text{Course}$} is not applicable for scope $Q_2$ since it requires the presence of the property keys $\{\textsf{title}, \textsf{year}\}$, which is not a subset of $P$ due to \textsf{year}.}

    \changed{
    Having collected all \acp{gnfd} that hold for scope $Q_2$,
    \emph{Phase 2} then computes the minimal cover of these dependencies. Even though $\varphi_1$, $\varphi_2$, $\varphi_3$, and $\varphi_4$ are present in $\Sigma_Q$, the minimal cover $\setOfDeps_Q^\mathsf{min}$ will not include $\varphi_2$, as it can be implied by the other three dependencies.}

\changed{
    In \emph{Phase 3} the graph transformations are determined: $\psi_\text{within-n}$ for both $\varphi_1$ and $\varphi_3$, and $\psi_\text{between-np-ep}$ for $\varphi_4$.    
    Indeed, if we had kept $\varphi_2$ in Phase 3, the transformation resulting from that dependency would lead to the creation of a superfluous new node.
    Thus, Phase 2 prevents the creation of such redundant nodes.
    }

\changed{Finally, in \emph{Phase 4}, the transformations are applied to $G$.}
\end{example}
\vspace*{-2em}

\changed{\subsection{Full Normalization}}
\begin{algorithm}[tbp]
\small
    \SetKwFunction{TopologicalSort}{TopologicalSort}
    \SetKwFunction{ScopedNormalize}{ScopedNormalize}
    \SetKwFunction{CreateDependencies}{CreateDependencies}
    \SetKwFunction{Transform}{Transform}
    \SetKwInOut{Input}{input}\SetKwInOut{Output}{output}
    \SetKwComment{Comment}{\# }{}

    \textbf{Input:} a schema $\setOfDeps$; and an \ac{lpg} $\varGraph$
    \medskip

    \textcolor{gray}{\textit{Collect all scopes}}\\    
    $\mathcal{Q} := \{Q \mid Q::X\determ Y\in\Sigma\}$
    
    \medskip
    \textcolor{gray}{\textit{Topologically sort the scopes based on $\sqsubseteq$ (see  \Cref{lem:moregeneralthan})}}\\
    $\mathcal{Q}_\textsf{sorted} := \texttt{Sort}_{\sqsubseteq}(\mathcal{Q})$

    \medskip
    \textcolor{gray}{\textit{Normalize for each scope}}\\
    \For{\text{scope} $Q\in\mathcal{Q}_\textsf{sorted}$}{
    $\textsc{ScopedLPGNormalize}(Q, G,\setOfDeps)$     \textcolor{gray}{\textit{\% see \Cref{alg:gn-norm}}}\\    

     }
    
    \medskip
    \textbf{Result: } $G$ is in \ac{gnnf}

\caption{\textsc{FullLPGNormalize}\\Full normalization algorithm for transforming \acp{lpg} into \ac{gnnf}.}
\label{alg:gn-norm-full}
\end{algorithm}

\noindent
\changed{\Cref{alg:gn-norm} normalizes with respect to one scope, but \ac{gnnf} (Definition~\ref{def:gnnf}) requires normalization for every scope $Q$. \emph{Full} graph-native normalization (\Cref{alg:gn-norm-full}, \textsc{FullLPGNormalize}) achieves this by repeatedly applying \textsc{ScopedLPGNormalize} (\Cref{alg:gn-norm}) to each scope. The order of application is crucial: scopes must be processed from least to most general to avoid side-effects. We therefore topologically sort the scopes in $\setOfDeps$ using the ``more general than'' partial order ($\sqsupseteq$, cf. \Cref{lem:moregeneralthan}). The following example illustrates why more specific scopes must be considered first:}

\begin{example}\label{ex:order}
    Consider the situation from Example~\ref{ex:algo}, where we normalized within the scope $Q_2$ only.
    \changed{We now normalize for all scopes and show the importance of the order of scopes for full normalization.
    Assume we consider the scope $Q_1$ first. 
    This would transform every course according to the transformation rules. For example, by $\varphi_1$, the properties $\textsf{title}$ and $\textsf{program}$ would be moved from the course nodes to a new node due to transformation rule $\varInfomorphism_{\text{within-n}}$. 
    Specifically, this means that the \emph{international} courses would also be affected by this transformation.
    However, as a side effect, this movement of properties makes the \acp{gnfd} $\varphi_3$ and $\varphi_4$ no longer applicable. 
    In particular, in this example, if we want to normalize with scope $Q_2$ after $Q_1$, transformations resulting from $\varphi_3$ and $\varphi_4$ would have no effect.
    }

\changed{
    This is why the topological ordering is crucial.
    Executing normalization with the more specific scope~$Q_2$ first avoids this situation. Then, in the beginning, only international courses will be modified, considering the dependencies relating to regular courses as well (due to the application of Restriction in  Phase 1 of \textsc{ScopedLPGNormalize}). 
    Afterward, normalization with the more general scope~$Q_1$ only affects course nodes, as international courses have already been normalized and no longer match the original dependencies.}

\changed{
    The result is a graph that is normalized with respect to $\setOfDeps$ and every scope $Q$. Hence $G$ is in \ac{gnnf}.}
\end{example}

\section{Evaluation}
\label{sec:evaluation}
\label{sec:experiments}

\paragraph{Technical Setup}
We implemented our approach in Python utilizing an ANTLR grammar~\cite{antlr-website} to parse \acp{gnfd} defined on graph patterns; 
it supports graphs stored in Neo4j and Memgraph. 
Our implementation and evaluation setup incl. Docker, artifacts, etc.
are available on GitHub\footnote{\label{github}\url{https://github.com/dmki-tuwien/lpg-normalization}}.
All experiments were run on a server with a 24-core \SI{2.9}{\giga\hertz} AMD EPYC 9254 CPU and \SI{768}{\giga\byte} DDR5 RAM.
\changed{Neo4j was configured to use a page cache of \SI{700}{\giga\byte} and a heap of \SI{31}{\giga\byte}, as recommended by Neo4j.}

\paragraph{Metrics}
An overview of our metrics and their definitions are provided in Table~\ref{tab:metric-definitions}. 
\changed{We distinguish between metrics measuring the structure and size of an \ac{lpg} and metrics that measure query performance.}
Regarding graph structure and size, on %
a per-graph basis,
we measure the number of nodes ($\metric_{\#\text{Node}}$) and edges ($\metric_{\#\text{Edge}}$), 
as well as the average number of properties per node ($\metric_{\text{AvgPropNode}}$) and edge ($\metric_{\text{AvgPropEdge}}$).
For a set of \acp{gnfd} $\setOfDeps$, i.e., a GN-Schema, 
we measure the number of all \acp{gnfd} ($\metric_{\#\text{Dep.sTotal}}$) as well as within-node \acp{gnfd} ($\metric_{\#\text{Dep.sWithinNode}}$), 
within-edge \acp{gnfd} ($\metric_{\#\text{Dep.sWithinEdge}}$),
and between-\acp{gnfd} ($\metric_{\#\text{Dep.sBetween}}$).

In line with \cite{skavantzos_third_2025},
we also calculate the `potentials for redundancy' for each \ac{gnfd}.
For a given \ac{gnfd} $Q::X\determ Y$, %
and a graph $G$,
the list of redundancy potentials $M$ is calculated by a group-by-count query $M := \mathcal{G}_{\mathit{count}(X\cup Y)}(\sem QG)$, 
effectively counting how often \changed{the same combinations of values for property keys in $X$ and $Y$ occur}.
\changed{The higher the redundancy potentials (i.e., the counts in $M$) are, the more redundancy is present per value combination. If there are no redundant combinations of values, every value in $M$ is $1$.}
\changed{Based on $M$ we calculate $m_\mathit{MaxRed.Pot.}$ as the maximum redundancy potential per \ac{gnfd}, and $m_\mathit{AvgRed.Pot.}$ as the arithmetic mean of the redundancy potentials $M$ per \ac{gnfd}. Due to being based on counts, these two metrics do not have an upper limit.}

Another %
\changed{structural }%
metric we define per \ac{gnfd} is minimality $m_\text{Minim.}$~\cite{hacid_novel_2019}.
\changed{It complements the redundancy-potential-based metrics, as it
\edbtarxiv{is}{calculates} the ratio between the number of distinct and the total number of all combinations
of values for property keys in $X$ and $Y$.
Thus, its result values lie between 0 (completely redundant, there is only a single combination of values for property keys in $X$ and $Y$) and 1 (no redundancy at all, ``minimal'' in the sense that values for property keys in $X$ and $Y$ only appear in unique combinations).}
\edbtarxiv{}{An example calculation of the per-dependency metrics is shown below in Example~\ref{ex:per-dep-metrics}.}

\edbtarxiv{
}{
\begin{table*}
    \centering
    \caption{Definitions of evaluation metrics}
    \label{tab:metric-definitions}
    \vspace*{-1em}
    \footnotesize
        \begin{tabular}{r r l L{2.5cm} l}
    \toprule
&    \textbf{Type} & \textbf{Ref.} & \textbf{Name} & \textbf{Definition} \\
    \midrule
        \multirow{11}{*}{\rotatebox{90}{\parbox{3.6cm}{\textbf{Graph size and strcuture}}}} &
        \multirow{4}{*}{\rotatebox{90}{\parbox{1.3cm}{\centering\textit{Per graph\newline$\varGraph$}}}}& $\metric_{\#\text{Node}}$ & NodeCount & $\left|\setNodeIDs\right|$\vspace{0.1em} \\
        && $\metric_{\#\text{Edge}}$ & EdgeCount & $\left|\setDirectedEdgeIDs\right|$ \\
        && $\metric_{\text{AvgPropNode}}$ & AvgNode\-PropCount & 
        $\funcAvg\Bigl(
            \Bigl\{
                \bigl|
                    \left\{\varPropKey\in\infSetPropKeys:(\varNode,\varPropKey) \in \funcDomain(\funcProperty)\right\}
                \bigr|:\varNode\in\setNodeIDs
            \Bigr\}
        \Bigr)$ \\
        && $\metric_{\text{AvgPropEdge}}$ 
        & AvgEdge\-PropCount & 
        $\funcAvg\Bigl(
            \Bigl\{
                \bigl|
                    \left\{\varPropKey\in\infSetPropKeys:(\varEdge,\varPropKey) \in \funcDomain(\funcProperty)\right\}
                \bigr|:\varEdge\in\setDirectedEdgeIDs
            \Bigr\}
        \Bigr)$  \\

    \cmidrule{2-5}
    
            &\multirow{4}{*}{\rotatebox{90}{\parbox{1.3cm}{\raggedleft \textit{Per}\newline\textit{GN-Schema}\\ \centering$\setOfDeps$}}}   &$\metric_{\#\text{Dep.sTotal}}$ & AllDep.sCount & $\left|\varDep\in\setOfDeps\right|$ \vspace{0.1em}\\
            && $\metric_{\#\text{Dep.sWithinNode}}$ & WithinNodeDep.sCount & $\left|\varDep\in\setOfDeps \text{ where $\varphi$ is a within-node dependency}\right|$ \vspace{0.1em}\\
            && $\metric_{\#\text{Dep.sWithinEdge}}$ & WithinEdgeDep.sCount & $\left|\varDep\in\setOfDeps \text{ where $\varphi$ is a within-edge dependency}\right|$ \vspace{0.1em}\\
            && $\metric_{\#\text{Dep.sBetween}}$ & BetweenG.O.Dep.sCount & $\left|\varDep\in\setOfDeps \text{ where $\varphi$ is a between-graph-object dependency}\right|$ \\
                
    \cmidrule{2-5}
        &\multirow{4}{*}{\rotatebox{90}{\parbox{1.5cm}{\textit{Per \ac{gnfd} $\varDep$}}}}  & $\metric_{\text{MaxRed.Pot.}}$ & Max\-Redundancy\-Count (cf. metric ``\texttt{\#red}'' in \cite{skavantzos_third_2025}) &
        $\funcMax\bigl(\mathcal{G}_{\mathit{count}(D)}(\sem QG)
        \bigr)$\\
        && $\metric_{\text{AvgRed.Pot.}}$ & Avg\-Redundancy\-Count &
        $\funcAvg\bigl(\mathcal{G}_{\mathit{count}(D)}(\sem QG)
        \bigr)$\vspace{0.1em}\\
    
        && $\metric_{\text{Minim}}$ & Minimality\newline (based on \cite{hacid_novel_2019}) &
        ${\begin{cases}1, & \text{if } \bigl|\sem QG\bigr|=1\\
        {\dfrac{  \bigl|
                \mathcal{G}_{\mathit{count}(D)}(\sem QG)
            \bigr|-1}
            { \bigl|\sem QG\bigr|-1}}, & \text{else}\end{cases}}$
        \\
        \midrule
    \multicolumn{2}{c}{\multirow{2}{*}{\rotatebox{90}{\parbox{0.8cm}{\centering \textbf{Query}\newline \textbf{per\-for\-mance}\newline\newline}}}}
     &  $m_\text{Q.Dur.}$ & Query duration (execution time) & As returned by the graph system \vspace{0.2em}\\
     & & $m_\text{Q.DBHit}$ & Database hits & As returned by the graph system \vspace{0.3em}\\
        
    \bottomrule
    \end{tabular}
\end{table*}
}

\begin{table}
\vspace*{-0.8em}
    \caption{Graphs used in the evaluation}
    \label{tab:eval_graphs}
    \vspace*{-1em}
    \centering
    \footnotesize
    \begin{tabular}{L{1.5cm} l @{}r@{\enspace}r@{\enspace}r@{\enspace}r}
    \toprule
       \textbf{Graph} & \textbf{Source} & \textbf{$\metric_{\#\text{Node}}$} & \textbf{$\metric_{\#\text{Edge}}$} & \textbf{$\metric_{\text{AvgPropNode}}$} & \textbf{$\metric_{\text{AvgPropEdge}}$}\\
       \midrule
       
       London Public Transport & \cite{schrott_2026_18479732} & \num{651} & \num{1952} & $\approx 5$ & \num{3}\vspace{0.1em}\\
       Northwind & \cite{northwind_database} & 1035 & 3139 & $\approx 13.30$ & $\approx 3.43$ \vspace{0.1em}\\
       No Socks & \cite{skavantzos_third_2025} & \num{3} & \num{2} & $\approx 3.33$ & \num{0}\vspace{0.1em}\\
       Offshore & \cite{offshore_database}  & \num{2016523} & \num{3339267} & $\approx 9.86$ & $\approx 3.25$\vspace{0.1em}\\
       Train Services & \cite{schrott_2026_18478422} & \num{194179} & \num{1826601} & $\approx 6.98$ & $\approx 3.79$\vspace{0.1em}\\
       University & \Cref{fig:lecture-uses-book-denorm} & \num{4} &\num{3} & \num{1.25} & \num{2}\\
       \bottomrule
\end{tabular}

\end{table}

\begin{table*}
    \centering    
    \caption{Overview of the evaluation scenarios. Durations are in \si{\second}, are rounded to two decimal places.%
    }
    \label{tab:experiment_datasets}
    \vspace*{-1em}
    \scriptsize
    \footnotesize
\begin{tabular}{@{}l@{~}L{1.5cm}@{}r@{~}r@{~}r@{~}r@{~}r@{~}r @{~} L{2.4cm} @{} L{1.6cm} @{} r @{\enspace} r @{} }
\toprule
\multicolumn{2}{l}{\textbf{\emph{Scenario}}}& \multicolumn{3}{c}{\textbf{\emph{Provided \acp{gnfd}}}}  & \multicolumn{3}{c}{\textbf{\emph{Minimal cover of \acp{gnfd}}}} & \textbf{Applicable} &  \multicolumn{3}{r}{\textbf{\emph{Normalization\,duration}}} \\
\textbf{\acs{id}} & \textbf{Graph} & $\metric_{\#\text{Dep.sW.-Node}}$ & $\metric_{\#\text{Dep.sW.-Edge}}$ & $\metric_{\#\text{Dep.sBetween}}$ & $\metric_{\#\text{Dep.sW.-Node}}$ & $\metric_{\#\text{Dep.sW.-Edge}}$ & $\metric_{\#\text{Dep.sBetween}}$ & \textbf{transformations} & \textbf{Note} & \textbf{Memgraph} & \textbf{Neo4j}\\
\midrule
$S_\text{Lon}$ & London Public Transport & \num{3} & \num{2} & \num{0} &  \num{2} & \num{1} & \num{0} & $\varInfomorphism_{\text{within-n}}, \varInfomorphism_\text{within-e}$ & & \num{0.11} & \num{1.33} \\
$S_\text{Nw}$ & Northwind & 2 & 0 & 0 & 2 & 0 & 0 & $\varInfomorphism_{\text{within-n}}$ & as in \cite{skavantzos_third_2025} & \num{0.04} & \num{1.14} \\
$S_\text{No-1}$ & No Socks & 4 & 0 & 0 & 4 & 0 & 0 & $\varInfomorphism_{\text{within-n}}$ & as in \cite{skavantzos_third_2025} & \num{0.03} & \num{1.17} \\
$S_\text{No-2}$ & No Socks & 4 & 0 & 0 & 4 & 0 & 0 & $\varInfomorphism_{\text{within-n}}$ & as in \cite{skavantzos_third_2025} & \num{0.03} & \num{1.41} \\
$S_\text{No-3}$ & No Socks & 5 & 0 & 0 & 5 & 0 & 0 & $\varInfomorphism_{\text{within-n}}$ & as in \cite{skavantzos_third_2025} & \num{0.02} & \num{1.45} \\
$S_\text{Off-1}$ & Offshore & 3 & 0 & 0 & 2 & 0 & 0 & $\varInfomorphism_{\text{within-n}}$ & as in \cite{skavantzos_third_2025} & \num{9.18} & \num{24.77} \\
$S_\text{Off-2}$ & Offshore & 5 & 0 & 0 & 5 & 0 & 0 & $\varInfomorphism_{\text{within-n}}$ & as in \cite{skavantzos_third_2025} & \num{33.92} & \num{116.28} \\
$S_\text{Off-3}$ & Offshore & 10 & 0 & 0 & 5 & 0 & 0 & $\varInfomorphism_{\text{within-n}}$ & as in \cite{skavantzos_third_2025} & \num{25.15} & \num{63.26} \vspace{0.5em}\\
$S_\text{Ts-1}$ & Train Services & 3 & 1 & 2 & 2 & 1 & 2 & $\varInfomorphism_{\text{w.-n}}, \varInfomorphism_{\text{w.-e}},$ $\varInfomorphism_{\text{b.-np-ep}},$ $ \varInfomorphism_{\text{b.-ep-n}}$ & & \num{60.89} & \num{130.27} \vspace{0.5em}\\
$S_{\text{Ts-2}}$ & Train Services & 3 & 1 & 3 & 2 & 1 & 3 & $\varInfomorphism_{\text{w.-n}}, \varInfomorphism_{\text{w.-e}},$ $\varInfomorphism_{\text{b.-np-ep}},$ $ \varInfomorphism_{\text{b.-ep-n}},$ $\varInfomorphism_{\text{b.-ep-np}}$ & & \num{104.16} & \num{408.72}\vspace{0.5em}\\
$S_{\text{Uni-1}}$ & University & 0 & 0 & 1 & 0 & 0 & 1 & $\varInfomorphism_{\text{between-n-ep}}$  & cf. \Cref{fig:lecture-uses-book-denorm} & \num{0.01} & \num{0.19}\\
 $S_{\text{Uni-2}}$ & University & 0 & 0 & 1 & 0 & 0 & 1 & $\varInfomorphism_{\text{between-np-ep}}$ & adapted from \Cref{fig:lecture-uses-book-denorm} & \num{0.01} & \num{0.61}
 \\

\bottomrule
\end{tabular}

\end{table*}

\edbtarxiv{}{
\begin{example} 
\label{ex:per-dep-metrics}
\changed{For the calculation of the per-dependency metrics, $m_\text{MaxRed.Pot}$, $m_\text{AvgRed.Pot}$, and $m_\text{Minim}$ we consider a graph~$G$, shown in 
\Cref{fig:calculation_example}, and a \ac{gnfd} $$\varphi:(x:\{{X}\}:\{\mathsf{k}_x\})\xrightarrow{y:\{{Y}\}:\{\mathsf{k}_y\}}()::x.\mathsf{k}_x\determ y.\mathsf{k_y}$$ that holds eight times in $G$ (denoted by gray arrows).
With respect to $\varphi$, the following redundancy potentials are present in $G$: $M_\varphi=[2,2,3,1]$. This is the result of $(x.\mathsf{k}_x, y.\mathsf{k}_y)=(\text{m},\text{a})$ occurring twice, $(x.\mathsf{k}_x, y.\mathsf{k}_y)=(\text{n},\text{b})$ occurring twice, $(x.\mathsf{k}_x, y.\mathsf{k}_y)=(\text{o},\text{c})$ occurring three times, and $(x.\mathsf{k}_x, y.\mathsf{k}_y)=(\text{p},\text{d})$ occurring once in $G$.}

\changed{
The metrics are then calculated as follows:
\begin{itemize}
    \item $m_\text{MaxRed.Pot}=\mathsf{max}(M_\varphi)=3$
    \item $m_\text{AvgRed.Pot}={\mathsf{avg}(M_\varphi)} = 2$ 
    \item $m_\text{Minim}={|M_\varphi|-1 \over \mathsf{sum}(M_\varphi)-1}\approx0.43$ 
\end{itemize}
}

\begin{figure}
    \centering
    \begin{tikzpicture}[node distance=1em]
            \node[vertex style=Node,align=center,minimum height=0.4cm,minimum width=0.6cm] (1) {$x_1$:${X}$};

\node[text style,right of=1,align=center,yshift=0.6em,xshift=2em] (1_id) {$y_1$:${Y}$};
\node[text style,right of=1,align=center,yshift=-0.6em,xshift=2em] (1_prop) {$\mathsf{k}_y=$ a};
\node[text style,left of=1,align=right,xshift=-2em] (1_v) {$\mathsf{k}_x=$ m};

            \node[vertex style=Node,align=center,minimum height=0.4cm,minimum width=0.6cm,yshift=5em] (2) {$x_2$:${X}$};

\node[text style,right of=2,align=center,yshift=-1.1em,xshift=1.5em] (2_id) {$y_2$:${Y}$};
\node[text style,right of=2,align=center,yshift=-1.8em,xshift=0em] (2_prop) {$\mathsf{k}_y=$ a};
\node[text style,above of=2,align=center,yshift=0.5em] (2_v) {$\mathsf{k}_x=$ m};

            \node[vertex style=Node,align=center,minimum height=0.4cm,minimum width=0.6cm, xshift=6em,yshift=5em] (3) {$x_3$:${X}$};

\node[text style,below of=3,align=center,yshift=-0.3em,xshift=1em] (3_id) {$y_3$:${Y}$};
\node[text style,below of=3,align=center,yshift=-1.2em,xshift=1.2em] (3_prop) {$\mathsf{k}_y=$ b};
\node[text style,above of=3,align=center,yshift=0.5em] (3_v) {$\mathsf{k}_x=$ n};

            \node[vertex style=Node,align=center,minimum height=0.4cm,minimum width=0.6cm, xshift=12em,yshift=5em] (4) {$x_4$:${X}$};

\node[text style,left of=4,align=center,yshift=-1.1em,xshift=-1.2em] (4_id) {$y_4$:${Y}$};
\node[text style,left of=4,align=left,yshift=-1.8em,xshift=0.3em] (4_prop) {$\mathsf{k}_y=$ b};
\node[text style,above of=4,align=center,yshift=0.5em] (4_v) {$\mathsf{k}_x=$ n};

\node[vertex style=Node,align=center,minimum height=0.4cm,minimum width=0.6cm, xshift=12em] (5) {$x_5$:${X}$};

\node[text style,left of=5,align=center,yshift=0.6em,xshift=-2em] (5_id) {$y_5$:${Y}$};
\node[text style,left of=5,align=center,yshift=-0.6em,xshift=-2em] (5_prop) {$\mathsf{k}_y=$ c};
\node[text style,right of=5,align=left,xshift=2em] (5_v) {$\mathsf{k}_x=$ o};

\node[vertex style=Node,align=center,minimum height=0.4cm,minimum width=0.6cm, xshift=12em,yshift=-5em] (6) {$x_6$:${X}$};

\node[text style,left of=6,align=center,yshift=1.8em,xshift=0em] (6_id) {$y_6$:${Y}$};
\node[text style,left of=6,align=center,yshift=0.9em,xshift=-1.3em] (6_prop) {$\mathsf{k}_y=$ c};
\node[text style,below of=6,align=center,yshift=-0.5em] (6_v) {$\mathsf{k}_x=$ o};

            \node[vertex style=Node,align=center,minimum height=0.4cm,minimum width=0.6cm,xshift=6em,yshift=-5em] (7) {$x_7$:${X}$};

\node[text style,above of=7,align=left,yshift=1.7em,xshift=1em] (7_id) {$y$:${Y}$};
\node[text style,above of=7,align=left,yshift=0.8em,xshift=1.3em] (7_prop) {$\mathsf{k}_y=$ c};
\node[text style,below of=7,align=center,yshift=-0.5em] (7_v) {$\mathsf{k}_x=$ o};

\node[vertex style=Node,align=center,minimum height=0.4cm,minimum width=0.6cm,yshift=-5em] (8) {$x_8$:${X}$};

\node[text style,right of=8,align=center,yshift=1.8em,xshift=0em] (8_id) {$y_8$:${Y}$};
\node[text style,right of=8,align=left,yshift=0.9em,xshift=1.6em] (8_prop) {$\mathsf{k}_y=$ d};
\node[text style,below of=8,align=center,yshift=-0.5em] (8_v) {$\mathsf{k}_x=$ p};

\draw [->, bend left=-40,color=gray, align=center] (1_v.south) to  (1_prop.south west);
\draw [->, bend left=-90,color=gray, align=center] (2_v.west) to  (2_prop.west);
\draw [->, bend left=-90,color=gray, align=center] (3_v.west) to  (3_prop.west);
\draw [->, bend left=90,color=gray, align=center] (4_v.east) to  (4_prop.east);
\draw [->, bend left=40,color=gray, align=center] (5_v.south) to  (5_prop.south east);
\draw [->, bend left=60,color=gray, align=center] (6_v.west) to  (6_prop.south);
\draw [->, bend left=-40,color=gray, align=center] (7_v.east) to  (7_prop.south);
\draw [->, bend left=-70,color=gray, align=center] (8_v.east) to  (8_prop.south);

\node[vertex style=Node, right of=1,align=center,minimum height=0.4cm,minimum width=0.6cm,xshift=5em] (9) {$z$}
edge [<-,align=center]  (1) edge [<-,align=center]  (3) edge [<-,align=center]  (5) edge [<-,align=center]  (7) edge [<-,align=center]  (2) edge [<-,align=center]  (4) edge [<-,align=center]  (6) edge [<-,align=center]  (8);

\end{tikzpicture}
    \vspace*{-1em}
    \caption{Example graph for the calculation of per-dependency metrics}
    \label{fig:calculation_example}
\end{figure}

\changed{While the results for $m_\text{MaxRed.Pot}$ and $m_\text{AvgRed.Pot}$ might look acceptable, they only provide a limited view on redundancy.
As both of these two metrics do not have an upper limit, the presence of redundancies is the only statement that can be made based on them.
Contrary, the minimality value always lies between 0 and 1. In this example, it expresses that $\SI{\approx 57}{\percent}$ of the matches of the dependency $\varphi$ are redundant value combinations. 
}
\end{example} }

\changed{For measuring query performance, we rely on the query execution time in milliseconds ($m_\text{Q.Dur.}$) and on the number of database hits ($m_\text{Q.DBHit}$), which is a graph system internal metric to express query effort. 
The absolute values for $m_\text{Q.Dur.}$ depend on the machine hosting the \ac{lpg} system, $m_\text{Q.DBHit}$ is machine-independent. For both metrics lower values denote better performance.  }

\paragraph{Test scenarios}
We evaluate our approach using six graphs (see \Cref{tab:eval_graphs} for details).
\changed{To compare with related work, 
we use the same graphs as in \cite{skavantzos_third_2025} and add additional ones %
to evaluate additional types of dependencies.
As shown in \Cref{tab:experiment_datasets}, 
they are used in \num{12} scenarios with different GN-Schemas that cover all transformations discussed in this paper (see \Cref{tab:transformation-patterns}).}
In lack of appropriate methods for mining constraints in graphs, we follow the approach described by \cite{skavantzos_third_2025} and manually define dependencies for each graph\footnote{\changed{Details on the scenarios including the set of dependencies are available \edbtarxiv{online:\\}{in the \hyperref[sec:appendix]{appendix} and online:} \url{https://dmki-tuwien.github.io/lpg-normalization/evaluation_scenarios/}}} and compute the minimal cover for each GN-Schema. 
\changed{The normalization of each test scenario} was performed 
three times on fresh instances of Neo4j Enterprise, version 2025.12.1, and Memgraph 3.7.2.
\changed{
Query performance experiments were run ten times %
and the average of all runs is reported.}

\begin{figure*}
    \centering
    \vspace{-0.1em}
    {\footnotesize\textsf{Legend:\quad\begin{tikzpicture}
    \draw[draw=black, fill=black, fill opacity=0.12] (0,0) rectangle (0.6em,0.6em);
\end{tikzpicture} Graph object count \quad \begin{tikzpicture}
    \draw[draw=black, fill=black, fill opacity=0.76] (0,0) rectangle (0.6em,0.6em);
\end{tikzpicture} Property count}}\\
\vspace{-0.2em}
    \rotatebox{90}{\parbox{5.2cm}{\centering\footnotesize\textsf{Percentual change to the per-graph metrics\\ on a logarithmic scale (\%)}}}
    \begin{subfigure}[t]{0.48\linewidth}
    \includegraphics[width=\linewidth]{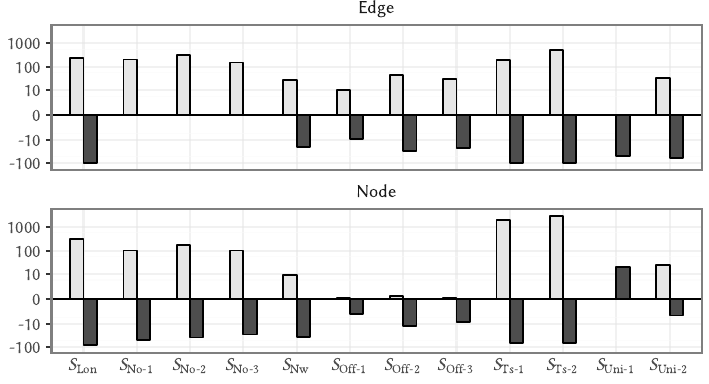}
    \caption{\emph{All} dependencies}
    \label{fig:impact-of-all}
    \end{subfigure}
    \hspace{0.02\linewidth}
    \begin{subfigure}[t]{0.16\linewidth}
    \includegraphics[width=\linewidth]{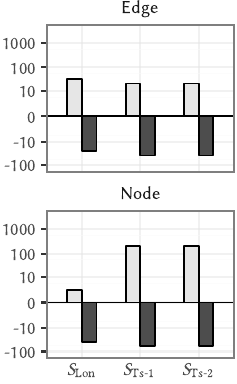}
    \caption{Without reification}
    \label{fig:impact-nodes}
    \end{subfigure}
    \hspace{0.02\linewidth}
    \begin{subfigure}[t]{0.16\linewidth}
    \includegraphics[width=\linewidth]{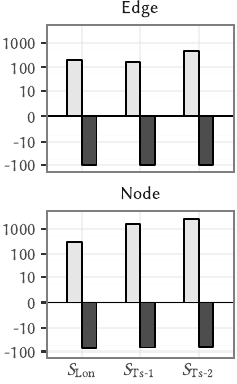}
    \caption{With reification}
    \label{fig:impact-edges}
    \end{subfigure}
    \vspace*{-1em}
    \caption{
    Percentual changes (log scale) to our \edbtarxiv{}{\changed{structural }}per-graph metrics following graph-native \ac{lpg} normalization grouped by graph object, considering all dependencies, only those leading to transformations with reification, and only those without reification.}
    \label{fig:transformation-impact}
\end{figure*}

\edbtarxiv{
    \subsubsection*{\textbf{Redundancy Reduction}}
}
{
    \subsection*{Redundancy Reduction}
    }

\changed{\Cref{tab:impact-on-per-dep-metrics} shows the results of our per-\ac{gnfd} metrics for each scenario 
before and after normalization.
It shows that originally there are redundancies in all scenarios. 
After performing graph-native normalization, i.e., performing the redundancy removing transformations (cf. \Cref{sec:native-property-graph-normal-forms}), all redundancies are eliminated. Thus, the per-\ac{gnfd} metrics 
$\metric_{\text{MaxRed.Pot.}}$,
$\metric_{\text{AvgRed.Pot.}}$, and
$\metric_{\text{Minim}}$ all return \num{1},
as for each dependency $\varphi: Q::X\determ Y$ there is only a single combination of values for the property keys in $X$ and $Y$. 
}

\begin{table}
\vspace*{-1em}
    \centering
        \caption{The results of the per-\ac{gnfd} metrics before and after graph-native \ac{lpg} normalization %
        }
    \label{tab:impact-on-per-dep-metrics}
    \vspace*{-1em}
    \footnotesize
    \begin{tabular}{rr@{\enspace}c@{\enspace}r@{\enspace}c@{\enspace}r@{\enspace}c@{\enspace}}
    \toprule
    & \multicolumn{2}{c}{$\mathsf{avg}(\metric_{\text{MaxRed.Pot.}})$} & \multicolumn{2}{c}{$\mathsf{avg}(\metric_{\text{AvgRed.Pot.}})$} & \multicolumn{2}{c}{$\mathsf{avg}(\metric_{\text{Minim}})$} \\ %
       \textbf{Scenario} & \emph{before} & \emph{after} & \emph{before} & \emph{after} & \emph{before} & \emph{after} \\ %
       \midrule
       $S_\text{Lon}$      & \num{112.60} & \num{1} & $\approx \num{36.42}$ & \num{1} & $\approx \num{0.41}$ & \num{1} \\ %
       $S_\text{Nw}$      & \num{16.00} & \num{1} & $\approx \num{5.16}$ & \num{1} & $\approx \num{0.55}$ & \num{1}\\ %
       $S_\text{No-1}$      & \num{1.25} & \num{1} & \num{1.25} & \num{1} & \num{0.75} & \num{1} \\ %
       $S_\text{No-2}$      & \num{1.25} & \num{1} & \num{1.25} & \num{1} & \num{0.75} & \num{1} \\ %
       $S_\text{No-3}$      & \num{1.20} & \num{1} & \num{1.20} & \num{1} & \num{0.80} & \num{1} \\ %
       $S_\text{Off-1}$      & $\approx \num{264881.33}$ & \num{1} & $\approx \num{613.07}$ & \num{1} & $\approx 0.00$ & \num{1} \\ %
       $S_\text{Off-2}$      & \num{175898.00} & \num{1} & $\approx \num{513.25}$ & \num{1} & $\approx \num{0.00}$ & \num{1} \\ %
       $S_\text{Off-3}$      & \num{239572.20} & \num{1} & $\approx \num{21645.63}$ & \num{1} & $\approx \num{0.00}$ & \num{1}  \\ %
       $S_\text{Ts-1}$      & $\approx \num{6790.33}$ & \num{1} & $\approx \num{563.93}$ & \num{1} & $\approx 0.68$ & \num{1} \\ %
       $S_\text{Ts-2}$      & \num{11631.00} & \num{1} & $\approx \num{964.82}$ & \num{1} & $\approx 0.59$
       & \num{1}\\ %
       $S_\text{Uni-1}$      & \num{3.00} & \num{1} & \num{3.00} & \num{1} & \num{0.00} & \num{1} \\ %
       $S_\text{Uni-2}$      & \num{3.00} & \num{1} & \num{3.00} & \num{1} & \num{0.00} & \num{1} \\ %
\bottomrule
    \end{tabular}
\end{table}

\edbtarxiv{\subsubsection*{\textbf{Effects on \protect{\ac{lpg}} Structure}}}
{
    \subsection*{Effects of Graph-Native Normalization on  \protect{\acp{lpg}} Structure}
}
\changed{Graph-native \ac{lpg} normalization affects the structure and size of an \ac{lpg}, with the precise impact depending on the transformations applied. \Cref{fig:transformation-impact} shows the relative percentage change of our per-graph metrics on a logarithmic scale, broken down by dependency type: \Cref{fig:impact-of-all} for all dependencies, \Cref{fig:impact-edges} for those leading to reification, and \Cref{fig:impact-nodes} for those that do not. This reveals how different redundancy patterns (cf. \Cref{tab:transformation-patterns}) influence normalization.}

\changed{
Transformations that create new nodes without reification (e.g., $\varInfomorphism_\text{within-n}$ and $\varInfomorphism_\text{between-np-ep}$) introduce new nodes for the properties of the descriptor of a dependency and connect them to the existing subgraph via an edge.
As \Cref{fig:impact-nodes} shows, this increases the number of nodes and edges while decreasing the number of properties -- an effect of storing each piece of information only once.
}

\changed{Transformations involving reification (e.g., $\varInfomorphism_\text{within-e}$, $\varInfomorphism_\text{between-ep-np}$, and $\varInfomorphism_\text{between-ep-n}$) have the most significant effect on per-graph metrics. As \Cref{fig:impact-edges} shows, they also reduce the average number of properties but, due to reification, create substantially more new nodes and edges.}

\changed{Overall, \Cref{fig:impact-of-all} shows that graph-native normalization almost always increases the number of graph objects and decreases the average number of properties per object. The exception is $\varInfomorphism_\text{between-n-ep}$, which ``moves'' a property from an edge to a node, as in $S_\text{Uni-1}$. There, $\metric_{\#\text{Node}}$ and $\metric_{\#\text{Edge}}$ remain constant while $\metric_{\text{AvgPropEdge}}$ decreases and $\metric_{\text{AvgPropNode}}$ increases. This reflects the movement of the property values from edges to unique nodes.
}

With respect to runtime, we observed three influencing factors: 
the number of graph objects before normalization, 
the number of dependencies in the minimal cover of the GN-Schema to consider,
and the type of graph transformations applied during normalization.
\changed{The latter correlates with the number of graph objects involved: transformations using reification are the slowest. For example, in $S_\text{Lon}$ on Neo4J, the transformation of type $\varInfomorphism_\text{within-n}$takes $\approx \SI{0.58}{\second}$ on average, while the transformation of type $\varInfomorphism_\text{within-e}$ takes $\approx \SI{0.97}{\second}$.%
\footnote{Detailed results can be found at \url{https://github.com/dmki-tuwien/lpg-normalization}}}

\begin{table}
    \centering
    \caption{Comparison of per-graph metrics after normalizing using our approach and \edbtarxiv{related work~\cite{skavantzos_third_2025}}{the approach from \citet{skavantzos_third_2025}}}
    \label{tab:eval-rel-work}
    \vspace*{-1em}
    \footnotesize
    \begin{tabular}{@{\enspace}r l r@{\enspace}r@{\enspace}r@{\enspace}r@{\enspace}}
    \toprule
    \textbf{Scenario} & \textbf{Approach} & \textbf{$\metric_{\#\text{Node}}$} & \textbf{$\metric_{\#\text{Edge}}$} & \textbf{$\metric_{\text{AvgPropNode}}$} & \textbf{$\metric_{\text{AvgPropEdge}}$}\\
    \midrule
        $S_\text{No-1}$ & Graph-Native & \num{6} & \num{6} & $\approx \num{1.67}$ & \num{0} \\
         \hline
         \multirow{2}{*}{$S_\text{No-2}$} & \cite{skavantzos_third_2025} &  \num{8} & \num{8} & \num{2} & \num{0}  \\
         & Graph-Native &  \num{8} & \num{8} & \num{2} & \num{0}  \\
         \hline
         \multirow{1}{*}{$S_\text{No-3}$} & Graph-Native & \num{6} & \num{5} & $\approx \num{2.33}$ & \num{0} \\
         \hline
      \multirow{2}{*}{$S_\text{Nw}$}   & \cite{skavantzos_third_2025} & \num{1124} & \num{3969} & $\approx \num{8.30}$ & $\approx \num{2.71}$   \\
         & Graph-Native & \num{1124} & \num{3969} & $\approx \num{8.30}$ & $\approx \num{2.71}$  \\
         \hline
         \multirow{2}{*}{$S_\text{Off-1}$} & \cite{skavantzos_third_2025} & \num{2017381} & \num{3681864} & $\approx \num{9.52}$ & $\approx \num{2.95}$ \\
                  & Graph-Native & \num{2017381} & \num{3681864} & $\approx \num{9.52}$ & $\approx \num{2.95}$ \\
         \hline
         \multirow{2}{*}{$S_\text{Off-2}$} & \cite{skavantzos_third_2025} & \num{2021602} & \num{4854231} & $\approx \num{8.60}$ & $\approx \num{2.24}$\\
         & Graph-Native & \num{2021602} & \num{4854231} & $\approx \num{8.60}$ & $\approx \num{2.24}$\\
         \hline
         \multirow{2}{*}{$S_\text{Off-3}$} & \cite{skavantzos_third_2025} &  \num{2017388} & \num{4367058} & $\approx \num{9.01}$ & $\approx \num{2.49}$ \\
                  & Graph-Native & \num{2017388} & \num{4367058} & $\approx \num{9.01}$ & $\approx \num{2.49}$ \\
                  \bottomrule
         
    \end{tabular}
\end{table}

\edbtarxiv{\subsubsection*{\textbf{Effects on Query~Performance}}}
{
    \subsection*{Effects of Graph-Native Normalization on Query~Performance}
}

\begin{figure}
    \centering
    \vspace*{-0.7em}
    {\footnotesize\textsf{Legend:\quad\begin{tikzpicture}
    \draw[draw=black, fill=black, fill opacity=0.76] (0,0) rectangle (0.6em,0.6em);
\end{tikzpicture} Original queries \quad \begin{tikzpicture}
    \draw[draw=black, fill=black, fill opacity=0.12] (0,0) rectangle (0.6em,0.6em);
\end{tikzpicture} Rewritten queries after normalization}}\\
\vspace{-0.1em}

\begin{minipage}{0.05\linewidth}
\rotatebox{90}{\footnotesize\textsf{$m_\text{Q.DBHit}$ on a logarithmic scale}}
\end{minipage}
\begin{minipage}{0.93\linewidth}
  \includegraphics[width=\linewidth]{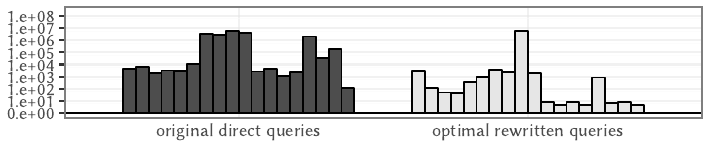} \\[-0.3em]
 \includegraphics[width=\linewidth]{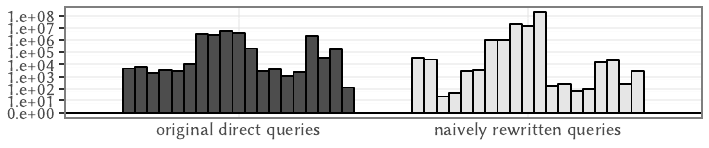} \\[-0.3em]
\includegraphics[width=\linewidth]{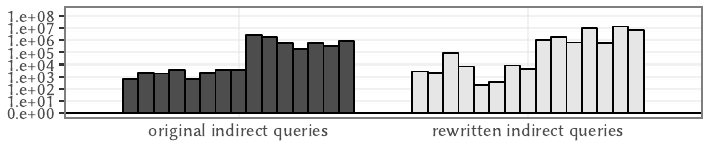} 
\end{minipage}

    \vspace*{-1.5em}
    \caption{Comparison of query performance ($m_\text{Q.DBHit}$) between original and rewritten queries}
    \label{fig:query-performance}
\end{figure}

\edbtarxiv{}{\begin{table*}[]
        \centering
        \caption{Effects of graph-native normalization on query~performance. Better results per graph system are \uline{underlined}. Execution times are the average of 10 runs and are rounded to two decimal places. %
        }
        \label{tab:query-exp-overview}
        \vspace*{-1em}
        \footnotesize
        \begin{tabular}{@{~}l @{\enspace} c @{\hspace{1em}} c @{\hspace{1em}}L{1.8cm}@{\enspace}c@{\enspace}c rr rr rr rr@{~}}
\toprule
\multirow{3}{*}{\textbf{Scenario}} & \multirow{3}{*}{\parbox{0.6cm}{\centering\textbf{Query type}}} & \multirow{3}{*}{\parbox{0.6cm}{\centering\textbf{Query no.}}} & \multirow{3}{1.8cm}{\textbf{Affected by transformation type}} & \multirow{3}{*}{\textbf{Direct}} & \multirow{3}{1.1cm}{\centering\textbf{Optimal rewriting}} & \multicolumn{4}{c}{\textbf{Execution time (in \si{\milli\second})}} & \multicolumn{4}{c}{\textbf{Database hits}} \\
 & & & & & & \multicolumn{2}{c}{\emph{before}} & \multicolumn{2}{c}{\emph{after}} & \multicolumn{2}{c}{\emph{before}} & \multicolumn{2}{c}{\emph{after}} \\
  & & & & & & Memgraph & Neo4j & Memgraph & Neo4j & Memgraph & Neo4j & Memgraph & Neo4j \\
\midrule
\multirow{12}{*}{$S_\text{Lon}$}
     & \multirow{9}{*}{\rotatebox{90}{lookup}} 
       & 1a & $\varInfomorphism_\text{within-e}$ & $\checkmark$ & $\checkmark$ & \num{2.51} & \num{3.94} & \uline{\num{0.47}} & \uline{\num{0.37}} & \num{4589} & \num{5857} & \uline{\num{2736}} & \uline{\num{113}} \\
     & & 1b &  $\varInfomorphism_\text{within-e}$ & $\checkmark$ & $\times$ & \uline{\num{2.51}} & \uline{\num{3.94}} & \num{10.81} & \num{16.09} & \uline{\num{4589}} & \uline{\num{5857}} & \num{35898} & \num{25517} \\
     \cmidrule{3-14}
     & & 2 &  $\varInfomorphism_\text{within-e}$ & $\times$ & $\times$ & \uline{\num{0.33}} & \num{0.88} & \num{0.68} & \uline{\num{0.57}} & \uline{\num{677}} & \uline{\num{1971}} & \num{2709} & \num{2030} \\
     \cmidrule{3-14}
     & & 3a & $\varInfomorphism_\text{within-n}$ & $\checkmark$ & $\checkmark$ & \num{1.19} & \num{1.29} & \uline{\num{0.09}} & \uline{\num{0.37}} & \num{1964} & \num{3241} & \uline{\num{50}} & \uline{\num{46}} \\
     & & 3b & $\varInfomorphism_\text{within-n}$ & $\checkmark$ & $\times$ & \num{1.19} & \num{1.29} & \uline{\num{0.04}} & \uline{\num{0.24}} & \num{1964} & \num{3241} & \uline{\num{22}} & \uline{\num{46}} \\
     \cmidrule{3-14}
     & & 4 &  $\varInfomorphism_\text{within-e}$, $\varInfomorphism_\text{within-n}$ & $\times$ & $\times$ & \uline{\num{0.71}} & \uline{\num{1.71}} & \num{13.92} & \num{1.90} & \uline{\num{1824}} & \uline{\num{3459}} & \num{88164} & \num{6857} \\
     \cmidrule{3-14}
     & & 5 &  $\varInfomorphism_\text{within-e}$ & $\times$ & $\times$ & \num{0.31} & \num{0.76} & \uline{\num{0.12}} & \uline{\num{0.44}} & \num{670} & \num{1971} & \uline{\num{217}} & \uline{\num{355}} \\
     \cmidrule{2-14} & \multirow{3}{*}{\rotatebox{90}{update}}
       & 1a & $\varInfomorphism_\text{within-e}$ & $\checkmark$ & $\times$ & \num{1.09} & \num{1.98} & \uline{\num{0.12}} & \uline{\num{0.38}} & \num{2706} & \num{4001} & \uline{\num{152}} & \uline{\num{243}} \\
     & & 1b & $\varInfomorphism_\text{within-e}$ & $\checkmark$ & $\times$ & \num{1.09} & \num{1.98} & \uline{\num{0.13}} & \uline{\num{0.35}} & \num{2706} & \num{4001} & \uline{\num{156}} & \uline{\num{149}} \\
     & & 1c & $\varInfomorphism_\text{within-e}$ & $\checkmark$ & $\checkmark$ & \num{1.09} & \num{1.98} & \uline{\num{0.07}} & \uline{\num{0.29}} & \num{2706} & \num{4001} & \uline{\num{7}} & \uline{\num{4}} \\
\midrule
\multirow{6}{*}{$S_\text{Nw}$}
     & \multirow{3}{*}{\rotatebox{90}{lookup}} 
       & 1 &  $\varInfomorphism_\text{within-n}$ & $\times$ & $\times$ & \uline{\num{1.52}} & \uline{\num{7.51}} & \num{1.80} & \num{12.33} & \uline{\num{3531}} & \uline{\num{3321}} & \num{8267} & \num{4507} \\
       \cmidrule{3-14}
     & & 2a &  $\varInfomorphism_\text{within-n}$ & $\checkmark$ & $\times$ & \num{1.44} & \num{4.56} & \uline{\num{0.76}} & \uline{\num{1.73}} & \num{2790} & \num{10791} & \uline{\num{2765}} & \uline{\num{3559}} \\
     & & 2b &  $\varInfomorphism_\text{within-n}$ & $\checkmark$ & $\checkmark$ & \num{1.44} & \num{4.56} & \uline{\num{0.24}} & \uline{\num{0.97}} & \num{2790} & \num{10791} & \uline{\num{362}} & \uline{\num{980}} \\
     \cmidrule{2-14} & \multirow{3}{*}{\rotatebox{90}{update}}
       & 1a & $\varInfomorphism_\text{within-n}$ &  $\checkmark$ & $\times$ & \num{0.28} & \num{1.28} & \uline{\num{0.06}} & \uline{\num{0.51}} & \num{1079} & \num{2529} & \uline{\num{65}} & \uline{\num{98}} \\
     & & 1b & $\varInfomorphism_\text{within-n}$ & $\checkmark$  & $\checkmark$ & \num{0.28} & \num{1.28} & \uline{\num{0.02}} & \uline{\num{0.40}} & \num{1079} & \num{2529} & \uline{\num{7}} & \uline{\num{4}} \\
     & & 1c & $\varInfomorphism_\text{within-n}$ &  $\checkmark$  & $\times$ & \num{0.28} & \num{1.28} & \uline{\num{0.07}} & \uline{\num{0.53}} & \num{1079} & \num{2529} & \uline{\num{69}} & \uline{\num{62}} \\
\midrule
\multirow{6}{*}{$S_\text{Off-1}$}
     & \multirow{3}{*}{\rotatebox{90}{lookup}} 
       & 1a &  $\varInfomorphism_\text{within-n}$ & $\checkmark$  & $\times$ & \num{923.41} & \num{406.97} & \uline{\num{170.21}} & \uline{\num{89.19}} & \num{3027613} & \num{2524955} & \uline{\num{1030373}} & \uline{\num{1031379}} \\
     & & 1b &  $\varInfomorphism_\text{within-n}$ & $\checkmark$  & $\checkmark$ & \num{923.41} & \num{406.97} & \uline{\num{1.11}} & \uline{\num{2.12}} & \num{3027613} & \num{2524955} & \uline{\num{3438}} & \uline{\num{2575}} \\
       \cmidrule{3-14}
     & & 2 & $\varInfomorphism_\text{within-n}$ & $\times$ & $\times$ & \num{797.36} & \uline{\num{254.31}} & \uline{\num{283.41}} & \num{282.81} & \num{2702775} & \uline{\num{1715964}} & \uline{\num{1030567}} & \num{1716575} \\
     \cmidrule{2-14} & \multirow{3}{*}{\rotatebox{90}{update}}
       & 1a & $\varInfomorphism_\text{within-n}$ & $\checkmark$ & $\checkmark$ & \num{336.04} & \num{18.27} & \uline{\num{0.71}} & \uline{\num{0.28}} & \num{2026155} & \num{33692} & \uline{\num{866}} & \uline{\num{6}} \\
     & & 1b & $\varInfomorphism_\text{within-n}$ & $\checkmark$ & $\times$ & \num{336.04} & \num{18.27} & \uline{\num{1.97}} & \uline{\num{2.93}} & \num{2026155} & \num{33692} & \uline{\num{15310}} & \uline{\num{14447}} \\
     & & 1c & $\varInfomorphism_\text{within-n}$ & $\checkmark$ & $\times$ & \num{336.04} & \num{18.27} & \uline{\num{4.07}} & \uline{\num{8.14}} & \num{2026155} & \num{33692} & \uline{\num{15306}} & \uline{\num{24071}} \\
\midrule
\multirow{15}{*}{$S_\text{Ts-2}$}
     & \multirow{9}{*}{\rotatebox{90}{lookup}}
       & 1a &  $\varInfomorphism_\text{between-ep-np}$, $\varInfomorphism_\text{between-np-ep}$\vspace{0.4em} & $\checkmark$  & $\checkmark$ & \num{1075.67} & \num{377.66} & \uline{\num{301.39}} & \uline{\num{1.66}} & \num{5675074} & \num{3654287} & \uline{\num{5673812}} & \uline{\num{2169}} \\
     & & 1b &  $\varInfomorphism_\text{between-ep-np}$, $\varInfomorphism_\text{between-np-ep}$  & $\checkmark$  & $\times$ & \uline{\num{1075.67}} & \num{377.66} & \num{2122.51} & \uline{\num{1.47}} & \uline{\num{5675074}} & \uline{\num{3654287}} & \num{21532367} & \num{14623422} \\
       \cmidrule{3-14}
     & & 2 &  $\varInfomorphism_\text{between-ep-np}$, $\varInfomorphism_\text{between-np-ep}$, $\varInfomorphism_\text{within-n}$  & $\times$ & $\times$ & \uline{\num{81.39}} & \num{62.83} & \num{8231.41} & \uline{\num{0.04}} & \uline{\num{200599}} & \uline{\num{589524}} & \num{191562509} & \num{637897} \\
       \cmidrule{3-14}
     & & 3 & $\varInfomorphism_\text{within-n}$ & $\times$ & $\times$ & \uline{\num{78.94}} & \num{57.43} & \num{1042.67} & \uline{\num{54.53}} & \uline{\num{194463}} & \num{581320} & \num{9712778} & \uline{\num{578976}} \\
       \cmidrule{3-14}
     & & 4 &  $\varInfomorphism_\text{within-n}$ (twice) & $\times$ & $\times$ & \uline{\num{139.21}} & \uline{\num{116.81}} & \num{1226.04} & \num{514.43} & \uline{\num{356326}} & \uline{\num{817234}} & \num{13381704} & \num{6310756} \\
     \cmidrule{2-14} & \multirow{5}{*}{\rotatebox{90}{update}}
       & 1a &  $\varInfomorphism_\text{between-ep-np}$, $\varInfomorphism_\text{between-np-ep}$\vspace{0.4em} & $\checkmark$ & $\times$ & \num{30.60} & \num{0.90} & \uline{\num{0.09}} & \uline{\num{0.00}} & \num{194278} & \uline{\num{123}} & \uline{\num{230}} & \num{2960} \\
     & & 1b &  $\varInfomorphism_\text{between-ep-np}$, $\varInfomorphism_\text{between-np-ep}$\vspace{0.4em} & $\checkmark$ & $\times$ & \num{30.60} & \num{0.90} & \uline{\num{0.11}} & \uline{\num{0.00}} & \num{194278} & \uline{\num{123}} & \uline{\num{226}} & \num{2989} \\
     & & 1c &  $\varInfomorphism_\text{between-ep-np}$, $\varInfomorphism_\text{between-np-ep}$ & $\checkmark$ & $\checkmark$ & \num{30.60} & \num{0.90} & \uline{\num{0.07}} & \uline{\num{0.38}} & \num{194278} & \num{123} & \uline{\num{7}} & \uline{\num{4}} \\
\bottomrule
\end{tabular}

    \end{table*}}
\edbtarxiv{
\changed{
Normalization 
changes the structure of an LPG and hence also
affects query performance. We define 17 lookup and update queries
across $S_{\text{Lon}}$, $S_{\text{Nw}}$, $S_{\text{Off-1}}$, and
$S_{\text{Ts-2}}$ (cf. Table 8), classified along two
axes. \emph{Direct} queries retrieve property keys $k \in X \cup Y$ of a single GO-FD $\varphi: P :: X \Rightarrow Y$, i.e., information that is redundant in the original graph but unique after normalization; \emph{indirect} queries consist of larger patterns. For direct queries, the normalized graph allows for two different query rewritings: an \emph{optimal} rewriting that targets only the newly created nodes, and a \emph{naive} rewriting which is non-optimal. Figure~\ref{fig:query-performance} reports $m_{\text{Q.DBHit}}$ before and after normalization for each query and its rewritings. 
}

\changed{Direct queries with optimal rewriting show consistent reductions in database hits, independent of the type of transformation (including reification), and the type of query (lookup/update). 
Query runtimes ($m_{\text{Q.Dur.}}$) improve in a similar way
 as  database hits ($m_{\text{Q.DBHit}}$).
Naive rewritings and indirect queries show a mixed picture: some benefit from reduced redundancy, others incur a cost from the larger patterns.
The takeaway is that direct queries are a meaningful class 
since they represent exactly what users are likely to query:
dependencies typically mark \emph{important} information.
Detailed queries and per-query results can be found in the full version~\cite{schrott2026graphnative}.
}
}{ %
\changed{
Normalization 
changes the structure of an \ac{lpg} and hence also
affects query performance.
To get insight in the precise influence, we consider four scenarios (graph and dependency combinations, cf. \Cref{tab:experiment_datasets}) and define queries for them. 
These queries are formulated using the structure of the graph \emph{before} normalization, we will refer to this as the \emph{original graph}. To properly guide the discussion on the effect of normalization on querying, we designed two types of queries, which we call \emph{direct} and \emph{indirect}. Direct queries are formulated specifically for a given \ac{gnfd} and ask for pieces of information that are redundant in the original graph, but are unique in the normalized graph.
Formally, this corresponds to queries which only ask for property keys $k\in X\cup Y$ of a single \ac{gnfd} $\varphi: P :: X \determ Y$. 
Indirect queries go beyond querying redundant pieces of information, and thus require some more complex navigation.
}

\changed{For all queries defined on the original graph, we need to have a counterpart for the normalized graph. For direct queries, we consider two kinds of counterparts. First, the \emph{optimal} rewriting: these queries only need to consider the newly created graph objects (these represent the normalized non-redundant pieces of information). On the other hand, we have the 
\emph{naive} rewriting: its query pattern is the fully transformed scope. However, given the new structure, this pattern may be unnecessary large.}

\changed{The hypothesis is that direct queries improve in performance under optimal rewriting. For all other queries, it depends on the precise graph and query structure.}

\changed{To show this impact, \num{17} lookup and update queries\footnote{\changed{The queries are available \edbtarxiv{online:\\}{in the \hyperref[sec:appendix]{appendix} and online:} \url{https://dmki-tuwien.github.io/lpg-normalization/evaluation_scenarios/}}} are executed on four different scenarios. \Cref{fig:query-performance} shows the results on a high-level using $m_\text{Q.DBHit}$. Detailed results for both metrics, $m_\text{Q.DBHit}$ and $m_\text{Q.Dur}$,  before and after normalization, are listed per query in \Cref{tab:query-exp-overview}. There, the best performance per query and metric is \uline{underlined}.} 

\changed{The results confirm our hypothesis. For direct queries with optimal rewriting, performance consistently improves over the original graph, both in terms of execution time and database hits. Two mechanisms drive this improvement: first, the elimination of duplicate information directly reduces the number of database hits required to retrieve a given fact; second, the keys introduced by normalization on the newly created nodes are backed by automatic indices, which further accelerates lookups. The contrast between $S_\text{Lon}$-1a and $S_\text{Lon}$-1b illustrates the importance of optimal rewriting: both queries retrieve the same result, but $S_\text{Lon}$-1b retains the unnecessarily large pattern of the fully transformed scope and in fact performs \emph{worse} than on the original graph, while $S_\text{Lon}$-1a uses the optimal rewriting and benefits from the performance gains described above. For indirect queries and direct queries under naive rewriting, the picture is mixed: some queries benefit from the reduced redundancy, while others -- such as $S_\text{Lon}$-2 -- suffer from the additional navigation introduced by new graph objects and reification nodes. Regarding the query type, we observed no systematic difference  between lookup and update queries.}

\changed{We conclude that there is a mixed picture regarding the effects of normalization on query performance (across all types of transformations). 
However, for direct queries with optimal rewriting, performance does improve. We argue that this is a valuable insight: when a dependency is defined, it typically means that the affected information is of importance. Thus, one might often explicitly query for that information. In practice, direct queries with optimal rewriting might therefore not only be a theoretical idea, but a class of queries that naturally arises whenever the dependencies reflect the meaning of the data.}
} %

\edbtarxiv{\subsubsection*{\textbf{Comparison to Related Work}}}{\subsection*{Comparison to Related Work}}

\changed{To show that our approach generalizes \citet{skavantzos_third_2025}, we compare the per-graph metrics on the scenarios supported by both works: $S_\text{No-2}$, $S_\text{Nw}$, $S_\text{Off-1}$, $S_\text{Off-2}$, and $S_\text{Off-3}$. As \Cref{tab:eval-rel-work} shows, both approaches yield identical metrics, experimentally corroborating that $\varInfomorphism_\text{between-n}$ is equivalent to the \ac{lpg} node normalization of \cite{skavantzos_third_2025}.}

\changed{
Scenario $S_\text{No-1}$ could in principle be covered by~\cite{skavantzos_third_2025}, but is omitted due to a subtle difference in the treatment of keys. 
\edbtarxiv{In \cite{skavantzos_third_2025}, \acp{guc} are treated}{\citet{skavantzos_third_2025} treat \acp{guc}} 
as dependencies that determine all property values of a node ($S_\text{No-2}$), whereas we treat keys as dependencies that determine a node's identity, which in turn determines its properties (cf. structurally implied dependencies, \Cref{sec:dependency}). As discussed in \Cref{sec:native-property-graph-normal-forms}, such key dependencies do not cause redundancies and require no transformation, yielding a smaller -- yet normalized -- graph (visible when comparing $S_\text{No-1}$ and $S_\text{No-2}$). Scenario $S_\text{No-3}$ is mentioned in~\cite{skavantzos_third_2025} but without results, so it is excluded from \Cref{tab:eval-rel-work}.
}

\edbtarxiv{\subsubsection*{\textbf{Discussion}}}
{
    \subsection*{Discussion}
}

As discussed above, we see that graph-native normalization extends the existing approach to \ac{lpg} normalization and eliminates redundancies in \acp{lpg}. 
However, it also leads to an increase in the number of graph objects,
i.e., nodes and edges, as \Cref{fig:transformation-impact} shows. 
As an effect, the complexity of a graph's structure increases as well.
While the total number of properties in a graph is reduced as an effect of redundancy reduction,
it can happen that the average number of properties in nodes increases, 
due to reification and the movement of properties, 
as discussed for \changed{$S_\text{Uni-1}$}.
Thus, we conclude that optimizing for redundancy reduction and a minimal number of graph objects in a graph 
simultaneously is not possible. 

\changed{However, with respect to query performance, normalization (and thus redundancy reduction) can lead to improvement. This is an interesting insight compared to the relational data model, where}
denormalized schemas are sometimes intentionally used to ease and speed up querying of relations. %

\changed{Another difference between the relational model and \acp{lpg} concerns the definition of dependencies. 
LPGs provide a larger number of possible modeling choices compared to the relational model. Thus, LPG normalization depends more heavily on the initial structure of the data. This means that, compared to the relational model, more attention must be given to the design of the GO-FDs.
}

\section{Conclusion}
\label{sec:conclusion}
Building upon related work and in analogy with relational database systems,
this paper proposes a graph-native approach to \acf{lpg} normalization. 
The novel process employs graph transformation rules to normalize graphs into \ac{gnnf} 
and 
\changed{subsequently reduces redundancy}. 
Our evaluation experimentally shows the applicability of our approach using a variety of graphs and functional dependencies; 
\changed{it highlights} the tradeoff between having a simple structure and the reduction of redundancies,  \changed{and it emphasizes the possible impact on query performance}.

These observations open several avenues for future research. 
First, while we assume the availability of \acp{gnfd}, 
in practice these dependencies may \changed{need to be determined automatically, e.g., through mining}\edbtarxiv{.}{\changed{. However, following related work~\cite{skavantzos_third_2025}, current mining techniques only can provide heuristics on meaningful dependencies; thus automatic \ac{gnfd} discovery is an interesting next step. }}
Another aspect of future work is to consider a broader range of dependencies using complex graph patterns as their scope. 
\changed{Based on our insights on query performance we see the need for further research on rewriting queries for normalized graphs as well.}
Finally, we plan to investigate to what extent our work can be applied to knowledge graphs in \acs{rdf}.

\section*{Artifacts}
\label{sec:artifacts}
All artifacts associated with this work are available online on GitHub: \url{https://github.com/dmki-tuwien/lpg-normalization}

\bibliographystyle{helpers/ACM-Reference-Format}
\bibliography{helpers/references}


\begin{thebibliography}{40}


\ifx \showCODEN    \undefined \def \showCODEN     #1{\unskip}     \fi
\ifx \showISBNx    \undefined \def \showISBNx     #1{\unskip}     \fi
\ifx \showISBNxiii \undefined \def \showISBNxiii  #1{\unskip}     \fi
\ifx \showISSN     \undefined \def \showISSN      #1{\unskip}     \fi
\ifx \showLCCN     \undefined \def \showLCCN      #1{\unskip}     \fi
\ifx \shownote     \undefined \def \shownote      #1{#1}          \fi
\ifx \showarticletitle \undefined \def \showarticletitle #1{#1}   \fi
\ifx \showURL      \undefined \def \showURL       {\relax}        \fi
\providecommand\bibfield[2]{#2}
\providecommand\bibinfo[2]{#2}
\providecommand\natexlab[1]{#1}
\providecommand\showeprint[2][]{arXiv:#2}

\bibitem[Abiteboul et~al\mbox{.}(1995)]%
        {alice}
\bibfield{author}{\bibinfo{person}{Serge Abiteboul}, \bibinfo{person}{Richard
  Hull}, {and} \bibinfo{person}{Victor Vianu}.}
  \bibinfo{year}{1995}\natexlab{}.
\newblock \bibinfo{booktitle}{\emph{Foundations of Databases}}.
\newblock \bibinfo{publisher}{Addison-Wesley}.
\newblock


\bibitem[Ahmetaj et~al\mbox{.}(2025)]%
        {AhmetajBHHJGMMM25}
\bibfield{author}{\bibinfo{person}{Shqiponja Ahmetaj}, \bibinfo{person}{Iovka
  Boneva}, \bibinfo{person}{Jan Hidders}, \bibinfo{person}{Katja Hose},
  \bibinfo{person}{Maxime Jakubowski}, \bibinfo{person}{Jos{\'{e}} Emilio~Labra
  Gayo}, \bibinfo{person}{Wim Martens}, \bibinfo{person}{Fabio Mogavero},
  \bibinfo{person}{Filip Murlak}, \bibinfo{person}{Cem Okulmus},
  \bibinfo{person}{Axel Polleres}, \bibinfo{person}{Ognjen Savkovic},
  \bibinfo{person}{Mantas Simkus}, {and} \bibinfo{person}{Dominik Tomaszuk}.}
  \bibinfo{year}{2025}\natexlab{}.
\newblock \showarticletitle{Common Foundations for SHACL, ShEx, and PG-Schema}.
  In \bibinfo{booktitle}{\emph{{WWW}}}. \bibinfo{publisher}{{ACM}},
  \bibinfo{pages}{8--21}.
\newblock


\bibitem[Angles et~al\mbox{.}(2023)]%
        {angles_pg-schema_2023}
\bibfield{author}{\bibinfo{person}{Renzo Angles}, \bibinfo{person}{Angela
  Bonifati}, \bibinfo{person}{Stefania Dumbrava}, \bibinfo{person}{George
  Fletcher}, \bibinfo{person}{Alastair Green}, \bibinfo{person}{Jan Hidders},
  \bibinfo{person}{Bei Li}, \bibinfo{person}{Leonid Libkin},
  \bibinfo{person}{Victor Marsault}, \bibinfo{person}{Wim Martens},
  \bibinfo{person}{Filip Murlak}, \bibinfo{person}{Stefan Plantikow},
  \bibinfo{person}{Ognjen Savkovic}, \bibinfo{person}{Michael Schmidt},
  \bibinfo{person}{Juan Sequeda}, \bibinfo{person}{Slawek Staworko},
  \bibinfo{person}{Dominik Tomaszuk}, \bibinfo{person}{Hannes Voigt},
  \bibinfo{person}{Domagoj Vrgoc}, \bibinfo{person}{Mingxi Wu}, {and}
  \bibinfo{person}{Dusan Zivkovic}.} \bibinfo{year}{2023}\natexlab{}.
\newblock \showarticletitle{PG-Schema: Schemas for Property Graphs}.
\newblock \bibinfo{journal}{\emph{Proc. {ACM} Manag. Data}}
  \bibinfo{volume}{1}, \bibinfo{number}{2} (\bibinfo{year}{2023}),
  \bibinfo{pages}{198:1--198:25}.
\newblock


\bibitem[Angles et~al\mbox{.}(2021)]%
        {angles_pg-keys_2021}
\bibfield{author}{\bibinfo{person}{Renzo Angles}, \bibinfo{person}{Angela
  Bonifati}, \bibinfo{person}{Stefania Dumbrava}, \bibinfo{person}{George
  Fletcher}, \bibinfo{person}{Keith~W. Hare}, \bibinfo{person}{Jan Hidders},
  \bibinfo{person}{Victor~E. Lee}, \bibinfo{person}{Bei Li},
  \bibinfo{person}{Leonid Libkin}, \bibinfo{person}{Wim Martens},
  \bibinfo{person}{Filip Murlak}, \bibinfo{person}{Josh Perryman},
  \bibinfo{person}{Ognjen Savkovic}, \bibinfo{person}{Michael Schmidt},
  \bibinfo{person}{Juan~F. Sequeda}, \bibinfo{person}{Slawek Staworko}, {and}
  \bibinfo{person}{Dominik Tomaszuk}.} \bibinfo{year}{2021}\natexlab{}.
\newblock \showarticletitle{PG-Keys: Keys for Property Graphs}. In
  \bibinfo{booktitle}{\emph{{SIGMOD} Conference}}. \bibinfo{publisher}{{ACM}},
  \bibinfo{pages}{2423--2436}.
\newblock


\bibitem[Armstrong(1974)]%
        {armstrong_dependency_1974}
\bibfield{author}{\bibinfo{person}{William~Ward Armstrong}.}
  \bibinfo{year}{1974}\natexlab{}.
\newblock \showarticletitle{Dependency Structures of Data Base Relationships}.
  In \bibinfo{booktitle}{\emph{{IFIP} Congress}}.
  \bibinfo{publisher}{North-Holland}, \bibinfo{pages}{580--583}.
\newblock


\bibitem[Batini and Scannapieco(2016)]%
        {batini_data_2016}
\bibfield{author}{\bibinfo{person}{Carlo Batini} {and} \bibinfo{person}{Monica
  Scannapieco}.} \bibinfo{year}{2016}\natexlab{}.
\newblock \bibinfo{booktitle}{\emph{Data and Information Quality - Dimensions,
  Principles and Techniques}}.
\newblock \bibinfo{publisher}{Springer}.
\newblock


\bibitem[Bonifati(2025)]%
        {bonifati_versatile_2025}
\bibfield{author}{\bibinfo{person}{Angela Bonifati}.}
  \bibinfo{year}{2025}\natexlab{}.
\newblock \showarticletitle{Versatile Property Graph Transformations}.
\newblock \bibinfo{journal}{\emph{Proc. {VLDB} Endow.}} \bibinfo{volume}{18},
  \bibinfo{number}{12} (\bibinfo{year}{2025}), \bibinfo{pages}{5516--5526}.
\newblock


\bibitem[Bonifati et~al\mbox{.}(2018)]%
        {bonifati_querying_2018}
\bibfield{author}{\bibinfo{person}{Angela Bonifati}, \bibinfo{person}{George
  H.~L. Fletcher}, \bibinfo{person}{Hannes Voigt}, {and}
  \bibinfo{person}{Nikolay Yakovets}.} \bibinfo{year}{2018}\natexlab{}.
\newblock \bibinfo{booktitle}{\emph{Querying Graphs}}.
\newblock \bibinfo{publisher}{Morgan {\&} Claypool Publishers}.
\newblock


\bibitem[Bronselaer(2021)]%
        {andreasen_data_2021}
\bibfield{author}{\bibinfo{person}{Antoon Bronselaer}.}
  \bibinfo{year}{2021}\natexlab{}.
\newblock \showarticletitle{Data Quality Management: An Overview of Methods and
  Challenges}. In \bibinfo{booktitle}{\emph{{FQAS}}}
  \emph{(\bibinfo{series}{Lecture Notes in Computer Science},
  Vol.~\bibinfo{volume}{12871})}. \bibinfo{publisher}{Springer},
  \bibinfo{pages}{127--141}.
\newblock


\bibitem[Codd(1970)]%
        {codd_relational_1970}
\bibfield{author}{\bibinfo{person}{E.~F. Codd}.}
  \bibinfo{year}{1970}\natexlab{}.
\newblock \showarticletitle{A Relational Model of Data for Large Shared Data
  Banks}.
\newblock \bibinfo{journal}{\emph{Commun. {ACM}}} \bibinfo{volume}{13},
  \bibinfo{number}{6} (\bibinfo{year}{1970}), \bibinfo{pages}{377--387}.
\newblock


\bibitem[Codd(1971)]%
        {codd_further_1971}
\bibfield{author}{\bibinfo{person}{E.~F. Codd}.}
  \bibinfo{year}{1971}\natexlab{}.
\newblock \showarticletitle{Further Normalization of the Data Base Relational
  Model}.
\newblock \bibinfo{journal}{\emph{Research Report / {RJ} / {IBM} / San Jose,
  California}}  \bibinfo{volume}{{RJ909}} (\bibinfo{year}{1971}).
\newblock


\bibitem[Codd(1974)]%
        {codd_recent_1974}
\bibfield{author}{\bibinfo{person}{E.~F. Codd}.}
  \bibinfo{year}{1974}\natexlab{}.
\newblock \showarticletitle{Recent Investigations in Relational Data Base
  Systems}. In \bibinfo{booktitle}{\emph{{IFIP} Congress}}.
  \bibinfo{publisher}{North-Holland}, \bibinfo{pages}{1017--1021}.
\newblock


\bibitem[Delobel and Casey(1973)]%
        {delobel_decomposition_1973}
\bibfield{author}{\bibinfo{person}{Claude Delobel} {and}
  \bibinfo{person}{Richard~G. Casey}.} \bibinfo{year}{1973}\natexlab{}.
\newblock \showarticletitle{Decomposition of a Data Base and the Theory of
  Boolean Switching Functions}.
\newblock \bibinfo{journal}{\emph{{IBM} J. Res. Dev.}} \bibinfo{volume}{17},
  \bibinfo{number}{5} (\bibinfo{year}{1973}), \bibinfo{pages}{374--386}.
\newblock


\bibitem[Ehrlinger and W{\"{o}}{\ss}(2018)]%
        {hacid_novel_2019}
\bibfield{author}{\bibinfo{person}{Lisa Ehrlinger} {and}
  \bibinfo{person}{Wolfram W{\"{o}}{\ss}}.} \bibinfo{year}{2018}\natexlab{}.
\newblock \showarticletitle{A Novel Data Quality Metric for Minimality}. In
  \bibinfo{booktitle}{\emph{QUAT@WISE}} \emph{(\bibinfo{series}{Lecture Notes
  in Computer Science}, Vol.~\bibinfo{volume}{11235})}.
  \bibinfo{publisher}{Springer}, \bibinfo{pages}{1--15}.
\newblock


\bibitem[Francis et~al\mbox{.}(2023a)]%
        {francis_gpc_2023}
\bibfield{author}{\bibinfo{person}{Nadime Francis},
  \bibinfo{person}{Am{\'{e}}lie Gheerbrant}, \bibinfo{person}{Paolo
  Guagliardo}, \bibinfo{person}{Leonid Libkin}, \bibinfo{person}{Victor
  Marsault}, \bibinfo{person}{Wim Martens}, \bibinfo{person}{Filip Murlak},
  \bibinfo{person}{Liat Peterfreund}, \bibinfo{person}{Alexandra Rogova}, {and}
  \bibinfo{person}{Domagoj Vrgoc}.} \bibinfo{year}{2023}\natexlab{a}.
\newblock \showarticletitle{{GPC:} {A} Pattern Calculus for Property Graphs}.
  In \bibinfo{booktitle}{\emph{{PODS}}}. \bibinfo{publisher}{{ACM}},
  \bibinfo{pages}{241--250}.
\newblock


\bibitem[Francis et~al\mbox{.}(2023b)]%
        {francis_researchers_2023}
\bibfield{author}{\bibinfo{person}{Nadime Francis},
  \bibinfo{person}{Am{\'{e}}lie Gheerbrant}, \bibinfo{person}{Paolo
  Guagliardo}, \bibinfo{person}{Leonid Libkin}, \bibinfo{person}{Victor
  Marsault}, \bibinfo{person}{Wim Martens}, \bibinfo{person}{Filip Murlak},
  \bibinfo{person}{Liat Peterfreund}, \bibinfo{person}{Alexandra Rogova}, {and}
  \bibinfo{person}{Domagoj Vrgoc}.} \bibinfo{year}{2023}\natexlab{b}.
\newblock \showarticletitle{A Researcher's Digest of {GQL} (Invited Talk)}. In
  \bibinfo{booktitle}{\emph{{ICDT}}} \emph{(\bibinfo{series}{LIPIcs},
  Vol.~\bibinfo{volume}{255})}. \bibinfo{publisher}{Schloss Dagstuhl -
  Leibniz-Zentrum f{\"{u}}r Informatik}, \bibinfo{pages}{1:1--1:22}.
\newblock


\bibitem[Hellings et~al\mbox{.}(2014)]%
        {hellings_implication_2014}
\bibfield{author}{\bibinfo{person}{Jelle Hellings}, \bibinfo{person}{Marc
  Gyssens}, \bibinfo{person}{Jan Paredaens}, {and} \bibinfo{person}{Yuqing
  Wu}.} \bibinfo{year}{2014}\natexlab{}.
\newblock \showarticletitle{Implication and Axiomatization of Functional
  Constraints on Patterns with an Application to the {RDF} Data Model}. In
  \bibinfo{booktitle}{\emph{FoIKS}} \emph{(\bibinfo{series}{Lecture Notes in
  Computer Science}, Vol.~\bibinfo{volume}{8367})}.
  \bibinfo{publisher}{Springer}, \bibinfo{pages}{250--269}.
\newblock


\bibitem[Heuer et~al\mbox{.}(2018)]%
        {saake_datenbanken_2018}
\bibfield{author}{\bibinfo{person}{Andreas Heuer}, \bibinfo{person}{Gunter
  Saake}, {and} \bibinfo{person}{Kai{-}Uwe Sattler}.}
  \bibinfo{year}{2018}\natexlab{}.
\newblock \bibinfo{booktitle}{\emph{Datenbanken - Konzepte und Sprachen, 6.
  Auflage}}.
\newblock \bibinfo{publisher}{{MITP}}.
\newblock


\bibitem[Hogan et~al\mbox{.}(2022)]%
        {hogan_knowledge_2022}
\bibfield{author}{\bibinfo{person}{Aidan Hogan}, \bibinfo{person}{Eva
  Blomqvist}, \bibinfo{person}{Michael Cochez}, \bibinfo{person}{Claudia
  d'Amato}, \bibinfo{person}{Gerard de Melo}, \bibinfo{person}{Claudio
  Gutierrez}, \bibinfo{person}{Sabrina Kirrane}, \bibinfo{person}{Jos{\'{e}}
  Emilio~Labra Gayo}, \bibinfo{person}{Roberto Navigli},
  \bibinfo{person}{Sebastian Neumaier}, \bibinfo{person}{Axel{-}Cyrille~Ngonga
  Ngomo}, \bibinfo{person}{Axel Polleres}, \bibinfo{person}{Sabbir~M. Rashid},
  \bibinfo{person}{Anisa Rula}, \bibinfo{person}{Lukas Schmelzeisen},
  \bibinfo{person}{Juan~F. Sequeda}, \bibinfo{person}{Steffen Staab}, {and}
  \bibinfo{person}{Antoine Zimmermann}.} \bibinfo{year}{2022}\natexlab{}.
\newblock \showarticletitle{Knowledge Graphs}.
\newblock \bibinfo{journal}{\emph{{ACM} Comput. Surv.}} \bibinfo{volume}{54},
  \bibinfo{number}{4} (\bibinfo{year}{2022}), \bibinfo{pages}{71:1--71:37}.
\newblock


\bibitem[ISO GQL(2024)]%
        {iso_gql_2024}
ISO GQL.
\newblock \bibinfo{title}{ISO/IEC 39075:2024 - Information technology —
  Database languages — GQL}.
\newblock
\urldef\tempurl%
\url{https://www.iso.org/standard/76120.html}
\showURL{%
\tempurl}


\bibitem[Lavrinovics et~al\mbox{.}(2025)]%
        {LavrinovicsBBH25}
\bibfield{author}{\bibinfo{person}{Ernests Lavrinovics}, \bibinfo{person}{Russa
  Biswas}, \bibinfo{person}{Johannes Bjerva}, {and} \bibinfo{person}{Katja
  Hose}.} \bibinfo{year}{2025}\natexlab{}.
\newblock \showarticletitle{{Knowledge Graphs, Large Language Models, and
  Hallucinations: An {NLP} Perspective}}.
\newblock \bibinfo{journal}{\emph{J. Web Semant.}}  \bibinfo{volume}{85}
  (\bibinfo{year}{2025}), \bibinfo{pages}{100844}.
\newblock


\bibitem[Leser and Naumann(2007)]%
        {leser_informationsintegration_2007}
\bibfield{author}{\bibinfo{person}{Ulf Leser} {and} \bibinfo{person}{Felix
  Naumann}.} \bibinfo{year}{2007}\natexlab{}.
\newblock \bibinfo{booktitle}{\emph{Informationsintegration - Architekturen und
  Methoden zur Integration verteilter und heterogener Datenquellen}}.
\newblock \bibinfo{publisher}{dpunkt.verlag}.
\newblock
\showISBNx{978-3-89864-400-6}


\bibitem[Lin et~al\mbox{.}(2024)]%
        {lin_fraudgt_2024}
\bibfield{author}{\bibinfo{person}{Junhong Lin}, \bibinfo{person}{Xiaojie Guo},
  \bibinfo{person}{Yada Zhu}, \bibinfo{person}{Samuel Mitchell},
  \bibinfo{person}{Erik Altman}, {and} \bibinfo{person}{Julian Shun}.}
  \bibinfo{year}{2024}\natexlab{}.
\newblock \showarticletitle{FraudGT: {A} Simple, Effective, and Efficient Graph
  Transformer for Financial Fraud Detection}. In
  \bibinfo{booktitle}{\emph{{ICAIF}}}. \bibinfo{publisher}{{ACM}},
  \bibinfo{pages}{292--300}.
\newblock


\bibitem[Manouvrier and Belhajjame(2026)]%
        {manouvrier_graph_2025}
\bibfield{author}{\bibinfo{person}{Maude Manouvrier} {and}
  \bibinfo{person}{Khalid Belhajjame}.} \bibinfo{year}{2026}\natexlab{}.
\newblock \showarticletitle{Graph functional dependencies: Analysis and
  translation to PG-schema}.
\newblock \bibinfo{journal}{\emph{Inf. Syst.}}  \bibinfo{volume}{136}
  (\bibinfo{year}{2026}), \bibinfo{pages}{102633}.
\newblock


\bibitem[Memgraph({[n.\,d.]})]%
        {memgraph-datamodel}
\bibfield{author}{\bibinfo{person}{Memgraph}.}
\newblock \bibinfo{title}{Graph data model}.
\newblock
\urldef\tempurl%
\url{https://memgraph.com/docs/data-modeling/graph-data-model}
\showURL{%
\tempurl}


\bibitem[Neo4j({[n.\,d.]})]%
        {neo4j-datamodel}
\bibfield{author}{\bibinfo{person}{Neo4j}.}
\newblock \bibinfo{title}{Graph database concepts}.
\newblock
\urldef\tempurl%
\url{https://neo4j.com/docs/getting-started/appendix/graphdb-concepts/}
\showURL{%
\tempurl}


\bibitem[Neo4j(2026)]%
        {northwind_database}
\bibfield{author}{\bibinfo{person}{Neo4j}.} \bibinfo{year}{2026}\natexlab{}.
\newblock \bibinfo{booktitle}{\emph{Northwind Graph Example}}.
\newblock
\urldef\tempurl%
\url{https://github.com/neo4j-graph-examples/northwind}
\showURL{%
\tempurl}


\bibitem[of~Investigative~Journalists({[n.\,d.]})]%
        {offshore_database}
\bibfield{author}{\bibinfo{person}{International~Consortium of
  Investigative~Journalists}.}
\newblock \bibinfo{title}{Offshore Leaks Database}.
\newblock
\urldef\tempurl%
\url{https://github.com/ICIJ/offshoreleaks-data-packages}
\showURL{%
\tempurl}


\bibitem[Paredaens et~al\mbox{.}(1989)]%
        {paredaens_structure_1989}
\bibfield{author}{\bibinfo{person}{Jan Paredaens}, \bibinfo{person}{Paul~De
  Bra}, \bibinfo{person}{Marc Gyssens}, {and} \bibinfo{person}{Dirk~Van
  Gucht}.} \bibinfo{year}{1989}\natexlab{}.
\newblock \bibinfo{booktitle}{\emph{The Structure of the Relational Database
  Model}}. \bibinfo{series}{{EATCS} Monographs on Theoretical Computer
  Science}, Vol.~\bibinfo{volume}{17}.
\newblock \bibinfo{publisher}{Springer}.
\newblock


\bibitem[Parr({[n.\,d.]})]%
        {antlr-website}
\bibfield{author}{\bibinfo{person}{Terence Parr}.}
\newblock \bibinfo{title}{ANTLR}.
\newblock
\urldef\tempurl%
\url{https://www.antlr.org}
\showURL{%
\tempurl}


\bibitem[Rabbani et~al\mbox{.}(2024)]%
        {RabbaniLBH24}
\bibfield{author}{\bibinfo{person}{Kashif Rabbani}, \bibinfo{person}{Matteo
  Lissandrini}, \bibinfo{person}{Angela Bonifati}, {and} \bibinfo{person}{Katja
  Hose}.} \bibinfo{year}{2024}\natexlab{}.
\newblock \showarticletitle{{Transforming {RDF} Graphs to Property Graphs using
  Standardized Schemas}}.
\newblock \bibinfo{journal}{\emph{Proc. {ACM} Manag. Data}}
  \bibinfo{volume}{2}, \bibinfo{number}{6} (\bibinfo{year}{2024}),
  \bibinfo{pages}{242:1--242:25}.
\newblock


\bibitem[Sakr et~al\mbox{.}(2021)]%
        {SakrBVIAAAABBDV21}
\bibfield{author}{\bibinfo{person}{Sherif Sakr}, \bibinfo{person}{Angela
  Bonifati}, \bibinfo{person}{Hannes Voigt}, \bibinfo{person}{Alexandru Iosup},
  \bibinfo{person}{Khaled Ammar}, \bibinfo{person}{Renzo Angles},
  \bibinfo{person}{Walid~G. Aref}, \bibinfo{person}{Marcelo Arenas},
  \bibinfo{person}{Maciej Besta}, \bibinfo{person}{Peter~A. Boncz},
  \bibinfo{person}{Khuzaima Daudjee}, \bibinfo{person}{Emanuele~Della Valle},
  \bibinfo{person}{Stefania Dumbrava}, \bibinfo{person}{Olaf Hartig},
  \bibinfo{person}{Bernhard Haslhofer}, \bibinfo{person}{Tim Hegeman},
  \bibinfo{person}{Jan Hidders}, \bibinfo{person}{Katja Hose},
  \bibinfo{person}{Adriana Iamnitchi}, \bibinfo{person}{Vasiliki Kalavri},
  \bibinfo{person}{Hugo Kapp}, \bibinfo{person}{Wim Martens},
  \bibinfo{person}{M.~Tamer {\"{O}}zsu}, \bibinfo{person}{Eric Peukert},
  \bibinfo{person}{Stefan Plantikow}, \bibinfo{person}{Mohamed Ragab},
  \bibinfo{person}{Matei Ripeanu}, \bibinfo{person}{Semih Salihoglu},
  \bibinfo{person}{Christian Schulz}, \bibinfo{person}{Petra Selmer},
  \bibinfo{person}{Juan~F. Sequeda}, \bibinfo{person}{Joshua Shinavier},
  \bibinfo{person}{G{\'{a}}bor Sz{\'{a}}rnyas}, \bibinfo{person}{Riccardo
  Tommasini}, \bibinfo{person}{Antonino Tumeo}, \bibinfo{person}{Alexandru
  Uta}, \bibinfo{person}{Ana~Lucia Varbanescu}, \bibinfo{person}{Hsiang{-}Yun
  Wu}, \bibinfo{person}{Nikolay Yakovets}, \bibinfo{person}{Da Yan}, {and}
  \bibinfo{person}{Eiko Yoneki}.} \bibinfo{year}{2021}\natexlab{}.
\newblock \showarticletitle{{The future is big graphs: a community view on
  graph processing systems}}.
\newblock \bibinfo{journal}{\emph{Commun. {ACM}}} \bibinfo{volume}{64},
  \bibinfo{number}{9} (\bibinfo{year}{2021}), \bibinfo{pages}{62--71}.
\newblock


\bibitem[Schrott(2026a)]%
        {schrott_2026_18479732}
\bibfield{author}{\bibinfo{person}{Johannes Schrott}, \bibinfo{person}{Maxime Jakubowski}, {and} \bibinfo{person}{Katja Hose}.}
  \bibinfo{year}{2026}\natexlab{a}.
\newblock \bibinfo{booktitle}{\emph{London Public Transport}}.
\newblock
\href{https://doi.org/10.5281/zenodo.18479731}{doi:\nolinkurl{10.5281/zenodo.18479731}}


\bibitem[Schrott(2026b)]%
        {schrott_2026_18478422}
\bibfield{author}{\bibinfo{person}{Johannes Schrott}, \bibinfo{person}{Maxime Jakubowski}, {and} \bibinfo{person}{Katja Hose}.}
  \bibinfo{year}{2026}\natexlab{b}.
\newblock \bibinfo{booktitle}{\emph{Train Services}}.
\newblock
\href{https://doi.org/10.5281/zenodo.18478421}{doi:\nolinkurl{10.5281/zenodo.18478421}}


\bibitem[Skavantzos and Link(2023)]%
        {skavantzos_normalizing_2023}
\bibfield{author}{\bibinfo{person}{Philipp Skavantzos} {and}
  \bibinfo{person}{Sebastian Link}.} \bibinfo{year}{2023}\natexlab{}.
\newblock \showarticletitle{Normalizing Property Graphs}.
\newblock \bibinfo{journal}{\emph{Proc. {VLDB} Endow.}} \bibinfo{volume}{16},
  \bibinfo{number}{11} (\bibinfo{year}{2023}), \bibinfo{pages}{3031--3043}.
\newblock


\bibitem[Skavantzos and Link(2025)]%
        {skavantzos_third_2025}
\bibfield{author}{\bibinfo{person}{Philipp Skavantzos} {and}
  \bibinfo{person}{Sebastian Link}.} \bibinfo{year}{2025}\natexlab{}.
\newblock \showarticletitle{Third and Boyce-Codd normal form for property
  graphs}.
\newblock \bibinfo{journal}{\emph{{VLDB} J.}} \bibinfo{volume}{34},
  \bibinfo{number}{2} (\bibinfo{year}{2025}), \bibinfo{pages}{23}.
\newblock


\bibitem[Skavantzos et~al\mbox{.}(2021)]%
        {skavantzos_uniqueness_2021}
\bibfield{author}{\bibinfo{person}{Philipp Skavantzos}, \bibinfo{person}{Kaiqi
  Zhao}, {and} \bibinfo{person}{Sebastian Link}.}
  \bibinfo{year}{2021}\natexlab{}.
\newblock \showarticletitle{Uniqueness Constraints on Property Graphs}. In
  \bibinfo{booktitle}{\emph{CAiSE}} \emph{(\bibinfo{series}{Lecture Notes in
  Computer Science}, Vol.~\bibinfo{volume}{12751})}.
  \bibinfo{publisher}{Springer}, \bibinfo{pages}{280--295}.
\newblock


\bibitem[Thalheim(2000)]%
        {thalheim_entity_2000}
\bibfield{author}{\bibinfo{person}{Bernhard Thalheim}.}
  \bibinfo{year}{2000}\natexlab{}.
\newblock \bibinfo{booktitle}{\emph{Entity-relationship modeling - foundations
  of database technology}}.
\newblock \bibinfo{publisher}{Springer}.
\newblock


\bibitem[Thalheim(2020)]%
        {thalheim_schema_2020}
\bibfield{author}{\bibinfo{person}{Bernhard Thalheim}.}
  \bibinfo{year}{2020}\natexlab{}.
\newblock \showarticletitle{Schema Optimisation Instead of (Local)
  Normalisation}. In \bibinfo{booktitle}{\emph{FoIKS}}
  \emph{(\bibinfo{series}{Lecture Notes in Computer Science},
  Vol.~\bibinfo{volume}{12012})}. \bibinfo{publisher}{Springer},
  \bibinfo{pages}{281--300}.
\newblock


\bibitem[Vrgoc et~al\mbox{.}(2023)]%
        {vrgoc_milleniumdb_2023}
\bibfield{author}{\bibinfo{person}{Domagoj Vrgoc}, \bibinfo{person}{Carlos
  Rojas}, \bibinfo{person}{Renzo Angles}, \bibinfo{person}{Marcelo Arenas},
  \bibinfo{person}{Diego Arroyuelo}, \bibinfo{person}{Carlos Buil{-}Aranda},
  \bibinfo{person}{Aidan Hogan}, \bibinfo{person}{Gonzalo Navarro},
  \bibinfo{person}{Cristian Riveros}, {and} \bibinfo{person}{Juan Romero}.}
  \bibinfo{year}{2023}\natexlab{}.
\newblock \showarticletitle{MillenniumDB: An Open-Source Graph Database
  System}.
\newblock \bibinfo{journal}{\emph{Data Intell.}} \bibinfo{volume}{5},
  \bibinfo{number}{3} (\bibinfo{year}{2023}), \bibinfo{pages}{560--610}.
\newblock


\end{thebibliography}

\begin{acronym}

\acro{1nf}[1NF]{first normal form}

\acro{2nf}[2NF]{second normal form}

\acro{3nf}[3NF]{third normal form}

\acro{4nf}[4NF]{fourth normal form}

\acro{5nf}[5NF]{fifth normal form}

\acro{ai}[AI]{artificial intelligence}

\acro{api}[API]{application programming interface}

\acro{bcnf}[BCNF]{Boyce-Codd Normal Form}

\acro{cfd}[CFD]{Conditional Functional Dependency}
\acroplural{cfd}[CFDs]{Conditional Functional Dependencies}

\acro{db}[DB]{database}

\acro{dbms}[DBMS]{database management system}

\acro{dq}[DQ]{data quality}

\acro{er}[ER]{entity-relationship}

\acro{fc}[FC]{Functional Constraint}

\acro{fd}[FD]{functional dependency}
\acroplural{fd}[FDs]{functional dependencies}

\acro{fk}[FK]{foreign key}

\acro{ged}[GED]{Graph Entity Dependency}
\acroplural{ged}[GEDs]{Graph Entity Dependencies}

\acro{gfdFan}[GFD]{Graph Functional Dependency}
\acroplural{gfdFan}[GFDs]{Graph Functional Dependencies}

\acro{gfd}[gFD]{Graph-tailored Functional Dependency}
\acroplural{gfd}[gFDs]{Graph-tailored Functional Dependencies}

\acro{ggd}[GGD]{Graph Generating Dependency}
\acroplural{ggd}[GGDs]{Graph Generating Dependencies}

\acro{gnfd}[GO-FD]{Graph Object Functional Dependency}
\acroplural{gnfd}[GO-FDs]{Graph Object Functional Dependencies}

\acro{gnnf}[GN-NF]{Graph-Native Normal Form}

\acro{gns}[GNS]{GN-Schema}

\acro{gpar}[GPAR]{Graph-Pattern Association Rule}

\acro{gpc}[GPC]{graph pattern calculus}

\acro{gql}[GQL]{Graph Query Language}

\acro{guc}[gUC]{Graph-tailored Uniqueness Constraint}

\acro{id}[ID]{identifier}

\acro{json}[JSON]{JavaScript Object Notation}

\acro{kg}[KG]{knowledge graph}

\acro{llm}[LLM]{large language model}

\acro{lpg}[LPG]{labeled property graph}

\acro{ml}[ML]{machine learning}

\acro{mvd}[MVD]{multi-valued dependency}
\acroplural{mvd}[MVDs]{multi-valued dependencies}

\acro{rag}[RAG]{retrieval augmented generation}

\acro{rdf}[RDF]{Resource Description Framework}

\acro{rest}[REST]{Representational State Transfer}

\acro{sbcnf}[SBCNF]{Succinct Boyce-Codd Normal Form} %

\acro{sql}[SQL]{Structured Query Language}

\acro{tgfd}[TGFD]{Temporal Graph Functional Dependency}
\acroplural{tgfd}[TGFDs]{Temporal Graph Functional Dependencies}

\acro{xml}[XML]{Extensible Markup Language}

\acro{yaml}[YAML]{YAML Ain't Markup Language}

\end{acronym}

\edbtarxiv{}{
\appendix
\onecolumn

\section{Appendix}
\label{sec:appendix}

As supplementary material to the experimental evaluation of the graph-native approach to \ac{lpg} normalization (cf. \Cref{sec:evaluation}), we provide in this appendix (1) the minimal covers of the dependencies of each considered scenario, as well as (2) the queries used for the query performance experiment.

\subsection{Minimal Covers}
The minimal covers of the dependencies of each scenario are listed below and are given using the syntax defined for the Python implementation.

\begin{itemize}
    \item \textbf{$S_\text{Lon}$:}
    \begin{itemize}
      \item $(s:\{\mathsf{Station}\}:\{\mathsf{zone},\mathsf{zoneOriginal}\})::s.\mathsf{zoneOriginal}\determ s.\mathsf{zone}$
      \item $(s:\{\mathsf{Station}\}:\{\mathsf{latitude},\mathsf{longitude}\})::s.\mathsf{latitude},s.\mathsf{longitude}\determ s$
      \item $(s:\{\mathsf{Station}\}:\emptyset)\xrightarrow{c:\{\mathsf{CONNECTED\_THROUGH}\}:\{\mathsf{line},\mathsf{color},\mathsf{type}\}}()::c.\mathsf{line}\determ c.\mathsf{color},c.\mathsf{type}$
    \end{itemize}
    \item \textbf{$S_\text{Nw}$:}
    \begin{itemize}
      \item $(o:\{\mathsf{Order}\}:\{\mathsf{orderID}\})::o.\mathsf{orderID}\determ o$
      \item $(o:\{\mathsf{Order}\}:\{\mathsf{orderID},\mathsf{orderDate},\mathsf{customerID},\mathsf{shipCity},\mathsf{shipPostalCode},\mathsf{shipCountry},\mathsf{shipAddress},\mathsf{shipRegion}\})::$\\$o.\mathsf{customerID}\determ o.\mathsf{shipCity},o.\mathsf{shipPostalCode},o.\mathsf{shipCountry},o.\mathsf{shipAddress},o.\mathsf{shipRegion}$
    \end{itemize}
    \item \textbf{$S_\text{No-1}$:}
    \begin{itemize}
      \item $(e:\{\mathsf{Event}\}:\{\mathsf{company},\mathsf{name},\mathsf{time},\mathsf{venue}\})::e.\mathsf{company},e.\mathsf{time}\determ e$
      \item $(e:\{\mathsf{Event}\}:\{\mathsf{company},\mathsf{name},\mathsf{time},\mathsf{venue}\})::e.\mathsf{name}\determ e.\mathsf{company}$
      \item $(e:\{\mathsf{Event}\}:\{\mathsf{company},\mathsf{name},\mathsf{time},\mathsf{venue}\})::e.\mathsf{name},e.\mathsf{time}\determ e.\mathsf{venue}$
      \item $(e:\{\mathsf{Event}\}:\{\mathsf{company},\mathsf{name},\mathsf{time},\mathsf{venue}\})::e.\mathsf{time},e.\mathsf{venue}\determ e.\mathsf{name}$
    \end{itemize}
    \item \textbf{$S_\text{No-2}$:}
    \begin{itemize}
      \item $(e:\{\mathsf{Event}\}:\{\mathsf{company},\mathsf{name},\mathsf{time},\mathsf{venue}\})::e.\mathsf{company},e.\mathsf{time} \determ e.\mathsf{venue}$
      \item $(e:\{\mathsf{Event}\}:\{\mathsf{company},\mathsf{name},\mathsf{time},\mathsf{venue}\})::e.\mathsf{name} \determ e.\mathsf{company}$
      \item $(e:\{\mathsf{Event}\}:\{\mathsf{company},\mathsf{name},\mathsf{time},\mathsf{venue}\})::e.\mathsf{name},e.\mathsf{time}\determ e.\mathsf{venue}$
      \item $(e:\{\mathsf{Event}\}:\{\mathsf{company},\mathsf{name},\mathsf{time},\mathsf{venue}\})::e.\mathsf{time},e.\mathsf{venue}\determ e.\mathsf{name}$
    \end{itemize}
    \item \textbf{$S_\text{No-3}$:}
    \begin{itemize}
      \item $(e:\{\mathsf{Event},\mathsf{Confirmed}\}:\{\mathsf{company},\mathsf{name},\mathsf{time},\mathsf{venue}\})::e.\mathsf{company},e.\mathsf{time}\determ e$
      \item $(e:\{\mathsf{Event},\mathsf{Confirmed}\}:\{\mathsf{company},\mathsf{name},\mathsf{time},\mathsf{venue}\})::e.\mathsf{name} \determ e.\mathsf{company}$
      \item $(e:\{\mathsf{Event},\mathsf{Confirmed}\}:\{\mathsf{company},\mathsf{name},\mathsf{time},\mathsf{venue}\})::e.\mathsf{name},e.\mathsf{time} \determ e.\mathsf{venue}$
      \item $(e:\{\mathsf{Event},\mathsf{Confirmed}\}:\{\mathsf{company},\mathsf{name},\mathsf{time},\mathsf{venue}\})::e.\mathsf{time},e.\mathsf{venue}\determ e.\mathsf{name}$
      \item $(e:\{\mathsf{Event},\mathsf{Confirmed}\}:\{\mathsf{company},\mathsf{name},\mathsf{time},\mathsf{venue}\})::e.\mathsf{name},e.\mathsf{company} \determ e.\mathsf{time}$
    \end{itemize}
    \item \textbf{$S_\text{Off-1}$:}
    \begin{itemize}
        \item $(e:\{\mathsf{Entity}\}:\{\mathsf{jurisdiction\_description},\mathsf{countries},\mathsf{service\_provider},\mathsf{country\_codes}\}):: e.\mathsf{countries} \determ e.\mathsf{country\_codes}$
        \item $(e:\{\mathsf{Entity}\}:\{\mathsf{jurisdiction\_description},\mathsf{countries},\mathsf{service\_provider},\mathsf{country\_codes}\}):: e.\mathsf{country\_codes}\determ e.\mathsf{countries}$
    \end{itemize}
    \item \textbf{$S_\text{Off-2}$:}
    \begin{itemize}
      \item $(e:\{\mathsf{Entity}\}:\{\mathsf{jurisdiction\_description},\mathsf{valid\_until},\mathsf{countries},\mathsf{sourceID},\mathsf{country\_codes}\})::$\\$e.\mathsf{countries},e.\mathsf{jurisdiction\_description}\determ e.\mathsf{country\_codes}$
      \item $(e:\{\mathsf{Entity}\}:\{\mathsf{jurisdiction\_description},\mathsf{valid\_until},\mathsf{countries},\mathsf{sourceID},\mathsf{country\_codes}\})::$\\$ e.\mathsf{countries},e.\mathsf{sourceID}\determ e.\mathsf{country\_codes}$
      \item $(e:\{\mathsf{Entity}\}:\{\mathsf{jurisdiction\_description},\mathsf{valid\_until},\mathsf{countries},\mathsf{sourceID},\mathsf{country\_codes}\})::$\\$e.\mathsf{countries},e.\mathsf{valid\_until}\determ e.\mathsf{country\_codes}$
      \item $(e:\{\mathsf{Entity}\}:\{\mathsf{jurisdiction\_description},\mathsf{valid\_until},\mathsf{countries},\mathsf{sourceID},\mathsf{country\_codes}\})::$\\$ e.\mathsf{country\_codes},e.\mathsf{sourceID}\determ e.\mathsf{countries}$
      \item $(e:\{\mathsf{Entity}\}:\{\mathsf{jurisdiction\_description},\mathsf{valid\_until},\mathsf{countries},\mathsf{sourceID},\mathsf{country\_codes}\})::$\\$ e.\mathsf{country\_codes},e.\mathsf{valid\_until}\determ e.\mathsf{countries}$

    \end{itemize}
    \item \textbf{$S_\text{Off-3}$:}
    \begin{itemize}
      \item $(e:\{\mathsf{Entity}\}:\{\mathsf{countries}    ,\mathsf{service\_provider},\mathsf{country\_codes},\mathsf{jurisdiction\_description},\mathsf{sourceID},\mathsf{valid\_until}\})::$\\$e.\mathsf{countries} \determ e.\mathsf{country\_codes}$
      \item $(e:\{\mathsf{Entity}\}:\{\mathsf{countries}    ,\mathsf{service\_provider},\mathsf{country\_codes},\mathsf{jurisdiction\_description},\mathsf{sourceID},\mathsf{valid\_until}\})::$\\$e.\mathsf{country\_codes} \determ e.\mathsf{countries}$
      \item $(e:\{\mathsf{Entity}\}:\{\mathsf{countries}    ,\mathsf{service\_provider},\mathsf{country\_codes},\mathsf{jurisdiction\_description},\mathsf{sourceID},\mathsf{valid\_until}\})::$\\$e.\mathsf{service\_provider} \determ e.\mathsf{sourceID}$
      \item $(e:\{\mathsf{Entity}\}:\{\mathsf{countries}    ,\mathsf{service\_provider},\mathsf{country\_codes},\mathsf{jurisdiction\_description},\mathsf{sourceID},\mathsf{valid\_until}\})::$\\$e.\mathsf{sourceID} \determ e.\mathsf{valid\_until}$
      \item $(e:\{\mathsf{Entity}\}:\{\mathsf{countries}    ,\mathsf{service\_provider},\mathsf{country\_codes},\mathsf{jurisdiction\_description},\mathsf{sourceID},\mathsf{valid\_until}\})::$\\$e.\mathsf{valid\_until}\determ \mathsf{e.sourceID}$
    \end{itemize}
    \item \textbf{$S_\text{Ts-1}$:}
    \begin{itemize}    
      \item $()\xrightarrow{t:\{\mathsf{STOPS\_AT}\}:\{\mathsf{code}\}}(s:\{\mathsf{Station}\}:\{\mathsf{name}\})::s.\mathsf{name}\determ t.\mathsf{code}$
      \item $(ts:\{\mathsf{TrainService}\}:\{\mathsf{date},\mathsf{number},\mathsf{type}\})::ts.\mathsf{number},ts.\mathsf{date}\determ ts.\mathsf{type}$
      \item $(ts:\{\mathsf{TrainService}\}:\{\mathsf{date},\mathsf{number},\mathsf{operator}\})\xrightarrow{:\{\mathsf{STOPS\_AT}\}:\emptyset}():: ts.\mathsf{number},ts.\mathsf{date}\determ ts.\mathsf{operator}$
      \item $(ts:\{\mathsf{TrainService}\}:\{\mathsf{serviceid}\})\xrightarrow{:\{\mathsf{STOPS\_AT}\}:\emptyset}()::ts.\mathsf{serviceid} \determ ts$
      \item $()\xrightarrow{t:\{\mathsf{STOPS\_AT}\}:\{\mathsf{stopid}\}}()::t.\mathsf{stopid} \determ t$
      \item $()\xrightarrow{t:\{\mathsf{STOPS\_AT}\}:\{\mathsf{departure},\mathsf{stopid}\}}(s)::t.\mathsf{stopid}\determ s$
    \end{itemize}
    \item \textbf{$S_\text{Ts-2}$:}
    \begin{itemize}
      \item $()\xrightarrow{t:\{\mathsf{STOPS\_AT}\}:\{\mathsf{code}\}}(s:\{\mathsf{Station}\}:\{\mathsf{name}\})::s.\mathsf{name}\determ t.\mathsf{code}$
      \item $()\xrightarrow{t:\{\mathsf{STOPS\_AT}\}:\{\mathsf{code}\}}(s:\{\mathsf{Station}\}:\{\mathsf{name}\})::t.\mathsf{code}\determ s.\mathsf{name}$
      \item $(ts:\{\mathsf{TrainService}\}:\{\mathsf{date},\mathsf{number},\mathsf{type}\})::ts.\mathsf{number},ts.\mathsf{date}\determ ts.\mathsf{type}$
      \item $(ts:\{\mathsf{TrainService}\}:\{\mathsf{date},\mathsf{number},\mathsf{operator}\})\xrightarrow{:\{\mathsf{STOPS\_AT}\}:\emptyset}():: ts.\mathsf{number},ts.\mathsf{date}\determ ts.\mathsf{operator}$
      \item $(ts:\{\mathsf{TrainService}\}:\{\mathsf{serviceid}\})\xrightarrow{:\{\mathsf{STOPS\_AT}\}:\emptyset}()::ts.\mathsf{serviceid} \determ ts$
      \item $()\xrightarrow{t:\{\mathsf{STOPS\_AT}\}:\{\mathsf{stopid}\}}()::t.\mathsf{stopid} \determ t$
      \item $()\xrightarrow{t:\{\mathsf{STOPS\_AT}\}:\{\mathsf{departure},\mathsf{stopid}\}}(s)::t.\mathsf{stopid}\determ s$
    \end{itemize}
    \item \textbf{$S_\text{Uni-1}$:}
    \begin{itemize}
      \item $(c:\{\mathsf{Course}\}:\emptyset)\xleftarrow{t:\{\mathsf{TEACHES}\}:\{\mathsf{usingBook}\}}()::c=>t.\mathsf{usingBook}$
    \end{itemize}
    \item \textbf{$S_\text{Uni-2}$:}
    \begin{itemize}
      \item $(c:\{\mathsf{Course}\}:\{\mathsf{title}\})\xleftarrow{t:\{\mathsf{TEACHES}\}:\{\mathsf{usingBook}\}}()::c.\mathsf{title}\determ t.\mathsf{usingBook}$
    \end{itemize}
\end{itemize}

\subsection{Queries}

This sections contain the queries used in the experiment on query performance. For each query, we show the original query before normalization and the rewritten queries for the normalized graph.

\begin{itemize}
    \item \textbf{$S_\text{Lon}$}
    \begin{enumerate}
        \item \emph{Original query:} direct
        \begin{itemize}
            \item \lstinline|PROFILE MATCH ()-[c:CONNECTED_THROUGH]->() RETURN DISTINCT c.line, c.color|
        \end{itemize}
        \emph{After normalization:}
        \begin{itemize}
            \item Optimal rewrite:\\ \lstinline{PROFILE MATCH (c:Ccolorclinectype) RETURN DISTINCT c.Cline, c.Ccolor} 
            \item Naive rewrite:\\ \lstinline{PROFILE MATCH ()-[]-(d:CONNECTED_THROUGH)-[]-(), (d)-[]-(c:Ccolorclinectype) RETURN DISTINCT c.Cline, c.Ccolor}
        \end{itemize}
        
        \item \emph{Original query:} indirect
        \begin{itemize}
            \item \lstinline{PROFILE MATCH (s:Station)-[c:CONNECTED_THROUGH]-() WHERE s.name = "Piccadilly Circus" RETURN DISTINCT c.line}
        \end{itemize}
        \emph{After normalization:}
        \begin{itemize}
            \item \lstinline{PROFILE MATCH (s:Station)-[]-(:CONNECTED_THROUGH)-[:CCOLORCLINECTYPE]-(c:Ccolorclinectype) WHERE s.name = "Piccadilly Circus" RETURN DISTINCT c.Cline}
        \end{itemize}
        
        \item \emph{Original query:} direct
        \begin{itemize}
            \item \lstinline{PROFILE MATCH (s:Station) WHERE s.zoneOriginal IS NOT NULL RETURN DISTINCT s.zone, s.zoneOriginal}
        \end{itemize}
        \emph{After normalization:}
        \begin{itemize}
            \item Optimal rewrite:\\ \lstinline{PROFILE MATCH (s:SzoneszoneOriginal) WHERE s.SzoneOriginal IS NOT NULL RETURN DISTINCT s.Szone, s.SzoneOriginal}
            \item Naive rewrite:\\ \lstinline{PROFILE MATCH (s:Station)-[]-(s:SzoneszoneOriginal) WHERE s.SzoneOriginal IS NOT NULL RETURN DISTINCT s.Szone, s.SzoneOriginal}
        \end{itemize}
        
        \item \emph{Original query:} indirect
        \begin{itemize}
            \item \lstinline{PROFILE MATCH (s:Station)-[c:CONNECTED_THROUGH]-(:Station) WHERE s.zone = "6" RETURN DISTINCT c.line}
        \end{itemize}
        \pagebreak
        \emph{After normalization:}
        \begin{itemize}
            \item \lstinline{PROFILE MATCH (s:Station)-[]-(c:CONNECTED_THROUGH)-[]-(:Station), (s)-[]-(z:SzoneszoneOriginal), (c)-[]-(d:Ccolorclinectype) WHERE z.Szone = "6" RETURN DISTINCT d.Cline}
        \end{itemize}
        
        \item \emph{Original query:} indirect
        \begin{itemize}
            \item \lstinline{PROFILE MATCH (s:Station)-[c:CONNECTED_THROUGH]->(t:Station) WHERE s.name = "Piccadilly Circus" AND c.line = "Bakerloo" RETURN DISTINCT s.name, t.name}
        \end{itemize}
        \emph{After normalization:}
        \begin{itemize}
            \item \lstinline{PROFILE MATCH (s:Station)-[]->(c:CONNECTED_THROUGH)-[]->(t:Station), (c)-[]-(d:Ccolorclinectype) WHERE s.name = "Piccadilly Circus" and d.Cline ="Bakerloo" RETURN t.name}
        \end{itemize}
        
        \item \emph{Original query:} direct
        \begin{itemize}
            \item \lstinline{PROFILE MATCH ()-[c:CONNECTED_THROUGH]->() WHERE c.line = "Bakerloo" SET c.color = "Rainbow"}
        \end{itemize}
        \emph{After normalization:}
        \begin{itemize}
            \item Optimal rewrite:\\  \lstinline{PROFILE MATCH (c:Ccolorclinectype) WHERE c.Cline = "Bakerloo" SET c.Ccolor = "Rainbow"}
            \item Naive rewrite:\\  \lstinline{PROFILE MATCH (:CONNECTED_THROUGH)-->(c:Ccolorclinectype) WHERE c.Cline = "Bakerloo" SET c.Ccolor = "Rainbow"}
            \item Naive rewrite:\\ \lstinline{PROFILE MATCH (:CONNECTED_THROUGH)-->(c:Ccolorclinectype) WHERE c.Cline = "Bakerloo" WITH DISTINCT c SET c.Ccolor = "Rainbow"}
        \end{itemize}
    \end{enumerate}
    \item \textbf{$S_\text{Nw}$}
    \begin{enumerate}
        \item \emph{Original query:} indirect
        \begin{itemize}
            \item \lstinline{PROFILE MATCH (o:Order) RETURN DISTINCT o.orderID, o.customerID}
        \end{itemize}
        \emph{After normalization:}
        \begin{itemize}
            \item \lstinline{PROFILE MATCH (o:Order)-[]-(c:OcustomerIDoshipAddressoshipCityoshipCountryoshipPostalCodeoshipRegion) RETURN DISTINCT o.orderID, c.OcustomerID}
        \end{itemize}
        
        \item \emph{Original query:} direct
        \begin{itemize}
            \item \lstinline{PROFILE MATCH (o:Order) WHERE o.customerID IS NOT NULL AND o.shipCity IS NOT NULL AND o.shipPostalCode IS NOT NULL AND o.shipCountry IS NOT NULL AND o.shipAddress IS NOT NULL AND o.shipRegion IS NOT NULL RETURN DISTINCT o.customerID}
        \end{itemize}
        \emph{After normalization:}
        \begin{itemize}
            \item Optimal rewrite:\\
            \lstinline{PROFILE MATCH (o:OcustomerIDoshipAddressoshipCityoshipCountryoshipPostalCodeoshipRegion) WHERE o.OcustomerID IS NOT NULL AND o.OshipCity IS NOT NULL AND o.OshipPostalCode IS NOT NULL AND o.OshipCountry IS NOT NULL AND o.OshipAddress IS NOT NULL AND o.OshipRegion IS NOT NULL RETURN DISTINCT o.OcustomerID}
            \item Naive rewrite:\\ \lstinline{PROFILE MATCH (:Order)-->(o:OcustomerIDoshipAddressoshipCityoshipCountryoshipPostalCodeoshipRegion) WHERE o.OcustomerID IS NOT NULL AND o.OshipCity IS NOT NULL AND o.OshipPostalCode IS NOT NULL AND o.OshipCountry IS NOT NULL AND o.OshipAddress IS NOT NULL AND o.OshipRegion IS NOT NULL RETURN DISTINCT o.OcustomerID}
        \end{itemize}
        
        \item \emph{Original query:} direct
        \begin{itemize}
            \item \lstinline{PROFILE MATCH (o:Order) WHERE o.customerID = "FOLKO" SET o.shipCountry = "Rainbow"}
        \end{itemize}
        \emph{After normalization:}
        \begin{itemize}
            \item Naive rewrite:\\ \lstinline{PROFILE MATCH (o:Order)-[]-(c:OcustomerIDoshipAddressoshipCityoshipCountryoshipPostalCodeoshipRegion) WHERE c.OcustomerID = "FOLKO" SET c.OshipCountry = "Rainbow"}
            \item Naive rewrite:\\ \lstinline{PROFILE MATCH (o:Order)-[]-(c:OcustomerIDoshipAddressoshipCityoshipCountryoshipPostalCodeoshipRegion) WHERE c.OcustomerID = "FOLKO" WITH DISTINCT c SET c.OshipCountry = "Rainbow"}
            \item Optimal rewrite:\\ \lstinline{PROFILE MATCH (c:OcustomerIDoshipAddressoshipCityoshipCountryoshipPostalCodeoshipRegion) WHERE c.OcustomerID = "FOLKO" SET c.OshipCountry = "Rainbow"}
        \end{itemize}
    \end{enumerate}

    \pagebreak
    \item \textbf{$S_\text{Off-1}$}
    \begin{enumerate}
        \item \emph{Original query:} direct
        \begin{itemize}
            \item \lstinline{PROFILE MATCH (e:Entity) WHERE e.countries IS NOT NULL AND e.jurisdiction_description IS NOT NULL AND e.country_codes IS NOT NULL RETURN DISTINCT e.countries, e.country_codes}
        \end{itemize}
        \emph{After normalization:}
        \begin{itemize}
            \item Optimal rewrite:\\ \lstinline{PROFILE MATCH (c:EcountriesecountryCodes) WHERE c.Ecountries IS NOT NULL AND c.EcountryCodes IS NOT NULL RETURN DISTINCT c.Ecountries, c.EcountryCodes}
            \item Naive rewrite:\\ \lstinline{PROFILE MATCH (:Entity)-[]->(c:EcountriesecountryCodes) WHERE c.Ecountries IS NOT NULL AND c.EcountryCodes IS NOT NULL RETURN DISTINCT c.Ecountries, c.EcountryCodes}
        \end{itemize}
        
        \item \emph{Original query:} indirect
        \begin{itemize}
            \item \lstinline{PROFILE MATCH (e:Entity) WHERE e.countries IS NOT NULL AND e.service_provider IS NOT NULL AND e.country_codes IS NOT NULL RETURN DISTINCT e.countries,e.service_provider, e.country_codes}
        \end{itemize}
        \emph{After normalization:}
        \begin{itemize}
            \item \lstinline{PROFILE MATCH (e:Entity)-[]-(c:EcountriesecountryCodes) WHERE c.Ecountries IS NOT NULL AND c.EcountryCodes IS NOT NULL RETURN DISTINCT c.Ecountries, c.EcountryCodes, e.service_provider}
        \end{itemize}
        
        \item \emph{Original query:} direct
        \begin{itemize}
            \item \lstinline{PROFILE MATCH (e:Entity) WHERE e.countries IS NOT NULL AND e.jurisdiction_description IS NOT NULL AND e.country_codes IS NOT NULL AND e.countries = "Singapore" SET e.country_codes = "Sp"}
        \end{itemize}
        \emph{After normalization:}
        \begin{itemize}
            \item Optimal rewrite:\\ \lstinline{PROFILE MATCH (c:EcountriesecountryCodes) WHERE c.Ecountries IS NOT NULL AND c.EcountryCodes IS NOT NULL AND c.Ecountries = "Singapore" SET c.EcountryCodes = "Sp"}
            \item Naive rewrite:\\ \lstinline{PROFILE MATCH (:Entity)-[]->(c:EcountriesecountryCodes) WHERE c.Ecountries IS NOT NULL AND c.EcountryCodes IS NOT NULL AND c.Ecountries = "Singapore" SET c.EcountryCodes = "Sp"}
            \item Naive rewrite:\\ \lstinline{PROFILE MATCH (:Entity)-[]->(c:EcountriesecountryCodes) WHERE c.Ecountries IS NOT NULL AND c.EcountryCodes IS NOT NULL AND c.Ecountries = "Singapore" WITH DISTINCT c SET c.EcountryCodes = "Sp"}
        \end{itemize}
    \end{enumerate}

    \item \textbf{$S_\text{Ts-2}$}
    \begin{enumerate}
        \item \emph{Original query:} direct
        \begin{itemize}
            \item \lstinline{PROFILE MATCH ()-[t:STOPS_AT]->(s:Station) WHERE s.name IS NOT NULL AND t.code IS NOT NULL RETURN DISTINCT t.code, s.name}
        \end{itemize}
        \emph{After normalization:}
        \begin{itemize}
            \item Optimal rewrite:\\ \lstinline{PROFILE MATCH (n:Snametcode) RETURN DISTINCT n.Sname, n.Tcode}
            \item Naive rewrite:\\ \lstinline{PROFILE MATCH (s:Station)-->(n:Snametcode), (t:STOPS_AT)-->(n), (t)-->(s) RETURN DISTINCT n.Sname, n.Tcode}
        \end{itemize}
        
        \item \emph{Original query:} indirect
        \begin{itemize}
            \item \lstinline{PROFILE MATCH (ts:TrainService)-[t:STOPS_AT]->(s:Station) WHERE ts.type CONTAINS "Nightjet" RETURN DISTINCT ts.number, ts.type, t.code}
        \end{itemize}
        \emph{After normalization:}
        \begin{itemize}
            \item \lstinline{PROFILE MATCH (ts:TrainService)-->(t:STOPS_AT)-->(s:Station), (ts)-->(d:Tsdatetsnumbertstype), (s)-->(n:Snametcode), (t)-->(n) WHERE d.Tstype CONTAINS "Nightjet" RETURN DISTINCT d.Tsnumber, d.Tstype, n.Tcode}
        \end{itemize}
        
        \item \emph{Original query:} indirect
        \begin{itemize}
            \item \lstinline{PROFILE MATCH (ts:TrainService) WHERE ts.type CONTAINS "Nightjet" RETURN DISTINCT ts.number, ts.completely_cancelled}
        \end{itemize}
        \emph{After normalization:}
        \begin{itemize}
            \item \lstinline{PROFILE MATCH (ts:TrainService)-->(d:Tsdatetsnumbertstype) WHERE d.Tstype CONTAINS "Nightjet" RETURN DISTINCT d.Tsnumber, ts.completely_cancelled}
        \end{itemize}
        \pagebreak
        \item \emph{Original query:} indirect
        \begin{itemize}
            \item \lstinline{PROFILE MATCH (ts:TrainService) WHERE ts.type CONTAINS "Sprinter" RETURN DISTINCT ts.number, ts.operator}
        \end{itemize}
        \emph{After normalization:}
        \begin{itemize}
            \item \lstinline{PROFILE MATCH (ts:TrainService)-->(d:Tsdatetsnumbertstype), (ts)-->(e:Tsdatetsnumbertsoperator) WHERE d.Tstype CONTAINS "Sprinter" RETURN DISTINCT d.Tsnumber, e.Tsoperator}
        \end{itemize}
        
        \item \emph{Original query:} direct
        \begin{itemize}
            \item \lstinline{PROFILE MATCH ()-[t:STOPS_AT]->(s:Station) WHERE s.name IS NOT NULL AND t.code IS NOT NULL AND s.name = "Innsbruck Hbf" SET t.code = "Rainbow"}
        \end{itemize}
        \emph{After normalization:}
        \begin{itemize}
            \item Optimal rewrite:\\\lstinline{PROFILE MATCH (n:Snametcode) WHERE n.Sname = "Innsbruck Hbf" SET n.Tcode = "Rainbow"}
            \item Naive rewrite:\\ \lstinline{PROFILE MATCH (s:Station)-->(n:Snametcode), (t:STOPS_AT)-->(n), (t)-->(s) WHERE n.Sname = "Innsbruck Hbf" SET n.Tcode = "Rainbow"}
            \item Naive rewrite:\\\lstinline{PROFILE MATCH (s:Station)-->(n:Snametcode), (t:STOPS_AT)-->(n), (t)-->(s) WHERE n.Sname = "Innsbruck Hbf" WITH DISTINCT n  SET n.Tcode = "Rainbow"}
        \end{itemize}
    \end{enumerate}
\end{itemize}
}
\end{document}